\title{Range Counting Oracles for Geometric Problems}
\author{Anne Driemel}{University of Bonn, Germany}{driemel@cs.uni-bonn.de}{}{}
\author{Morteza Monemizadeh}{Department of Mathematics and Computer Science, TU Eindhoven, the Netherlands}{m.monemizadeh@gmail.com}{}{}
\author{Eunjin Oh}{Department of Computer Science and Engineering, POSTECH, Korea}{eunjin.oh@postech.ac.kr}{}{}
\author{Frank Staals}{Department of Information and Computing Sciences, Utrecht University, The Netherlands}{f.staals@uu.nl}{}{}
\author{David P. Woodruff}{Carnegie Mellon University, Pittsburgh, PA, USA}{dwoodruf@andrew.cmu.edu}{}{}
\newcommand{\eps}{\ensuremath{\varepsilon}\xspace}
\newcommand{\R}{\ensuremath{\mathbb{R}}\xspace}
\DeclareMathOperator*{\polylog}{polylog}
\renewcommand{\Pr}[1]{\ensuremath{\mathbb{P}\left[#1\right]}}
\newcommand{\Ex}[1]{\ensuremath{\mathbb{E}[#1]}}
\newcommand{\Var}{\ensuremath{\mathsf{Var}}}
\newcommand{\opt}{\ensuremath{\mathsf{OPT}}}
\newcommand{\emd}{\ensuremath{\mathsf{EMD}}}
\newcommand{\mst}{\ensuremath{\mathsf{MST}}}
\newcommand{\sol}{\ensuremath{M}}
\newcommand{\llong}{\ensuremath{\text{long}}}
\newcommand{\parent}{\ensuremath{\mathsf{p}}}
\newcommand{\cell}{\ensuremath{c}}
\newcommand{\qt}{\ensuremath{\mathcal Q}}
\authorrunning{A. Driemel, M. Monemizadeh, E. Oh, F. Staals, D. Woodruff} 
\keywords{Range counting oracles, minimum spanning trees, Earth Mover's Distance}  
\begin{document}

\maketitle
\begin{abstract}
In this paper, we study estimators for geometric optimization problems in the sublinear geometric model. In this model, we have oracle access to a point set with size $n$ in a discrete space $[\Delta]^d$, where queries can be made to an oracle that responds to orthogonal range counting requests. The query complexity of an optimization problem is measured by the number of oracle queries required to compute an estimator for the problem.
We investigate two problems in this framework, the Euclidean Minimum Spanning Tree (MST) and Earth Mover Distance (EMD). For EMD, we show the existence of an estimator that approximates the cost of EMD with $O(\log \Delta)$-relative error and $O(\frac{n\Delta}{s^{1+1/d}})$-additive error using $O(s\polylog \Delta)$ range counting queries for any parameter $s$ with $1\leq s \leq n$. Moreover, we prove that this bound is tight.
For MST, we demonstrate that the weight of MST can be estimated within a factor of $(1 \pm \eps)$ using $\tilde{O}(\sqrt{n})$ range counting queries.
\end{abstract}

\section{Introduction}
\label{sec:Introduction}
In recent years, the size of data encountered in various applications has grown exponentially. While classical algorithms are designed to process the entire input to get exact or approximate solutions, the increasing demand for \emph{efficiency} in massive datasets necessitates approaches that can provide useful results without examining all of the input data. 
Motivated by this, 
a wide range of sublinear-time algorithms has been extensively studied over the past few decades from various subfields in theoretical computer science. 
For instance, there are sublinear-time algorithms for the longest increasing subsequence and edit distance problems on strings~\cite{kociumaka2020sublinear,mitzenmacher2021improved,goldenberg2019sublinear}, 
the triangle counting and vertex cover problems on graphs~\cite{behnezhad2023sublinear,eden2018approximating,onak2012near}, 
the Earth Mover's Distance problem on probability distributions~\cite{ba2011sublinear}, and the minimum spanning tree problem on general metric spaces~\cite{czumaj2004estimating}. Moreover, there are sublinear-time algorithms for 
several fundamental geometric problems~\cite{chazelle2003sublinear,czumaj05approx_weigh_euclid_minim_spann,czumaj2001property,har2021active,DBLP:conf/approx/Monemizadeh23}. 
For further information, refer to~\cite{czumaj2010sublinear,rubinfeld2011sublinear}. 

To obtain sublinear-time algorithms, we need an \emph{oracle}
to access the input data. 
There are various sublinear-time oracle models in the geometric setting although there are common oracles for strings, graphs and metric spaces.
Chazelle et al.~\cite{chazelle2003sublinear} presented several geometric algorithms assuming that the input is given without any preprocessing. In the case of point inputs, the only thing one can do is uniform sampling. In the case of polygons, one can also check if two edges are adjacent. Although they showed that several problems admit sublinear-time algorithms in this model, it seems not strong enough to solve a wider range of fundamental geometric problems such as the clustering problems, the Earth Mover's Distance problem, and the minimum spanning problem. 
Subsequently, many researchers developed sublinear-time algorithms for several different models. Examples are models in which the oracle can answer orthogonal range emptiness queries and cone nearest point queries~\cite{czumaj05approx_weigh_euclid_minim_spann}, orthogonal range counting queries~\cite{czumaj2001property,DBLP:conf/approx/Monemizadeh23}, or separation queries~\cite{har2021active}.

\medskip
In this paper, we study the \emph{range counting oracle model} for geometric optimization problems.
Here, we do not have direct access to the input point set, but instead we use an orthogonal range counting data structure for access. More specifically, given an orthogonal range as a query, it returns the number of input points contained in the query range. 
In fact, many database software systems, such as Oracle\footnote{\url{https://docs.oracle.com/en/database/other-databases/nosql-database/24.3/sqlreferencefornosql/operator1.html}} and Amazon SimpleDB\footnote{\url{https://docs.aws.amazon.com/AmazonSimpleDB/latest/DeveloperGuide/RangeQueriesSelect.html}}, provide built-in support for such range queries. These queries allow users to compute aggregates, such as the count, sum, or average of records that fall within a specified range of values for a set of attributes, which is crucial for a wide array of applications. For example, in data analytics, range queries are used to calculate the number of sales within a specific date range or determine the total revenue from products within a given price interval. In geographic information systems (GIS), range queries help aggregate spatial data points within certain coordinate bounds, such as counting the number of locations within a specific radius of a given point. Furthermore, in machine learning, range queries are employed during data preprocessing to summarize statistics over selected subsets, which supports tasks such as data filtering and dimensionality reduction.

The following are desirable properties for a reasonable oracle model: the queries to the oracle should be efficient to implement, it is preferable for it to be supported by lots of well-known databases, and it should allow sublinear-time algorithms for fundamental geometric problems.
First, this model clearly satisfies the first property; there are numerous works on data structures for range counting queries and their variants~\cite{afshani2009orthogonal,arya2000approximate,chan2011orthogonal,chan2018dynamic,chan2019orthogonal,de2008computational,sheng2011new}. In particular, one can preprocess a set of $n$ points in $\R^d$ in $O(n\log^{d-1}n)$ time to construct a data structure of size $O(n\log^{d-1}n)$ so that  the number of points inside a query range can be found in $O(\log^{d-1}n)$ time. 
Second, range counting queries are already supported by well-known public databases as we mentioned earlier. 
Therefore, without any special preprocessing, we can apply sublinear-time algorithms designed on the range counting oracle for such databases in sublinear time.
Third, we show that several fundamental problems can be solved in sublinear time on the range counting oracle model  in this paper.

\medskip 
We focus on the following three fundamental geometric problems: the Earth Mover's Distance problem, the minimum spanning tree problem, and the cell sampling problem.
The Earth Mover's Distance $\emd(R,B)$ between two point sets $R$ and $B$ in $\mathbb{R}^d$ and the cost $\mst(P)$ of a minimum spanning tree of a point set $P\in\mathbb{R}^d$ are defined as follows.
\[
\emd(R,B) = \min_{\pi: R\rightarrow B} \sum_{r\in R} \|r-\pi(r)\|\ \text{ and } \mst(P) = \min_{T} \sum_{uv\in E(T)} \|u-v\|,
\] 
where $\pi$ ranges over all one-to-one matchings between $R$ and $B$, and $T$ ranges over all spanning trees of $P$. 
If we have direct access to the input point set(s), then 
$\emd(R,B)$ can be computed exactly in near quadratic time in $\mathbb{R}^2$~\cite{agarwal2019efficient},
and approximately within $(1+\eps)$-relative error in near linear time~\cite{agarwal2022deterministic}. 
Similarly, $\mst(P)$ can be computed exactly in near quadratic time by computing the complete graph on $P$ and applying Kruskal's algorithm, and computed approximately in near linear time~\cite{har2011geometric}. 
Note that $\emd$ can be used for measuring the similarity between two point sets, and $\mst$ can be used for summarizing the distribution of a point set. 
Thus these problems have various applications in multiple areas of computer science including 
computer vision~\cite{bonneel2011displacement}, machine learning~\cite{flamary2018wasserstein} and document similarity~\cite{kusner2015word}. 

The cell sampling problem is defined as follows.
Let $P$ be a set of points in a discrete space $[\Delta]^d$.
Given a value $r$ with $1\leq r\leq \Delta$, let $\mathcal G$
be the grid imposed on $[\Delta]^d$ whose cells have side length $r$. 
We say a grid cell is \emph{non-empty} if it contains a  point of $P$. 
Our goal is to sample one non-empty cell almost uniformly at random from the set of all non-empty cells of $\mathcal G$.
If we have direct access to $P$, this problem is trivial as we can compute all non-empty cells explicitly. 
However, in our sublinear model, we cannot compute all non-empty cells if the number of non-empty cells exceeds our desired query complexity.  
We believe that cell sampling can be considered as a fundamental and basic primitive in sublinear models
as sublinear algorithms rely on efficient sampling to extract meaningful information from large data sets. 
In fact, this problem has been studied in 
the dynamic streaming model, and has been used as a primitive for several geometric problems~\cite{frahling2005sampling,DBLP:conf/soda/MonemizadehW10} 
and graph problems~\cite{DBLP:conf/esa/CormodeJMM17}. 

\begin{table}[]
    \centering
    \begin{tabular}{c|c|c|c|c}
    \hline
           &   & Additive Error  &  Multiplicative Error & Query Complexity    \\
     \hline\hline
  \multirow{2}{3em}{\emd} & UB &     $O(n\Delta/s^{1+1/d})$          & $O(\log\Delta)$  &   $\tilde O(s)$  \\
  & LB &           $\Omega(n\Delta/s^{1+1/d})$ & - & $\tilde O(s)$        \\
       \hline\hline
       \multirow{2}{6em}{\textsf{Cell Sampling}}& UB & - & $1\pm\eps$ & $\tilde O(\sqrt n)$      \\
       & LB & - & $O(1)$ & $\Omega(\sqrt n)$      \\
       \hline\hline 
       \multirow{2}{3em}{\mst} & UB  & - & $1\pm\eps$ & $\tilde O(\sqrt n)$\\
       & LB & - & $O(1)$ & $\Omega(n^{1/3})$ \\
       \hline            
    \end{tabular}                 
     \vspace{1em}
        \caption{Summary of our results. Here, $n$ denotes the number of points, and $\Delta$ denotes the size of the domain. For $\emd$, a parameter $s$ determines a trade-off between the additive error and the query complexity. 
}\label{tab:our-results}
\vspace{-1em}
\end{table}

\subparagraph{Our results.}
We present sublinear-time algorithms for \emd, \textsf{Cell Sampling}, and $\textsf{MST}$ in the orthogonal range counting oracle model.
Our results are summarized in Table~\ref{tab:our-results}. 

\begin{itemize}
    \item For \emd, we present an algorithm for estimating the Earth Mover's Distance between two point sets of size $n$ in $[\Delta]^d$ within a multiplicative error of $O(\log\Delta)$ in addition to an additive error of $O(n\Delta/s^{1+1/d})$. Our algorithm succeeds with a constant probability, and uses $\tilde O(s)$ range counting queries for any parameter $s$.
Notice that the optimal solution can be as large as $\Theta(n\Delta)$, and in this case, the additive error is relatively small. 
Furthermore, we show that this trade-off is tight for any parameter $s$.
    \item For \textsf{Cell Sampling}, we present an algorithm that samples a non-empty cell in a grid so that the probability that a fixed non-empty cell is selected is in $[(1-\eps)/m, (1+\eps)/m]$ for any constant $0<\eps<1$. Here $m$ denotes the number of non-empty cells. We also show that the query complexity here is tight. Moreover, if we sample $\tilde O(\sqrt{n}\log(n))$ non-empty cells, 
        we can estimate the number of non-empty cells within $(1\pm \eps)$-factor.
    \item For $\mst$, we present an algorithm for estimating the cost of a minimum spanning tree of a point set of size $n$ in $[\Delta]^2$ 
    within a multiplicative error of $1\pm\eps$. Our algorithm again succeeds  
    with constant probability, and  
uses $\tilde O(\sqrt n)$ range counting queries.
Also, we show that any randomized algorithm using $o(n^{1/3})$ queries has at least a constant multiplicative error.
\end{itemize} 

\subparagraph{Related work.}
Both $\textsf{EMD}$ and $\textsf{MST}$ have been extensively studied across various sublinear models over the past few decades. 
The work most closely related to ours is the one by Czumaj et al.~\cite{czumaj05approx_weigh_euclid_minim_spann}. 
They studied the Euclidean minimum spanning tree problem in a different model. In particular, their oracle supports \emph{cone nearest} queries, \emph{orthogonal range emptiness} queries, and \emph{point sampling} queries.\footnote{More precisely, the oracle supports cone nearest queries and orthogonal range queries only, but they are also given the access to the input point set, which allow them to sample a vertex uniformly at random.} 
Here, given a cone nearest query consisting of a cone $C$ with apex $p$, the oracle returns $(1+\delta)$-approximate nearest neighbor of $p$ in $C\cap P$, where $P$ is the set of input points. 
Using $\tilde O(\sqrt n)$ queries, their algorithm estimates the cost of the $\textsf{MST}$ of $n$ points within a relative error of $(1\pm \eps)$.
Moreover, they showed that if the oracle supports orthogonal emptiness queries only, then any deterministic algorithm for estimating the cost of $\textsf{MST}$
within $O(n^{1/4})$-relative error uses $\Omega(\sqrt n)$ orthogonal range emptiness queries. 
Sublinear time algorithms for computing a metric minimum spanning tree problem have also been studied~\cite{czumaj09estim_weigh_metric_minim_spann}. Here,
the oracle supports \emph{distance} queries: Given two points, it returns their distance in the underlying space.
The most popular sublinear-\emph{space} model is the streaming  model. In this model, the input arrives as data stream, and an algorithm is restricted to using a sublinear amount of memory relative to the input size. 
Both $\textsf{EMD}$ and $\textsf{MST}$ are studied extensively in this model~\cite{andoni2009efficient,chen2023streaming,frahling2005sampling}, showcasing the challenges and techniques for processing large datasets in the steaming model.

Although no sublinear algorithms for $\textsf{EMD}$ and $\textsf{MST}$ were known in the range counting oracle model prior to our work, other geometric optimization problems have been studied in this model.
Monemizadeh~\cite{DBLP:conf/approx/Monemizadeh23}
presented an algorithm for the facility location problem using $\tilde{O}(\sqrt n)$ queries, and Czumaj and Sohler~\cite{czumaj2001property} studied clustering problems and map labeling problems.
To the best of our knowledge, these are the only work done in the range counting oracle model. However, there are several closely related models such as the range-optimization model.
For instance, Arya et al.~\cite{arya2015approximate} studied the range MST problem, where the goal is to construct a data structure for computing the cost of MST within a query orthogonal range. In this setting, clustering problems also have been considered~\cite{abrahamsen2017range,oh2018approximate}.

\subparagraph{Preliminaries.}
We make use of quadtrees for both EMD and MST. 
Consider a discrete space $[\Delta]^2$.
For $i \in \{0, 1, \dots, \log \Delta\}$, we use $\mathcal{G}_i$
to denote a grid over the discrete space $[\Delta]^2$ consisting of cells with side lengths of $2^i$. A quadtree $\qt$ on $[\Delta]^2$ is then constructed as follows. The root of the quadtree corresponds to the unique cell
of $\mathcal G_{\log\Delta}$. For each node $v$ of the quadtree of depth $i$ has four equal sized children, each corresponding to a cell in $\mathcal G_{i-1}$ contained in the cell corresponding to $v$. Note that the depth of $\qt$ is $O(\log\Delta)$.
If it is clear from the context, we use a node of the quadtree and its corresponding cell interchangeably. 
For MST, we use the quadtree as it is, but for EMD, we need a randomly shifted quadtree. Here, we select a vector from $[\Delta]^2$ uniformly at random, and we shift all the grids $\mathcal G_i$'s by the vector. 
Then any rectangle shifted by the vector can be represented as the union of four orthogonal rectangles in $[\Delta]^2$. Therefore, without loss of generality, we assume that the chosen vector is the origin. 

Now we introduce two basic operations we use for both EMD and MST. 
Let $P$ be a point set in $[\Delta]^2$. 
We can find the lexicographically $k$-th smallest point in a given query range using $O(\log\Delta)$ range counting queries: apply binary search for the $x$-coordinates and then apply binary search for the $y$-coordinates. 
Here, for any two points $p$ and $q$, we say $p$ is lexicographically smaller than $q$ if $p_x < q_x$, or $p_x=q_x$ and $p_y < q_y$, where $p_x$ and $p_y$ denote the $x$-coordinate and the $y$-coordinate of $p$, respectively.
Similarly, we can find the $k$-th smallest point according to the \emph{$z$-order}  in a given query range using $O(\log\Delta)$ range counting queries.
Here the $z$-order of a point set $P$ is defined recursively on the quadtree $T$ as follows. 
For each cell $c$, we do the following. Let $c_1,c_2,c_3$ and $c_4$ be the children of $c$ sorted in the lexicographical order of their centers. 
Then the $z$-ordering of $P$ is the concatenation of the $z$-ordering of $P\cap c_i$ for $i=1,2,3,4$.


Let $P$ be a point set in $[\Delta]^d$. 
%
%
Furthermore, we will often want to sample a point from $P$ uniformly at random. We can do so using $O(\log\Delta)$ range counting queries in a procedure called \emph{telescoping sampling}~\cite{DBLP:conf/approx/Monemizadeh23}.

\begin{lemma}[Telescoping Sampling~\cite{DBLP:conf/approx/Monemizadeh23}]\label{lem:telescoping}
    We can select a point of $P$ uniformly at random
    using $O(\log\Delta)$ range counting queries.
\end{lemma}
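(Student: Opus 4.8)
The plan is to run a top-down random walk on a hierarchical grid decomposition of $[\Delta]^d$, descending from the root down to a single grid location while, at each step, branching into a child cell with probability proportional to the number of points of $P$ inside it; the ``telescoping'' of these conditional probabilities will make the final output uniform. I assume $P\neq\emptyset$ and, without loss of generality, that $\Delta$ is a power of two (otherwise embed $[\Delta]^d$ in $[\Delta']^d$ with $\Delta'=2^{\lceil\log_2\Delta\rceil}<2\Delta$, which affects nothing asymptotically). For $i\in\{0,\dots,\log\Delta\}$ let $\mathcal G_i$ be the grid on $[\Delta]^d$ with cells of side length $2^i$. The generalized quadtree has the single cell of $\mathcal G_{\log\Delta}$ as its root, and each cell of $\mathcal G_i$ has as children the $2^d$ cells of $\mathcal G_{i-1}$ contained in it; its depth is $\log\Delta$, and for constant $d$ each node has $O(1)$ children.

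First I would set the current cell $c$ to the root and repeat: if $c$ is a single grid location, stop; otherwise issue a range counting query for each of the $2^d$ children $c_1,\dots,c_{2^d}$ of $c$ (each is an orthogonal box, hence a legal query), obtaining counts $n_j=|P\cap c_j|$, then pick an index $j$ with probability $n_j/|P\cap c|$ and set $c\leftarrow c_j$. Note that $|P\cap c|$ is always available: at the root it is $\sum_j n_j$, and at any deeper cell it was one of the counts returned when its parent's children were queried. When the loop stops, $c$ is a single grid location; since the points of $P$ are distinct, $|P\cap c|=1$, so I output that location as the sampled point, which needs no further queries.

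For correctness, fix $p\in P$ and let $C_0\supseteq C_1\supseteq\cdots\supseteq C_L$ be the cells on the root-to-leaf path that ends at the grid location of $p$, so $C_0$ is the root with $|P\cap C_0|=n$, and $C_L=\{p\}$ with $|P\cap C_L|=1$. The walk outputs $p$ exactly when it selects $C_i$ as the child of $C_{i-1}$ at every step $i=1,\dots,L$, which occurs with probability
\[
\prod_{i=1}^{L}\frac{|P\cap C_i|}{|P\cap C_{i-1}|}=\frac{|P\cap C_L|}{|P\cap C_0|}=\frac1n,
\]
the interior factors cancelling telescopically. Hence each point of $P$ is returned with probability exactly $1/n$. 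The cost is $2^d=O(1)$ queries at each of the $\log\Delta$ levels of the descent, i.e.\ $O(\log\Delta)$ range counting queries in total.

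There is no serious obstacle here; the only things to be careful about are excluding the empty input, reducing to a power-of-two side length, and --- the crux of the name ``telescoping'' --- observing that one never needs to materialize the distribution of $P$ over all non-empty cells, but only the split of the current cell among its $O(1)$ children, so the per-level cost is $O(1)$ rather than growing with $n$. (An even shorter route: query $n=|P|$, draw $k$ uniformly from $\{1,\dots,n\}$, and return the lexicographically $k$-th smallest point of $P$ using the primitive stated earlier in $O(\log\Delta)$ queries; I describe the hierarchical walk since that is the version referred to as ``telescoping sampling''.)
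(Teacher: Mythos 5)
The paper does not prove this lemma in-text; it states it as a cited primitive from Monemizadeh's prior work. Your hierarchical top-down random walk is a correct and faithful reconstruction of the standard telescoping-sampling procedure that the citation refers to: the per-level cost of $2^d=O(1)$ range queries over $\log\Delta$ levels gives $O(\log\Delta)$ queries total, and the telescoping cancellation yields probability exactly $1/n$ per point (and, as you implicitly note, if points of $P$ coincide, the walk reaches a grid location with probability proportional to its multiplicity, which is still uniform over the multiset, so the distinctness assumption is harmless but unnecessary). Your parenthetical alternative --- draw $k\in\{1,\dots,n\}$ uniformly and return the lexicographically $k$-th point via binary search --- is also valid and is essentially the other primitive the paper sets up in its preliminaries; both are $O(\log\Delta)$, so either route closes the lemma.
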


Throughout this paper, we use $\tilde O(f(s))$ to denote $O(f(s)\polylog \Delta)$. 







\section{Sublinear algorithms for the Earth Mover's Distance problem}

In this section, we present several sublinear-time algorithms  that approximate the cost of the Earth Mover's Distance between two point sets in $[\Delta]^d$ for an integer $d\geq 1$.
The approximation bounds we obtain are tight up to a factor of $O(\log\Delta)$ for any integer $d\geq 1$.

In a one-dimensional space, we know the configuration of an optimal matching: the $k$th leftmost red point is matched with the $k$th leftmost blue point for all $1\leq k\leq n$. 
Thus it suffices to estimate the cost of this matching.
On the other hand, we do not know the exact configuration of an optimal matching in a higher dimensional space.
Thus instead of considering an optimal matching, we consider a matching of cost  $O(\opt\log\Delta)$, 
where $\opt$ is the Earth Mover's Distance between two point sets. 
This is where the multiplicative error of $O(\log\Delta)$ comes from. 
Apart from this,
we use the same approach for all dimensions $d\geq 1$.

\subsection{A sublinear algorithm for points in 1D}
In this section, we present a sublinear algorithm using $\tilde O(s)$ range counting queries in a one-dimensional space. 
More precisely, we prove the following theorem. 
The optimal solution can be as large as $\Theta(n\Delta)$, and in this case, the additive error is relatively small. 

\begin{theorem}
    \label{thm:emd:additive}
    Given two sets $R$ and $B$ of size $n$ in a discrete space $[\Delta]$ and a parameter $s>1$, 
    we can estimate the cost of the Earth Mover's Distance between $R$ and $B$ within $O(1)$ multiplicative error in addition to  
    $O(\frac{\Delta n}{s^2})$ additive error with probability at least $2/3$ using $\tilde O(s)$ range counting queries.
\end{theorem}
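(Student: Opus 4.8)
The plan is to reduce to a one-dimensional sampling problem and solve it with a stratified, importance-weighted estimator. By the optimality of the sorted matching noted above, $\emd(R,B)=S:=\sum_{k=1}^{n}\lvert r_k-b_k\rvert$, where $r_1\le\cdots\le r_n$ and $b_1\le\cdots\le b_n$ are the sorted elements of $R$ and $B$. Using the primitive for finding the $k$-th smallest point (described in the preliminaries), for any $k$ we can compute $r_k$, $b_k$, and hence the $k$-th term $\lvert r_k-b_k\rvert$, with $O(\log\Delta)$ range counting queries; we can also learn $n$ with one query. So it suffices to estimate a sum of $n$ numbers in $[0,\Delta]$ from $\tilde O(s)$ of its terms, within $O(1)$ multiplicative and $O(n\Delta/s^2)$ additive error. (We may assume $1<s\le n$.) Plain uniform sampling of the terms already gives the right bound when $S$ is large but is too weak when $S$ is small, so the terms need to be stratified.

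First I would partition $\{1,\dots,n\}$ into $s$ consecutive blocks $B_1,\dots,B_s$ of $n/s$ indices each, and for each block $B_m$ query the four boundary points to obtain $D_m:=\max(r_{mn/s},b_{mn/s})-\min(r_{(m-1)n/s+1},b_{(m-1)n/s+1})$, the length of the smallest interval containing all red and blue points with index in $B_m$; every term $\lvert r_k-b_k\rvert$ with $k\in B_m$ is at most $D_m$, and all the $D_m$ cost $O(s\log\Delta)$ queries in total. With $D:=\sum_m D_m$, I would allocate $t_m:=\max(1,\lceil sD_m/D\rceil)$ samples to block $B_m$ (so $\sum_m t_m\le 2s$), draw $t_m$ independent uniform indices from $B_m$, and output the unbiased estimator $\hat S:=\sum_m \hat S_m$ with $\hat S_m:=\frac{n}{st_m}\sum_{k}\lvert r_k-b_k\rvert$ over the sampled indices of $B_m$. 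The whole procedure makes $\tilde O(s)$ queries. Since $\Ex{\hat S_m}=S_m:=\sum_{k\in B_m}\lvert r_k-b_k\rvert$ and $\sum_{k\in B_m}\lvert r_k-b_k\rvert^2\le D_mS_m$, we get $\Var(\hat S_m)\le\frac{nD_mS_m}{st_m}\le\frac{nD\,S_m}{s^2}$, and therefore $\Var(\hat S)\le\frac{nD}{s^2}\,S$.

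The crucial step, which I expect to be the only nontrivial one, is the bound $D\le 2\Delta+\frac sn S$. Let $\rho_m$ and $\beta_m$ be the lengths of the intervals spanned by the red points and by the blue points of $B_m$; since these intervals are pairwise non-overlapping along the line, $\sum_m\rho_m\le\Delta$ and $\sum_m\beta_m\le\Delta$. Moreover $D_m\le\rho_m+\beta_m+\gamma_m$, where $\gamma_m$ is the gap between the red span and the blue span of $B_m$, so $\gamma_m=0$ unless these two intervals are disjoint; and if they are disjoint, say with all reds of $B_m$ lying left of all blues of $B_m$, then $\lvert r_k-b_k\rvert=b_k-r_k\ge\gamma_m$ for every $k\in B_m$, whence $S_m\ge\frac ns\gamma_m$. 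Summing gives $\sum_m\gamma_m\le\frac sn\sum_m S_m=\frac sn S$, and the bound on $D$ follows. Plugging it into the variance bound yields $\Var(\hat S)\le\frac{2n\Delta S}{s^2}+\frac{S^2}{s}$, so the standard deviation of $\hat S$ is at most $\frac{\sqrt{2n\Delta S}}{s}+\frac{S}{\sqrt s}$. A short case distinction on whether $S\le n\Delta/s^2$ shows the first summand is at most $\sqrt2\,S+\sqrt2\,n\Delta/s^2$, and the second is at most $S$; oversampling by a constant factor shrinks these to, say, $\tfrac1{10}S+\tfrac1{10}n\Delta/s^2$. Chebyshev's inequality then gives $\lvert\hat S-S\rvert=O(S)+O(n\Delta/s^2)$ with probability at least $2/3$, i.e.\ $\hat S$ approximates $\emd(R,B)$ within an $O(1)$ multiplicative and $O(n\Delta/s^2)$ additive error. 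Degenerate cases ($D=0$, where $S=0$; or $s$ a small constant, where the additive term already dominates $n\Delta$) are handled trivially.
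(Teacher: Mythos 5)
Your proposal is correct, and it takes a genuinely different route from the paper's proof. The paper splits the edges of the sorted matching into \emph{long} (length at least $\Delta/s$) and \emph{short}, handles long edges by snapping endpoints to the centers of $\Theta(s)$ segments and reading off their contribution essentially exactly (a constant relative error because long edges span at least one segment), and then buckets short edges into $O(\log\Delta)$ geometric length classes $E_i$ with lengths in $[2^{i-1},2^i)$, estimating each $|E_i|$ by uniform point sampling with Chernoff; the $O(n\Delta/s^2)$ additive term arises from discarding the classes with fewer than $n/(s\log\Delta)$ edges. You instead stratify by \emph{index}, not by \emph{length}: partition $\{1,\dots,n\}$ into $s$ consecutive blocks, compute a per-block envelope $D_m$ from four boundary quantiles, and run a single importance-sampling estimator with $t_m\propto D_m$ samples per block, finishing with Chebyshev. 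The load-bearing new ingredient in your argument, which the paper never needs, is the inequality $D\le 2\Delta+\frac{s}{n}S$; this is exactly what upgrades the naive uniform-sampling additive error $O(n\Delta/s)$ to $O(n\Delta/s^2)$. Both arguments yield the claimed guarantee, and yours is arguably cleaner in one dimension: a single estimator, a one-line variance bound, and no long/short split or $\log\Delta$-way bucketing. The trade-off is that your stratification by index order is inherently tied to the sorted-order optimal matching, so it does not transfer to $d\ge 2$, where the paper replaces the sorted matching by a quadtree greedy matching and reuses the same long/short-plus-length-bucket machinery; the paper's 1D proof is deliberately structured as a blueprint for the higher-dimensional algorithm, at the cost of being less streamlined than your argument in the 1D case.
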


If no two points coincide, we have $\opt\geq n$. Therefore, we have the following corollary. 
\begin{corollary}
    Given two sets $R$ and $B$ of size $n$ in a discrete space $[\Delta]$ such that no two points of $R\cup B$ coincide, and a parameter $s>1$, 
    we can estimate the cost of the Earth Mover's Distance between $R$ and $B$ within 
    $O(\frac{\Delta}{s^2})$ multiplicative error with probability at least $2/3$ using $\tilde O(s)$ range counting queries.
\end{corollary}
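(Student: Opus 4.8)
The plan is to use the exact one-dimensional identity $\emd(R,B)=\sum_{t=1}^{\Delta}|D(t)|$, where $D(t):=C_R(t)-C_B(t)$ and $C_R(t)$ (resp.\ $C_B(t)$) is the number of points of $R$ (resp.\ $B$) in $[1,t]$: this is the cost of the unique monotone matching, and each $D(t)$ is computable with two range counting queries. So it suffices to estimate $\sum_{t}|D(t)|$ to within the stated errors, which I would do with an unbiased importance-sampling (Horvitz--Thompson) estimator whose sampling weights are read off a coarse grid we can afford to query exactly.

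First I set up the weights. Partition $[\Delta]$ into $s$ consecutive cells $C_1,\dots,C_s$ of width $w=\Delta/s$, with $C_i=((i-1)w,iw]$; reading the prefix counts $C_R,C_B$ at the cell boundaries costs $O(s)$ queries and yields, for each cell, the prefix difference $\delta_i:=D((i-1)w)$ and the mass $h_i:=\max(|R\cap C_i|,|B\cap C_i|)$. Since $D(t)=\delta_i+|R\cap((i-1)w,t]|-|B\cap((i-1)w,t]|$ for $t\in C_i$, the value $U_t:=|\delta_i|+h_i$ is an upper bound on $|D(t)|$ that is constant on $C_i$ and already known without further queries. Let $S:=\sum_{t}U_t=w\sum_{i=1}^{s}(|\delta_i|+h_i)$. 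I draw $m=\Theta(s)$ independent positions $T_1,\dots,T_m$ from $\Pr{T=t}=U_t/S$ (pick a cell with probability proportional to $w(|\delta_i|+h_i)$, then a uniform position inside it; positions with $U_t=0$, where automatically $D(t)=0$, are excluded), query each $D(T_l)$, and output $\widehat E:=\frac1m\sum_{l=1}^{m}|D(T_l)|/\Pr{T=T_l}$ (and $\widehat E:=0$ in the degenerate case $S=0$, i.e.\ $R=B=\emptyset$).

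The analysis has three steps. (i) Unbiasedness: $\Ex{\widehat E}=\sum_{t}|D(t)|=\emd(R,B)$. (ii) Variance: for a single term $Y=|D(T)|/\Pr{T}$ we have $\Ex{Y^2}=S\sum_{t}|D(t)|^2/U_t\le S\sum_{t}|D(t)|=S\cdot\emd(R,B)$ using $|D(t)|\le U_t$, hence $\Var(\widehat E)\le S\cdot\emd(R,B)/m$. (iii) Bounding $S$: clearly $\sum_i h_i\le\sum_i(|R\cap C_i|+|B\cap C_i|)=2n$, and the key inequality is that the prefix differences cannot sum to much more than $\emd$: for $t\in C_i$ one has $|D(t)|\ge|D(iw)|-(|R\cap C_i|+|B\cap C_i|)$, and summing over $t$ in each cell and then over all cells gives $w\sum_{i=1}^{s-1}|D(iw)|\le\emd(R,B)+2nw$, i.e.\ $\sum_i|\delta_i|=\sum_{k=1}^{s-1}|D(kw)|\le\emd(R,B)/w+2n$. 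Therefore $S\le\emd(R,B)+4nw=\emd(R,B)+O(n\Delta/s)$ and $\Var(\widehat E)\le\big(\emd(R,B)+O(n\Delta/s)\big)\,\emd(R,B)/m$. Taking $m$ a sufficiently large multiple of $s$ and applying a weighted AM--GM bound $\sqrt{xy}\le\eta x+y/(4\eta)$ to the cross term (with $x=\emd(R,B)$, $y=n\Delta/s^2$), the standard deviation of $\widehat E$ is at most $\frac1{20}\emd(R,B)+O(n\Delta/s^2)$; Chebyshev's inequality then gives, with probability at least $2/3$, $|\widehat E-\emd(R,B)|\le\frac15\emd(R,B)+O(n\Delta/s^2)$, which is the claimed $O(1)$-multiplicative, $O(n\Delta/s^2)$-additive estimate. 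The query count is $O(s)$ for the grid plus $2m=O(s)$ for the samples.

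The main obstacle is beating the naive estimator. Sampling positions uniformly and rescaling by $\Delta$ is also unbiased, but its variance is $\Theta(n\Delta\,\emd/m)$, so it would need $m=\Theta(s^2)$ samples — a factor $s$ over budget — to reach additive error $n\Delta/s^2$. The gain comes entirely from replacing the factor $n\Delta$ by $\emd+n\Delta/s$ via importance weights computed from the $O(s)$ exact coarse counts, and the only nontrivial point is the inequality $S\le\emd(R,B)+O(n\Delta/s)$ — that these upper-bound weights do not overshoot $\emd$ by more than the additive budget — which is a two-line argument but is exactly where the structure of the one-dimensional optimal matching is used. The remaining points are routine: the cases $U_t=0$ and $n=0$, rounding when $s$ does not divide $\Delta$, and, if $(1\pm\eps)$ instead of $O(1)$ multiplicative error is wanted, taking $m=\Theta(s/\eps^2)$ and amplifying the success probability by a median of a constant number of independent runs.
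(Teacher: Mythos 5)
Your proposal is correct, and it is a genuinely different — and in several respects cleaner — route than the paper's.

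The paper derives this corollary from Theorem~\ref{thm:emd:additive}, which in turn classifies the edges of the one-dimensional optimal (monotone) matching into ``long'' (length $\ge \Delta/s$) and ``short'', handles long edges by snapping their endpoints to centers of $\Theta(s)$ coarse segments, and handles short edges by bucketing them into $\log(\Delta/s)$ length classes and estimating the count in each class by sampling $O(s\log^2\Delta)$ points of $R$, finding each sampled point's mate via $O(\log\Delta)$ binary-search queries, and applying Chernoff/Markov separately to large and small classes, with a further $O(\log\Delta)$ repetition to amplify the per-class success probability. This is $\tilde O(s)$ queries and it is the version that generalizes (via the randomly shifted quadtree) to $d\geq 2$ in Theorem~\ref{thm:emd:additive-d}.

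You instead start from the 1D integral identity $\emd(R,B)=\sum_t |D(t)|$ for the prefix-count difference $D(t)=C_R(t)-C_B(t)$, read off exact prefix counts on an $s$-cell coarse grid to obtain a per-position upper bound $U_t=|\delta_i|+h_i\ge |D(t)|$ that is constant on each cell, and build an unbiased Horvitz--Thompson estimator with proposal $\propto U_t$. The two ingredients that make it work are exactly the ones you highlight: $\mathbb{E}[Y^2]\le S\cdot\emd$ because the weights dominate $|D(\cdot)|$, and $S\le \emd + O(n\Delta/s)$ because the coarse-grid prefix differences cannot collectively exceed the integral $\sum_t|D(t)|$ by more than the within-cell mass, which is $O(nw)$. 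Both steps are verified correctly, and the last step (AM--GM on the cross term plus Chebyshev) is routine. Your query count is $O(s)$, with no logarithmic overhead, strictly better than the paper's $\tilde O(s)$; you avoid mate-finding, length bucketing, and the per-class probability amplification entirely. What you lose is generality: the identity $\emd=\sum_t|D(t)|$ is specific to one dimension, so this argument does not extend to the $d\geq 2$ setting the way the paper's long/short edge decomposition does. One small presentational note: you prove the $O(1)$-multiplicative, $O(n\Delta/s^2)$-additive bound of Theorem~\ref{thm:emd:additive}; to finish the corollary you should state explicitly that, since no two points coincide, $\opt\geq n$, so the additive term is $O(\Delta/s^2)\cdot\opt$, yielding the claimed $O(\Delta/s^2)$ multiplicative error (strictly, $O(\max\{1,\Delta/s^2\})$).
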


Let $R=\{r_1,r_2,\ldots,r_n\}$ and $B=\{b_1,b_2,\ldots,b_n\}$ be two sets in $[\Delta]$ sorted along the axis.
Let $M$ be an optimal matching between $R$ and $B$.
We separate two cases: \emph{long} edges and \emph{short} edges.
An edge of $M$ is \emph{long} if its length is at least $\Delta/s$, and it is \emph{short}, otherwise.
Let $\opt$ be the Earth Mover's Distance between $R$ and $B$.

We employ the following \emph{win-win} strategy:
For a long edge, a slight perturbation of its endpoints introduces only a constant relative error. Thus, we move its endpoints to the center of a grid, enabling us to estimate the total length efficiently.
For short edges, we can disregard cases where the number of such edges is small, as this results in only a small additive error. Consequently, we may assume that the number of short edges is large, allowing us to estimate their total length using sampling. To further reduce the additive error, we partition the set of short edges into $\log (\Delta/s)$ subsets based on edge length and then consider each subset separately.


\subparagraph{Long edges.}
For long edges, 
we subdivide $[\Delta]$ into segments each of length $\Delta/(2s)$.
For each long edge $(r_i,b_i)$, 
imagine that we move $r_i$ and $b_i$ to the centers of the segments containing them. Since $(r_i,b_i)$ is long, this introduces  an additive error of $O(\Delta/s)$, resulting in a constant relative error.
Using this fact, we estimate the total length of all long edges within a constant relative error using $\tilde O(s)$ range counting queries. 
Let $L_\llong$ be the resulting estimator. 

\begin{lemma}
    We can estimate the total length of the long edges in $M$ within $O(1)$ multiplicative error using $\tilde O(s)$ range counting queries. 
\end{lemma}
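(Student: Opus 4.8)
The plan is to turn the ``imagine moving the endpoints'' idea into an estimator computed entirely from prefix counts of $R$ and $B$. Fix the subdivision of $[\Delta]$ into $2s$ segments of side length $\delta:=\Delta/(2s)$, and let $x_k=k\delta$ for $k=1,\dots,2s-1$ be the $k$-th segment boundary. Two facts drive the argument: (i) since $M$ matches $r_i$ with $b_i$ in sorted order, for every $x$ the number of edges of $M$ that \emph{cross} $x$ (one endpoint $\le x$, the other $>x$) equals $|R(\le x)-B(\le x)|$, obtainable with two range-counting queries; and (ii) moving both endpoints of a long edge to the centers of their segments changes that edge's length by at most $\delta$, hence by at most half its length, since a long edge has length at least $\Delta/s=2\delta$. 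By (ii) it suffices to estimate, up to a constant factor, the rounded total length $\hat L:=\sum_{e\text{ long}}|\hat r_e-\hat b_e|$; and since consecutive segment centers are $\delta$ apart, $\hat L=\delta\sum_k c_k$, where $c_k$ is the number of long edges crossing $x_k$.

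First I would reduce the evaluation of $\hat L$ to $O(s)$ range-counting queries. We cannot read off $c_k$ directly, since we cannot tell which crossing edges are long; instead, at the boundary $x_k$ I count $\tilde c_k$, the number of edges of $M$ crossing $x_k$ \emph{both} of whose endpoints lie at distance at least $\delta$ from $x_k$. Any such edge has length at least $2\delta=\Delta/s$ and is therefore long, while a long edge with endpoints $u_e<v_e$ is counted in $\tilde c_k$ for exactly the boundaries $x_k\in[u_e+\delta,v_e-\delta]$, i.e.\ for $\Theta(\ell/\delta)$ boundaries once its length $\ell$ is at least $4\delta$. Hence $\delta\sum_k\tilde c_k$ is within a constant factor of $L_{\ge 2\Delta/s}$, the total length of the edges of length at least $2\Delta/s$. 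The computational point that makes $\tilde c_k$ cheap is that, because $M$ pairs the sorted $R$ with the sorted $B$, the edges crossing $x_k$ occupy a contiguous block $\{\beta_k+1,\dots,a_k\}$ of indices with $a_k=R(\le x_k)$ and $\beta_k=B(\le x_k)$ (with the roles of $R$ and $B$ interchanged if the blue prefix is larger); within this block ``$u_e\le x_k-\delta$'' holds on a prefix and ``$v_e\ge x_k+\delta$'' holds on a suffix, so $\tilde c_k$ is the size of an intersection of a prefix with a suffix and is determined by the two further counts $R(\le x_k-\delta)$ and $B(<x_k+\delta)$. Thus each $\tilde c_k$ costs $O(1)$ range-counting queries, and the whole sum costs $O(s)=\tilde O(s)$ of them.

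What remains is to bridge the gap between $L_{\ge 2\Delta/s}$ and the quantity we actually want, the total length of the edges of length at least $\Delta/s$; the difference is the total length of the matching edges whose length lies in the single band $[\Delta/s,2\Delta/s)$, which the ``both endpoints far'' test may miss entirely. There are at most $n$ such edges, each of length below $2\Delta/s$, so their total length is below $2\Delta n/s$: if it is below $\Delta n/s^2$ we drop them, changing the answer only within the additive error the overall estimator carries anyway, and $\delta\sum_k\tilde c_k$ is already a constant-factor estimate of the long-edge length; otherwise I would estimate this one band by taking $\tilde O(s)$ samples of edges of $M$ (sample a uniform red point by telescoping sampling, locate its partner with $O(\log\Delta)$ further queries, and keep $n\cdot\ell_e$ when $\ell_e$ falls in the band, $0$ otherwise) and averaging — when the band's total length exceeds $\Delta n/s^2$ the relative variance of this estimator is $O(s)$, so $\tilde O(s)$ samples suffice for a constant-factor estimate. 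Summing the two pieces gives the total long-edge length within a constant multiplicative factor using $\tilde O(s)$ queries. The main obstacle, and the only nontrivial step, is exactly this threshold issue: no single ``distance from the boundary'' cutoff separates long edges from short ones whose lengths straddle $\Delta/s$, so the octave at the threshold must be treated by a separate, coarser argument; everything else is bookkeeping about prefix counts.
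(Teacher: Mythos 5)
Your proof is correct, and it reaches the paper's estimator by a genuinely different mechanism. Both arguments are driven by the same $O(s)$ prefix counts $R(\le x_k), B(\le x_k)$ at segment boundaries, but the paper uses them to \emph{reconstruct an explicit discretized instance}: it places $n_r(S)$ red and $n_b(S)$ blue points at the center of each segment $S$, computes the sorted-order matching $M'$ on this small instance with no further queries, and returns the total length of those edges of $M'$ that span a full segment. You instead read the answer off directly as an \emph{integral of crossing numbers}, $\delta\sum_k \tilde c_k$, with the ``both endpoints at distance $\ge\delta$ from $x_k$'' filter implemented from two additional prefix counts per boundary; this makes the contiguous-block structure of the 1D matching do the work explicitly, and your threshold bookkeeping (separating $[2\delta,4\delta)$ from $[4\delta,\infty)$) is actually cleaner than the paper's, whose assertion that every long edge of $M$ yields an edge of $M'_\llong$ can fail at the boundary (an edge of length $\Delta/s$ may snap to length $\Delta/(2s)$, which spans no full segment). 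The main divergence is in handling the band $[\Delta/s,2\Delta/s)$: you add a separate importance-sampling estimator ($\tilde O(s)$ samples, relative variance $O(s)$ once the band's total length exceeds $\Delta n/s^2$) to estimate it up to a constant factor, whereas the paper simply observes that any edge mis- or un-counted near the threshold is an edge of $M$ itself, so its total contribution is at most $\opt$, and accepts an $O(\opt)$ additive slack --- which is exactly how the lemma is invoked one lemma later (``an additive error of $2\cdot\opt$''). Your extra sampling step is therefore not needed for the downstream Theorem~\ref{thm:emd:additive}; and note that when $L_{[\Delta/s,2\Delta/s)}<\Delta n/s^2$ you drop the band and your bound is likewise additive rather than purely multiplicative, so neither proof literally delivers the ``$O(1)$ multiplicative error'' of the lemma's wording, but this shared looseness is harmless for the final theorem.
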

\begin{proof}
    We subdivide $[\Delta]$ into $s$ segments each of length $\Delta/(2s)$. Here, the number of segments is $O(s)$. 
    Let $\mathcal S$ be the set of the resulting segments. 
    Then a long edge contains at least one segment. Imagine that we move every point $p$ incident to a long edge to the center of the segment containing $p$.
    For each segment $S$, we compute the number $n_r(S)$ of red points contained in $S$ and the number $n_b(S)$ of blue points contained in $S$, respectively, using $O(s)$ queries. 
    Then we construct another point set $P'$ by adding $n_r(S)$ red points to the center of $S$ and adding $n_b(R)$ blue points to the center of $S$. We can do this without using queries. For a point $p\in R\cup B$, let $p'$ be the its corresponding point in $P'$. 
    Then consider the matching $M'=\{(r',b') \mid (r,b)\in M\}$. 
    We can compute $M'$ explicitly without using any queries. Let $M'_\llong$ be the set of edges of $M'$ containing at least one segment of $\mathcal S$. 
    A long edge of $M$ has its corresponding edge in $M'_\llong$. However, some short edge of $M$ might have its corresponding edge in $M'_\llong$.
    Here, observe that the short edges of $M$ having their corresponding edges in $M'_\llong$ have length at least $\Delta/(2s)$, thus the cost induced by those edges are also within a constant factor of $\opt$. 
    They will be also counted for short edges, but this also increases the total cost by a constant factor.
    Therefore, the lemma holds. 
\end{proof}

\subparagraph{Short edges.}
For short edges, we use random sampling. 
More specifically, we subdivide the set of short edges into $t=\log (\Delta/s)$ subsets with respect to their lengths: $E_1,E_2,\ldots,E_t$,
where $E_i$ is the set of all edges of $M$ of lengths 
in $[2^{i-1}, 2^{i})$. 
Then we estimate the \emph{number} of edges of $E_i$ for every index $i$.
To do this, we choose a random sample of $R$ of size $x=O(s\log^2\Delta)$. For each chosen point, we can find its mate. Let $S_i$ denote the set of all edges between the sampled points and their mates. 
If $|E_i| \geq n/(s\log \Delta)$ then the number of edges in $E_i \cap S_i$ would be a good estimator for $|E_i|$.
Otherwise, 
we can ignore them as $E_i$ induces an additive error of at most $\Delta n/(s^2\log\Delta)$. This is because every short edges has length at most $\Delta/s$. 
Let $\ell_i = |E_i\cap S_i| \cdot (n/x)$.
Then we have the following lemma. 

\begin{lemma}\label{lem:success-prob-1d}
    If $|E_i|\geq n/(s\log\Delta)$,  we have $\ell_i=\Theta(|E_i|)$ with a constant probability. 
    Otherwise, we have $\ell_i\leq O(|E_i|)$ with a constant probability.
\end{lemma}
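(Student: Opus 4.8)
The plan is to treat $|E_i\cap S_i|$ as a sum of indicator variables and apply Chernoff-type concentration. Recall that we draw $x=\Theta(s\log^2\Delta)$ red points; it will be convenient to draw them \emph{without replacement}, which is still doable within $\tilde O(s)$ range counting queries, so that $|E_i\cap S_i|$ equals exactly the number $X_i$ of sampled red points whose matching edge lies in $E_i$. Since each red point is incident to exactly one edge of $M$ and precisely $|E_i|$ of the $n$ red points are incident to an edge of $E_i$, the variable $X_i$ is hypergeometric with $\Ex{X_i}=\mu_i:=x|E_i|/n$, so $\Ex{\ell_i}=(n/x)\mu_i=|E_i|$. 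Chernoff bounds apply to hypergeometric variables just as to binomial ones, and this is essentially the only tool needed.

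For the first part, assume $|E_i|\ge n/(s\log\Delta)$; then $\mu_i\ge x/(s\log\Delta)=\Omega(\log\Delta)$, and by making the constant hidden in $x=\Theta(s\log^2\Delta)$ large enough we may assume $\mu_i\ge c\log\Delta$ for any desired constant $c$. The two-sided Chernoff bound then gives $\Pr{\,|X_i-\mu_i|>\mu_i/2\,}\le 2\exp(-\Omega(\mu_i))$, which is $\Delta^{-\Omega(1)}$ and in particular at most $1/3$. On this event $X_i\in[\mu_i/2,\,3\mu_i/2]$, hence $\ell_i=(n/x)X_i\in[\,|E_i|/2,\,3|E_i|/2\,]$, i.e.\ $\ell_i=\Theta(|E_i|)$; the $o(1)$ failure probability is what will later let us union-bound over the $t=O(\log\Delta)$ length classes.

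For the second part, assume $|E_i|<n/(s\log\Delta)$, so $\mu_i<x/(s\log\Delta)=O(\log\Delta)$, and only an \emph{upper} bound on $\ell_i$ is required. A single tail bound does not suffice here, because when $\mu_i$ is tiny the event that even one sampled point falls in $E_i$ already forces $\ell_i\ge n/x$, which can dwarf $|E_i|$. So I would split on the size of $\mu_i$. If $\mu_i\le 1/2$, then $\Pr{X_i\ge 1}\le \Ex{X_i}=\mu_i\le 1/2$, so with probability at least $1/2$ we have $X_i=0$ and thus $\ell_i=0\le O(|E_i|)$. If $\mu_i>1/2$, then the multiplicative Chernoff bound shows that, for a sufficiently large absolute constant $\delta$, $\Pr{X_i\ge(1+\delta)\mu_i}\le 1/3$, and on the complementary event $\ell_i=(n/x)X_i\le(1+\delta)(n/x)\mu_i=(1+\delta)|E_i|=O(|E_i|)$. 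In either subcase $\ell_i\le O(|E_i|)$ with constant probability, which completes the argument.

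I expect the only genuinely delicate point to be this case split in the second part: a naive appeal to concentration would (wrongly) assert that $\ell_i$ is tightly concentrated around $|E_i|$ even when $|E_i|$ lies below the sample's effective detection threshold $n/x$; the right move is to bound $\Pr{X_i=0}$ in that regime and observe that this weaker guarantee is all one ever needs, since such length classes contribute only $O(\Delta n/(s^2\log\Delta))$ to the additive error and are discarded anyway. A minor but real technicality is sampling without replacement so that $|E_i\cap S_i|=X_i$ holds exactly; with replacement, the lower bound in the first part could be eroded by collisions among the sampled red points inside $E_i$, and such collisions need not be rare when $s$ is large, so one would then have to argue separately that they are negligible.
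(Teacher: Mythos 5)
Your proof is correct and follows essentially the same route as the paper: a Chernoff bound for the case $|E_i|\ge n/(s\log\Delta)$ (both get failure probability $\approx 1/\Delta$ from $\mu_i=\Omega(\log\Delta)$), and a Markov-type tail bound in the other case. The paper handles the second case more directly: since $\Ex{\ell_i}=|E_i|$ and $\ell_i\ge 0$, Markov's inequality gives $\Pr{\ell_i\ge c|E_i|}\le 1/c$ for all $\mu_i$ at once, so the split into $\mu_i\le 1/2$ versus $\mu_i>1/2$ (and the Chernoff upper tail in the latter subcase) is unnecessary. Your ``delicate point'' remark correctly identifies that two-sided concentration around $|E_i|$ fails when $\mu_i$ is small, but Markov's one-sided bound is all that is claimed or needed, and it holds with no case analysis. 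The observation about sampling without replacement is a fair pedantic point about the definition $\ell_i=|E_i\cap S_i|\cdot(n/x)$ when $S_i$ is read literally as a set of edges: the paper's proof sidesteps this by working with the multiset count $Y=\sum_j Y_j$ (which is what $\ell_i$ is implicitly taken to be), so collisions are never subtracted off and with-replacement sampling is fine; note also that without-replacement sampling is not even possible when $s$ is so large that $x>n$.
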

\begin{proof}
Let $Y_j$ be the random variable where $Y_j=1$ if the $j$th  edge in $S_i$ is contained in $E_i$, and $Y_j=0$ otherwise. 
Let $Y=Y_1+\ldots+Y_x$, where $x=s\log^2 \Delta$ is the number of sampled points.  
By definition, we have  $\Ex{Y_j}= \frac{|E_i|}{n}$, and $\Ex{Y}=x\cdot \frac{|E_i|}{n}$. 
Now we analyze the failure probability. If $|E_i| \geq n/(s\log\Delta)$, we use Chernoff bounds as stated in Lemma~\ref{lem:cher}. 
Then the probability that $|\Ex{Y}-Y| \geq \eps \Ex{Y}$
is at most $2\exp(-\eps^2 \cdot p\cdot (s\log^2\Delta)/3)$, 
where $p$ is the probability that $Y_j=1$. 
In our case, $p=|E_i|/n \geq 1/(s\log\Delta)$. 
Thus the failure probability is at most $1/\Delta$ in this case. 
If $|E_i|< n/(s\log\Delta)$, we use Markov's inequality as stated in Lemma~\ref{lem:markov}.
Then the probability that $Y_i \geq c\dot |E_i|$ is less than a small constant for a large constant $c$. Therefore, the second part of the lemma also holds.
\end{proof}

To amplify the success probability of Lemma~\ref{lem:success-prob-1d}, we repeat the procedure for computing $\ell_i$ $O(\log\Delta)$ times. Then we take the minimum, and 
let $L=\sum_{i=1}^t 2^i\cdot \ell_i + L_\llong$.
\begin{lemma}
    With a constant probability, $(1/4)\cdot \opt-n\Delta/s^2 \leq L \leq 4\cdot \opt+n\Delta/s^2$.
\end{lemma}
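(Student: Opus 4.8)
The plan is to bound $L=\sum_{i=1}^t 2^i\ell_i+L_\llong$ by controlling its long part and its short part separately and matching each against the corresponding part of $\opt$. First I would split $\opt=\opt_{\llong}+\opt_{\short}$, where $\opt_{\llong}$ is the total length of the long edges of $M$ and $\opt_{\short}=\sum_{i=1}^t\opt_i$ with $\opt_i:=\sum_{e\in E_i}|e|$. Since every edge of $E_i$ has length in $[2^{i-1},2^i)$, we have $2^{i-1}|E_i|\le \opt_i\le 2^i|E_i|$, so $2^i\ell_i$ is, up to a factor $2$, a faithful proxy for $\opt_i$ whenever $\ell_i=\Theta(|E_i|)$. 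The lemma on long edges already gives that $L_\llong$ is a constant-factor approximation of $\opt_{\llong}$, possibly overcounting some short edges of length at least $\Delta/(2s)$ whose total length is $O(\opt)$; in particular $\Omega(\opt_{\llong})\le L_\llong\le O(\opt)$, and this part is deterministic, so no failure probability is incurred there.

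Next I would set up a single ``good event'' $\mathcal E$ on which the claimed inequality holds: for every length class $i$ with $|E_i|\ge n/(s\log\Delta)$ we have $\ell_i\in[\frac{1}{2}|E_i|,\,2|E_i|]$, and for every class $i$ with $|E_i|<n/(s\log\Delta)$ we have $\ell_i\le c\,|E_i|$ for the absolute constant $c$ from Lemma~\ref{lem:success-prob-1d}. A single run of the estimator of $\ell_i$ satisfies the relevant statement with constant probability by Lemma~\ref{lem:success-prob-1d} (two-sided Chernoff in the first regime, Markov in the second); since $\ell_i$ is defined as the minimum of $O(\log\Delta)$ independent repetitions, the failure probability of each of the $t=O(\log\Delta)$ statements drops to $1/\mathrm{poly}(\Delta)$ — in the large regime all repetitions concentrate so the minimum is still $\Theta(|E_i|)$, and in the small regime a single small value suffices, which is exactly what the minimum provides. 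A union bound over the $t$ classes then yields $\Pr[\mathcal E]\ge 2/3$ (in fact $1-1/\mathrm{poly}(\Delta)$).

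On $\mathcal E$ the two inequalities follow by bookkeeping, which is the standard win–win accounting sketched in the text. For the small classes, every short edge has length at most $\Delta/s$, so $2^i\le\Delta/s$, and there are at most $t\le\log\Delta$ of them; hence $\sum_{i:\,|E_i|<n/(s\log\Delta)}2^i\ell_i\le c\sum_i 2^i|E_i|\le c\cdot t\cdot\frac{\Delta}{s}\cdot\frac{n}{s\log\Delta}=O(n\Delta/s^2)$, and likewise $\sum_{i:\,|E_i|<n/(s\log\Delta)}\opt_i=O(n\Delta/s^2)$; so the discarded classes move both $L$ and $\opt$ by only an additive $O(n\Delta/s^2)$ term. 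For the large classes, $\ell_i=\Theta(|E_i|)$ combined with $\opt_i=\Theta(2^i|E_i|)$ gives $2^i\ell_i$ within a factor between $1/2$ and $2$ of $\opt_i$. Adding $L_\llong=\Theta(\opt_{\llong})$ and collecting terms yields $\frac{1}{4}\opt-n\Delta/s^2\le L\le 4\opt+n\Delta/s^2$, after choosing the sampling constant in $x=\Theta(s\log^2\Delta)$ and the constant $c$ large enough (and noting $t\le\log\Delta$) so that the accumulated multiplicative slack stays below $4$ and the accumulated additive slack below $n\Delta/s^2$.

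The main obstacle is the second paragraph: verifying that taking the minimum of $O(\log\Delta)$ repetitions genuinely amplifies the success probability in both regimes at once — in the ``large'' regime one relies on two-sided concentration so that the minimum is not an underestimate, whereas in the ``small'' regime only the one-sided Markov bound is available and the minimum is precisely what drives the failure probability down — and then checking that this, the deterministic long-edge estimate, and the union bound over the $t$ length classes together still leave the target success probability a constant. The remaining additive-error accounting for the discarded small classes is routine.
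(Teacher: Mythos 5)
Your proposal is correct and follows essentially the same route as the paper: estimate the long edges deterministically (constant-factor), round short-edge lengths to powers of two (constant-factor), estimate $|E_i|$ by sampling for heavy length classes, absorb light classes into an additive $O(n\Delta/s^2)$ term, and amplify via the minimum of $O(\log\Delta)$ repetitions plus a union bound. Your elaboration of the minimum-of-repetitions step (why it works one-sidedly in the light regime and why two-sided concentration keeps it from underestimating in the heavy regime) is a more careful rendering of what the paper's proof compresses into ``due to the amplification\dots this total success probability is a constant,'' but it is the same argument, not a different one.
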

\begin{proof}
    Since we can estimate the total length of long edges with an additive error of $2\cdot\opt$, it suffices to focus on short edges. 
    For each edge of $E_i$, we round up its length to $2^i$.
    This induces the additive error of $2\cdot\opt$. 
    Now we use Lemma~\ref{lem:success-prob-1d}. Consider the event that we have the desired bounds for all indices $i$ in Lemma~\ref{lem:success-prob-1d}. Due to the amplification of success probability we made before,
    this total success probability is a constant. 
    If $|E_i|\geq (n/(s\log\Delta)$, the estimator $\ell_i$ 
    has an error of $O(1)\cdot \opt$ in total for all such indices.
    If $|E_i| < n/(s\log\Delta)$, we do not have a lower bound guarantee for $\ell_i$. 
    However, since the number of such edges is $n/s$ in total,
    and the length of each such edge is $\Delta/s$,
    this induces the additive error of $n\Delta/s^2$.
    Therefore, the total additive error is $n\Delta/s^2$. 
\end{proof}

\subsection{A sublinear algorithm for higher dimensions}
In this section, we present a sublinear algorithm that approximates the cost of the Earth Mover's Distance between two point sets in $[\Delta]^d$ for an integer $d\geq 2$. 
More specifically, we prove the following theorem. 
Again, notice that the Earth Mover's Distance between two point sets can be as large as $\Theta(\Delta n)$ in the worst case. 

\begin{restatable}{theorem}{emdalgo}
     \label{thm:emd:additive-d}
    Given two sets $R$ and $B$ of size $n$ in a discrete space $[\Delta]^d$ and a parameter $s>1$, 
    we can estimate the cost of the Earth Mover's Distance between $R$ and $B$ within $O(\log\Delta)$-relative error in addition to 
    $O(\frac{\Delta n}{s^{1+1/d}})$ additive error with probability at least $2/3$ using $\tilde O(s)$ range counting queries.
\end{restatable}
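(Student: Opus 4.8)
The plan is to reduce estimating $\emd(R,B)$ to estimating a single quadtree proxy, and then to estimate that proxy with a coarse/fine ``win--win'' split that generalises the one-dimensional argument: grid cells play the role of the length classes, and the greedy quadtree matching plays the role of the known optimal matching. Work with the randomly shifted quadtree. For a level $i$ set $\delta(c)=\bigl||R\cap c|-|B\cap c|\bigr|$ for $c\in\mathcal G_i$, $D_i=\sum_{c\in\mathcal G_i}\delta(c)$, and $T=\sum_{i=0}^{\log\Delta}2^iD_i$. Up to constant factors, $T$ is the cost of the optimal transportation between $R$ and $B$ in the tree metric of the shifted quadtree: the matching pushing flow along tree edges is optimal there, and the flow on the edge above a cell $c$ equals $\delta(c)$ (equivalently, $T$ is up to constants the cost of matching the $k$-th points of $R$ and $B$ in $z$-order). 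Since the tree metric dominates the Euclidean one, $T\ge\opt$ always, and the standard randomly shifted quadtree embedding gives $\Ex{T}=O(\log\Delta)\cdot\opt$, so by Markov's inequality $T\le c\log\Delta\cdot\opt$ with probability $5/6$ for a suitable constant $c$. It therefore suffices to estimate $T$ within an $O(1)$ multiplicative and an $O(n\Delta/s^{1+1/d})$ additive error.

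Call a level $i$ \emph{coarse} if $2^i\ge\Delta/s^{1/d}$ and \emph{fine} otherwise. A coarse grid has at most $(\Delta/2^i)^d\le s$ cells, so we query $|R\cap c|$ and $|B\cap c|$ for every one of them and obtain $\sum_{i\text{ coarse}}2^iD_i$ exactly using $\tilde O(s)$ queries. For a fine level $i$ we estimate $D_i$ by sampling: draw $x=\Theta(s\log\Delta)$ points uniformly from $R\cup B$ by telescoping sampling (Lemma~\ref{lem:telescoping}), and for each sample $p$ lying in a cell $c\in\mathcal G_i$ query $a=|R\cap c|$, $b=|B\cap c|$ and record $w=|a-b|/(a+b)\in[0,1]$. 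Since $p$ falls in $c$ with probability $(|R\cap c|+|B\cap c|)/2n$, the estimator $\hat D_i=(2n/x)\sum w$ is unbiased for $D_i$, and — the key point — $w\in[0,1]$ forces $\Var(w)\le\Ex{w^2}\le\Ex{w}=D_i/2n$, so Bernstein's inequality gives $\hat D_i\in[\frac{1}{2}D_i,\frac{3}{2}D_i]$ with probability $1-\Delta^{-\Omega(1)}$ as soon as $D_i\ge\theta:=n/s$. This normalisation plays the role that bucketing short edges by length plays in the one-dimensional proof: it turns the estimation of a weighted imbalance sum into something as concentrated as a count. If $D_i<\theta$ we discard level $i$, which costs at most $2^i\theta$. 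Since we do not know which case holds, we repeat the estimator $O(\log\Delta)$ times, let $\hat D_i$ be the minimum, and include level $i$ iff $\hat D_i\ge\frac{1}{2}\theta$; a union bound over the $O(\log\Delta)$ fine levels then ensures that every level with $D_i\ge\theta$ is included with $\hat D_i=\Theta(D_i)$, while every included level with $D_i<\theta$ contributes only $2^i\cdot O(\theta)$.

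The output is $\hat T=\sum_{i\text{ coarse}}2^iD_i+\sum_{i\text{ fine, included}}2^i\hat D_i$. The total of the discarded and spurious fine contributions is $O(\theta)\cdot\sum_{i\text{ fine}}2^i=O(\theta)\cdot O(\Delta/s^{1/d})=O(n\Delta/s^{1+1/d})$, because the geometric sum is dominated by the cutoff scale $\Delta/s^{1/d}$ and $\theta=n/s$; hence $\frac{1}{2}T-O(n\Delta/s^{1+1/d})\le\hat T\le\frac{3}{2}T$, which together with $\opt\le T\le c\log\Delta\cdot\opt$ yields the claimed $O(\log\Delta)$ multiplicative and $O(n\Delta/s^{1+1/d})$ additive error. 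The query count is $\tilde O(s)$ for the coarse levels plus $O(\log\Delta)$ repetitions times $O(s\log\Delta)$ samples times $O(\log\Delta)$ queries per sample times $O(\log\Delta)$ fine levels, all $\tilde O(s)$; the success probability is $5/6-o(1)\ge 2/3$.

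The main obstacle is obtaining $\tilde O(s)$ rather than $\tilde O(s^2)$ samples: the naive estimator of $T$ — summing $2^i$ times sampled cell imbalances — has per-level standard deviation $\approx n/\sqrt{x}$ and would require $x=\tilde\Theta(s^2)$. The resolution combines (i) the normalised weight $w\in[0,1]$ with $\Var(w)\le\Ex{w}$, so that a constant-factor estimate of $D_i$ costs only $\tilde O(n/D_i)$ samples; (ii) discarding levels with $D_i<n/s$, which is affordable precisely because the coarse cutoff — forced by wanting at most $s$ cells per level — sits at scale $\Delta/s^{1/d}$; and (iii) the min-of-repetitions trick to make the include/exclude decision reliable despite not knowing $D_i$. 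Lining up the threshold $n/s$, the cutoff $\Delta/s^{1/d}$, the sample size and the polylogarithmic factors, and checking the randomly shifted quadtree bound $\Ex{T}=O(\log\Delta)\opt$ in the discrete setting, is the remaining bookkeeping.
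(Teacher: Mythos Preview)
Your proof is correct and reaches the same bound, but it takes a genuinely different route from the paper on the fine levels. Both arguments use the randomly shifted quadtree and the identity $\ell(M)=\sum_i 2^{i+1}D_i$, split levels at the cutoff $2^i\approx\Delta/s^{1/d}$, and compute the coarse part exactly from the $O(s)$ cells there. For the fine part, however, the paper samples $\tilde O(s)$ points from $R$, for each one \emph{locates its mate in the greedy matching} via a dedicated $O(\log\Delta)$-query subroutine (Lemma~\ref{lem:mate}), and then buckets the resulting edges by length class $E_i$; concentration is a Chernoff bound on the indicator ``sampled edge lies in $E_i$'', with threshold $|E_i|\ge n/(s\log\Delta)$. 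You instead estimate each $D_i$ directly: sample from $R\cup B$, read off the cell imbalance $w=|a-b|/(a+b)\in[0,1]$, and use that $\Var(w)\le\Ex{w}$ to get Bernstein concentration at threshold $D_i\ge n/s$. Your approach is more elementary in that it never constructs or samples the greedy matching and dispenses with the mate-finding lemma entirely; the paper's approach has the advantage of mirroring the one-dimensional proof verbatim (edges bucketed by length), which makes the analogy transparent. Both give the same $\tilde O(s)$ query count and the same additive error, since your geometric sum $\sum_{i\text{ fine}}2^i\cdot(n/s)=O(n\Delta/s^{1+1/d})$ matches the paper's ``$n/s$ short edges of length at most $\Delta/s^{1/d}$'' bound.
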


If no two points coincide, we have $\opt\geq n$. Therefore, we have the following corollary. 

\begin{corollary}
    \label{cor:emd:additive}
    Given two sets $R$ and $B$ of size $n$ in a discrete space $[\Delta]^2$ and a parameter $s>1$ such that no two points in $R\cup B$ coincide, 
    we can estimate the cost of the Earth Mover's Distance between $R$ and $B$ within $O(\max\{\log\Delta,\frac{\Delta}{s^{1+1/d}}\})$-relative error with probability at least $2/3$ using $\tilde O(s)$ range counting queries.
\end{corollary}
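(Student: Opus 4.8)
The plan is to obtain the corollary as an immediate consequence of Theorem~\ref{thm:emd:additive-d}, using the observation that the assumption that no two points of $R\cup B$ coincide forces $\opt$ to be at least $n$, which converts the additive error of Theorem~\ref{thm:emd:additive-d} into a multiplicative one.

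First I would establish the bound $\opt\ge n$. All points of $R\cup B$ lie in the integer grid $[\Delta]^d$, and by assumption they are pairwise distinct; in particular, for any one-to-one matching $\pi\colon R\to B$ and any $r\in R$ we have $r\neq\pi(r)$, hence $\|r-\pi(r)\|\ge 1$. Summing over the $n$ edges of $\pi$ and minimizing over $\pi$ gives $\opt=\emd(R,B)\ge n$.

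Next I would plug this into Theorem~\ref{thm:emd:additive-d}. That theorem produces, with probability at least $2/3$ and using $\tilde O(s)$ range counting queries, an estimator $E$ with
\[
\frac{\opt}{c_1\log\Delta}-c_2\,\frac{\Delta n}{s^{1+1/d}}\;\leq\;E\;\leq\;c_1\log\Delta\cdot\opt+c_2\,\frac{\Delta n}{s^{1+1/d}}
\]
for absolute constants $c_1,c_2$. Substituting $n\leq\opt$ into the additive terms bounds each of them by $c_2\frac{\Delta}{s^{1+1/d}}\opt$, so both inequalities become purely multiplicative in $\opt$ with factor $O(\log\Delta)+O(\Delta/s^{1+1/d})=O(\max\{\log\Delta,\Delta/s^{1+1/d}\})$, which is exactly the claimed relative error. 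The query complexity and the success probability are inherited verbatim from Theorem~\ref{thm:emd:additive-d}.

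I do not expect a genuine obstacle here; the argument is the elementary estimate $\opt\ge n$ plus bookkeeping. The only point deserving a line of care is the lower side when $\Delta/s^{1+1/d}$ is the dominant term: then $\frac{1}{c_1\log\Delta}-c_2\frac{\Delta}{s^{1+1/d}}$ may be non-positive and $E$ carries no nontrivial lower guarantee, but in that regime even reporting the trivial value meets the stated relative error, so one may assume $\Delta/s^{1+1/d}=O(\log\Delta)$ and the two-sided bound goes through cleanly.
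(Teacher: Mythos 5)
Your proof is correct and takes exactly the route the paper intends: the paper does not give a separate proof but simply remarks that ``if no two points coincide, we have $\opt\geq n$,'' from which the corollary follows by absorbing the additive term of Theorem~\ref{thm:emd:additive-d} into a multiplicative factor of $O(\Delta/s^{1+1/d})$. Your derivation of $\opt\geq n$ (distinct points on an integer grid are at distance at least one) and the substitution $n\leq\opt$ into the additive error is precisely this observation made explicit, and your closing caveat about the lower side is harmless under the paper's relative-plus-additive error convention.
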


Let $\opt$ be the Earth Mover's Distance between $R$ and $B$. 
This algorithm is essentially the same as the algorithm for the one-dimensional case. The only difference is that 
we do not know the configuration of an optimal matching in a higher dimensional space.
Instead, we can use a matching obtained from the quadtree on $[\Delta]^d$, which has cost of $O(\log\Delta)\cdot \opt$.
Then as before, we separate two cases. 
If an edge has length at least $\Delta/s^{1/d}$, it is \emph{short}. Otherwise, it is \emph{long}.
Then we can compute all long edges exactly, and we can estimate short edges using sampling.

To make the description easier, we considers the case that $d=2$ only. It is not difficult to see that this algorithm indeed works for any higher dimensional space.

\subparagraph{Greedy matching $\sol$ with $O(\log \Delta)$ multiplicative error.}
We first define a greedy matching between $R$ and $B$ of cost $O(\log\Delta)\cdot\opt$.
For this, we define a new metric space $([\Delta]^2,\ell)$ based on a randomly shifted quadtree $T$ as follows.
First, we define the length to each edge of $T$. 
For each edge of the quadtree $T$ between level $i$ and level $i+1$, its length is defined as $2^{i+1}$. Recall that a leaf has level zero.  
Then, for any two points $p$ and $q$ in $[\Delta]^2$, 
we define $\ell(p,q)$ as the distance in the quadtree $T$ between the leaves corresponding to $p$ and $q$.
By construction, $\|p-q\| \leq \ell(p,q)$, where $\|p-q\|$ denotes the Euclidean distance between $p$ and $q$. Moreover,
by~\cite{DBLP:conf/stoc/Charikar02,DBLP:conf/stoc/Indyk04}, we have the following lemma.
This implies that the expected value of the Earth Mover's Distance between $R$ and $B$ under $([\Delta]^2, \ell)$ 
yields a multiplicative error of $O(\log\Delta)$. Also, notice that 
it is always at least $\opt$. This fact will be used later to analyze the success probability of the entire algorithm. 

\begin{lemma}[\cite{DBLP:conf/stoc/Charikar02,DBLP:conf/stoc/Indyk04}]\label{lem:approx-sol}
    For any two points $p$ and $q$ in $[\Delta]^2$, the expected value of $\ell(p,q)$
    is at most $O(\log \Delta) \cdot \|p-q\|$, where $\|p-q\|$ denotes the Euclidean distance between $p$ and $q$.
\end{lemma}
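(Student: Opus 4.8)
The plan is to prove the lemma by the standard analysis of a randomly shifted quadtree (equivalently, a randomly shifted hierarchically well-separated tree). Fix $p,q\in[\Delta]^2$ and write $\delta=\|p-q\|$, where all probabilities and expectations below are over the random shift. Let $J$ be the smallest level at which $p$ and $q$ lie in the same cell of the shifted grid $\mathcal G_J$, i.e., the level of their lowest common ancestor in $T$. By the definition of the edge lengths, the tree-path from a leaf up to its level-$J$ ancestor has length $\sum_{i=0}^{J-1}2^{i+1}=2^{J+1}-2$, so $\ell(p,q)=2(2^{J+1}-2)\le 2^{J+2}$. Equivalently, writing $X_i$ for the indicator of the event that $p$ and $q$ lie in \emph{different} cells of $\mathcal G_i$ (this is monotone: being separated at level $i$ implies being separated at the finer level $i-1$), one gets the clean identity $\ell(p,q)=2\sum_{i\ge 1}2^i\,X_{i-1}$, and hence $\Ex{\ell(p,q)}=2\sum_{i\ge 1}2^i\,\Pr{X_{i-1}=1}$.

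The key estimate is a bound on $\Pr{X_i=1}$, the probability that the shifted grid of side length $2^i$ separates $p$ from $q$. Such a separation occurs only if the two points fall on opposite sides of some grid line along the $x$-axis or along the $y$-axis. Along a single axis, the grid lines form a shift of the multiples of $2^i$; since $\Delta$ is a power of two and $i\le\log\Delta$, a shift chosen uniformly from $[\Delta]$ is uniform modulo $2^i$, so the probability that some such line separates the two coordinates equals $\min\{1,\,|p_x-q_x|/2^i\}\le \delta/2^i$. A union bound over the two axes gives $\Pr{X_i=1}\le 2\delta/2^i$ (in $d$ dimensions, $\le d\delta/2^i$, with $d$ a constant). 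I should also remark that the wrap-around used to make range queries work on the shifted grid can only merge two points into a common cell earlier, so it never increases $\ell(p,q)$ and does not affect the bound.

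Plugging this in, $\Ex{\ell(p,q)}\le 2\sum_{i\ge1}2^i\min\{1,\,4\delta/2^i\}$, and I would split the sum at the threshold $2^i\approx 4\delta$. For the head $2^i\le 4\delta$ we use the bound $1$, giving a geometric sum of total $O(\delta)$. For the tail $4\delta<2^i\le\Delta$ we use $4\delta/2^i$, so each of the $O(\log\Delta)$ terms contributes $O(\delta)$, for a total of $O(\delta\log\Delta)$. Adding the two pieces yields $\Ex{\ell(p,q)}=O(\log\Delta)\cdot\delta=O(\log\Delta)\cdot\|p-q\|$, as claimed.

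The only genuinely delicate point is the one-dimensional separation probability, and in particular the discreteness of the shift: one must check that sampling the shift uniformly from $[\Delta]$ (rather than from a real interval) still produces an exactly uniform offset modulo each relevant side length $2^i$, which holds because every such $2^i$ divides the power of two $\Delta$. Everything else is a routine geometric-series computation, and the statement is essentially the Charikar--Indyk bound restated for this particular construction, so I would keep the write-up short, emphasizing the identity $\ell(p,q)=2\sum_{i\ge1}2^iX_{i-1}$ and the separation-probability estimate, and cite~\cite{DBLP:conf/stoc/Charikar02,DBLP:conf/stoc/Indyk04} for the structure of the argument.
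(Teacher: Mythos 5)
Your argument is correct, and it is precisely the standard randomly-shifted-grid calculation that underlies the Charikar--Indyk bound: express $\ell(p,q)$ as a weighted sum of indicator variables $X_i$ for separation at level $i$, bound each $\Pr{X_i=1}$ by $O(\|p-q\|/2^i)$ via a per-axis union bound, and split the geometric sum at $2^i\approx\|p-q\|$. The paper itself supplies no proof for this lemma --- it is stated purely as a citation to~\cite{DBLP:conf/stoc/Charikar02,DBLP:conf/stoc/Indyk04} --- so there is nothing in the text to compare against; your write-up simply fills in the standard argument, and the care you take about the shift being uniform modulo $2^i$ (because $2^i\mid\Delta$) and about wrap-around only lowering $\ell(p,q)$ is exactly the right thing to check in this discrete setting.
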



Thus we may focus on an optimal matching between $R$ and $B$ for the Earth Mover's Distance under $([\Delta]^2, \ell)$.
In fact, a greedy solution computed in a bottom-up fashion yields an optimal solution in this case as shown in~\cite{DBLP:conf/stoc/Charikar02,DBLP:conf/stoc/Indyk04}.
More specifically, we consider the cells $c$ in the quadtree in a bottom-up fashion.
It is partitioned into four cells $c_1,c_2,c_3$ and $c_4$ in the previous level.
As an invariant, assume that
the number of edges matched inside $c_i$ is $\max \{ n_r(c_i), n_b(c_i)\}$, where
$n_r(c_i)$ and $n_b(c_i)$ be the numbers of red and blue points in $c_i$, respectively.
Let $R_c$ and $B_c$ be the sequences of red points and blue points not matched inside $c_i$ for $i=1,2,3,4$, respectively, sorted in the $z$-order. 
Then we match the points in $R_c$ and $B_c$ as many as possible according to  their orderings in the sequences.
For a point $p\in B\cup R$, we call the point in $B\cup R$ matched with $p$ the \emph{mate} of $p$.
Let $M$ be the matching obtained in this way. 
It is known that $M$ is an optimal matching under $([\Delta]^2,\ell)$~\cite{DBLP:conf/stoc/Charikar02,DBLP:conf/stoc/Indyk04}.
Also, notice that 
$\ell(\sol) = \sum_{i=0}^{\log\Delta} \sum_{c \in \mathcal{G}_i} |n_r(c)  - n_b(c)|\cdot 2^{i+1}$.
In the following, we let $\ell(X)$ denote
the total length of all edges of $X$ for an edge set $X\subseteq B\times R$.

\begin{lemma}\label{lem:mate}
    We can find the mate of a given point using $O(\log \Delta)$ range counting queries.
\end{lemma}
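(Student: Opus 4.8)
The plan is to simulate the greedy bottom-up matching construction along the single root-to-leaf path of the quadtree that contains the query point $p$, without ever materializing the matching globally. Fix $p$ and let $c_0 \subset c_1 \subset \dots \subset c_{\log\Delta}$ be the cells on the path from the leaf containing $p$ up to the root. The key observation is that whether $p$ gets matched inside some cell $c_j$, and if so to which point, is determined entirely by the local ordering of the \emph{unmatched} points inside the children of $c_j$; and the set of points of $p$'s color that remain unmatched when we process $c_j$ is exactly $P \cap c_{j-1}$ minus those matched strictly inside $c_{j-1}$, whose count we can obtain from range counting queries. Concretely, for each cell $c_{j-1}$ we can compute $n_r(c_{j-1})$ and $n_b(c_{j-1})$ with two range counting queries, so the number of points of each color matched strictly inside $c_{j-1}$ is $\min\{n_r(c_{j-1}), n_b(c_{j-1})\}$, and the surplus color (the one with more points) contributes $|n_r(c_{j-1}) - n_b(c_{j-1})|$ unmatched points that are passed up to $c_j$.

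First I would determine the \emph{level at which $p$ is matched}: walking up the path, $p$ is matched at the smallest level $j$ such that, inside $c_j$, the unmatched points of $p$'s color and of the opposite color are both nonzero and $p$'s rank (in $z$-order among the unmatched points of its color within $c_j$) does not exceed the number of unmatched points of the opposite color in $c_j$. All of these quantities — the number of unmatched red and blue points in $c_j$, and which of $p$'s ancestors-up-to-$c_{j-1}$ still had $p$ unmatched — follow from $O(\log\Delta)$ range counting queries total (two per level, plus lexicographic/$z$-order rank computations, each of which costs $O(\log\Delta)$ queries by the preliminaries; but since we only need the rank at the one relevant level, this is $O(\log\Delta)$ overall). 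Second, once the matching level $j$ is identified, I would compute $p$'s exact $z$-order rank $k$ among the unmatched points of its color inside $c_j$; by the greedy rule, $p$ is matched to the unmatched point of the opposite color inside $c_j$ whose $z$-order rank is also $k$. Finding that point is again a $k$-th smallest query in $z$-order restricted to $c_j$, but we must restrict to \emph{unmatched} points of the opposite color — this requires expressing "unmatched points of the opposite color inside $c_j$" as a union of ranges. Since the matched-inside points in each child $c_i'$ of $c_j$ are a $z$-order prefix of $P \cap c_i'$, the unmatched ones form a $z$-order suffix within each child, and a bounded number of suffixes over the (constantly many) children can be handled by offsetting the rank appropriately across children; each such rank query is $O(\log\Delta)$ range counting queries.

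The main obstacle is the bookkeeping in the last step: correctly accounting for which points inside the children of $c_j$ are already consumed by matchings strictly below, so that the $k$-th unmatched point of the opposite color is identified as an actual point of $B\cup R$ and not merely a count. I would handle this by noting that within each child $c_i'$ of $c_j$ the greedy algorithm matches a $z$-order prefix of the points of the minority color and an equally long prefix of the majority color, so the "remaining" sequence fed into $c_j$ is obtained by deleting a known-length prefix from each child's majority-color $z$-sequence; locating the $k$-th element of the concatenation of these suffixes (in the child-order used to define $z$-order on $c_j$) reduces to at most four successive $k$-th-smallest queries with running offsets, for a total of $O(\log\Delta)$ range counting queries. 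Summing over the $O(\log\Delta)$ levels and the $O(1)$ rank queries at the decisive level, the whole procedure uses $O(\log\Delta)$ range counting queries, proving the lemma.
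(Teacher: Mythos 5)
Your high-level strategy coincides with the paper's: walk up the root-to-leaf path from $p$, track unmatched counts to find the level at which $p$ is matched, then locate its mate. The gap is in the mechanism you propose for the last step. You assert that within each child $c_i'$ of $c_j$, the greedy bottom-up matching pairs a $z$-order prefix of each color's points, so the surviving (unmatched) points of the majority color form a $z$-order suffix of that color within $c_i'$. This is false. Consider a cell $c'$ whose $z$-first child holds two blues $b_1,b_2$ and whose next child holds a red $r_1$ and a blue $b_3$: the matching inside $c'$ pairs $r_1$ with $b_3$, so the surviving blues are $\{b_1,b_2\}$, a $z$-order \emph{prefix} (not suffix) of $c'$'s blues. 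In general the survivors of a cell are a suffix only of the \emph{concatenation of the survivor sequences of its four children}, a set that is not a contiguous $z$-interval within the cell.

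Because of this, the constantly many $k$-th-smallest queries with running offsets that you propose do not locate the $(k{+}1)$-th surviving red: a $k$-th-smallest-in-$z$-order query over a child selects among \emph{all} of its red points, not among the survivors, and the two differ by a non-contiguous set. What the paper's terse ``such a point can be found using $O(\log\Delta)$ range counting queries'' hides is a second, \emph{descending} walk that your proof omits: starting at $c_j$, at each node decide which child's survivor block contains the target rank, convert the rank to one relative to that child's survivors (using the counts $n_r,n_b$, since survivors are a suffix of the children's concatenated survivors), and recurse down to a leaf, spending $O(1)$ queries per level. The upward rank-tracking you describe for identifying the matching level can be maintained incrementally in $O(1)$ queries per level and is fine; it is the downward locating pass that is missing, and replacing the false prefix/suffix claim with this recursion is needed for the $O(\log\Delta)$ bound.
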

\begin{proof}
    Let $b$ be a given point. Without loss of generality, assume that it is a blue point. We traverse the quadtree $T$ from the leaf corresponding to $b$ until we find its mate, say $r$. 
    Assume that we are at a node corresponding to a cell $c$.
    As an invariant, assume that the mate of $p$ is not matched in a descendant of $c$ (excluding $c$). In other words, no descendant of $c$
    contains both $b$ and $r$.
    Then our goal in this level of $T$ is to determine if $r$ is contained in $c$.
    If so, we want to return $r$.
    
    For this, let $c_1,c_2,c_3$ and $c_4$ denote the children of $c$
    sorted along the lexicographical order of their centers. 
    Notice that for each cell $c'$ of the quadtree, we can compute the number of unmatched blue points inside $c'$, which is
    $\max\{n_r(c_i)-n_b(c_i),0\}$. Symmetrically, we can compute the number of unmatched red points inside $c'$. 
    Using this, we can compute the number $k$ of unmatched blue points in $c_1,\ldots, c_4$ whose $z$-order is less than the $z$-order of $b$ using $O(\log\Delta)$ range counting queries. 
    If this is at least the number of unmatched red points in $c_1,\ldots,c_4$, then $b$ is not matched inside $c$. Thus we need to consider the ancestors. 
    Otherwise, $b$ is matched with the $(k+1)$th unmatched red points (along the $z$-order) in
    $c_1,\ldots, c_4$. Such a point can be found using $O(\log\Delta)$ range counting queries. 
\end{proof}

\subparagraph{A sublinear algorithm for estimating the cost of $\sol$.}
Now it suffices to estimate the cost of $\sol$. 
As before, we consider two cases. 
An edge of $\sol$ is \emph{short} if its length is less than $\Delta/\sqrt s$; otherwise it is a \emph{long} edge.
By the construction of $M$,
we can compute all long edges exactly using $O(s)$ range counting queries. This is because long edges are defined inside the grids of level at least $\log(\Delta/\sqrt s)$.
The number of all grid cells in those grids is $O(s)$. 
Let $L_\llong$ be the total length of long edges. 
Thus we can focus on estimating the total length of short edges of $M$.
Here, we again subdivide the set of all short edges into $t=\log (\Delta/\sqrt s)$ subsets: $E_1,E_2,\ldots,E_t$,
where $E_i$ is the set of all edges of $M$ of lengths 
in $[2^{i-1}, 2^{i})$. 
Then we estimate the number of edges of $E_i$ for every index $i$.
To do this, we choose a random sample of $R$ of size $x=O(s\log^2\Delta)$. For each chosen point, we can find its mate using Lemma~\ref{lem:mate}. Let $S_i$ denote the set of all edges between the sampled points and their mates. 
Then the number of edges in $E_i \cap S_i$ would be a good estimator for $|E_i|$ if $|E_i| \geq n/(s\log \Delta)$.
Otherwise, 
we can ignore it as it induces the additive error of at most $\Delta n/(s^{1.5}\log\Delta)$. This is because every short edges has length at most $\Delta/s^{0.5}$. 
Let $\ell_i = |E_i\cap S_i| \cdot (n/x)$.
Then we have the following lemma. 

\begin{lemma}\label{lem:success-prob-2d}
    If $|E_i|\geq n/(s\log\Delta)$,  we have $\ell_i=\Theta(|E_i|)$ with a constant probability. 
    Otherwise, we have $\ell_i\leq O(|E_i|)$ with a constant probability.
\end{lemma}
\begin{proof}
Let $Y_j$ be the random variable where $Y_j=1$ if the $j$th  edge in $S_i$ is contained in $E_i$, and $Y_j=0$ otherwise. 
Let $Y=Y_1+\ldots+Y_x$, where $x=s\log^2 \Delta$ is the number of sampled points.  
By definition, we have  $\Ex{Y}= x\cdot \frac{|E_i|}{n}$.
Now we analyze the failure probability. If $|E_i| \geq n/(s\log\Delta)$, we use Chernoff bounds as stated in Lemma~\ref{lem:cher}. 
Then the probability that $|\Ex{Y}-Y| \geq \eps \Ex{Y}$
is at most $2\exp(-\eps^2 \cdot p\cdot (s\log^2\Delta)/3)$, 
where $p$ is the probability that $Y_j=1$. 
In our case, $p=|E_i|/n \geq 1/(s\log\Delta)$. 
Thus the failure probability is at most $1/\Delta$ in this case. 
If $|E_i|< n/(s\log\Delta)$, we use Markov's inequality as stated in Lemma~\ref{lem:markov}.
Then the probability that $Y \geq c\dot |E_i|$ is less than a small constant for a large constant $c$. 
Therefore, the lemma holds.
\end{proof}

To amplify the success probability of Lemma~\ref{lem:success-prob-1d}, we repeat the procedure for computing $\ell_i$ $O(\log\Delta)$ times. Then we choose the minimum, and 
let $L=\sum_{i=1}^t 2^i\cdot \ell_i + L_\llong$.
\begin{lemma}
    With a constant probability,
    the estimator $\ell$ approximates $\opt$ within the relative error of $O(\log\Delta)$ and the additive error of $O(n\Delta/s^{1.5})$.
\end{lemma}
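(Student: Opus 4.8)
The plan is to combine the three error sources---the $O(\log\Delta)$ multiplicative blow-up from replacing $\opt$ with $\ell(\sol)$, the rounding of short-edge lengths to powers of two, and the sampling error in estimating $|E_i|$---into a single bound, exactly mirroring the one-dimensional argument but with the new threshold $\Delta/\sqrt s$ for short versus long edges. First I would invoke Lemma~\ref{lem:approx-sol}: taking expectation over the random shift of the quadtree, $\Ex{\ell(\sol)} = O(\log\Delta)\cdot\opt$, and since $\ell(p,q)\geq\|p-q\|$ always, we also have $\ell(\sol)\geq\opt$ deterministically. By Markov's inequality, $\ell(\sol)\leq c\log\Delta\cdot\opt$ with probability, say, $9/10$ for a suitable constant $c$; condition on this event for the rest of the proof. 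From here on the target quantity is $\ell(\sol)$, not $\opt$, and every bound I derive in terms of $\opt$ picks up the factor $O(\log\Delta)$.

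Next I would handle the two pieces of $L=\sum_{i=1}^t 2^i\ell_i + L_\llong$ separately. For $L_\llong$: by the discussion preceding the lemma, long edges live in grid levels $\geq\log(\Delta/\sqrt s)$, there are $O(s)$ such cells, and using the identity $\ell(\sol)=\sum_i\sum_{c\in\mathcal G_i}|n_r(c)-n_b(c)|\cdot 2^{i+1}$ we can read off the total long-edge length exactly with $\tilde O(s)$ queries, so $L_\llong$ contributes no error at all. For the short edges, rounding each edge of $E_i$ up to length $2^i$ at most doubles its length, contributing an additive term bounded by $\ell(\sol)\leq O(\log\Delta)\cdot\opt$. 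Then I would apply Lemma~\ref{lem:success-prob-2d} together with the $O(\log\Delta)$-fold amplification (take the minimum of $O(\log\Delta)$ independent trials of each $\ell_i$): a union bound over the $t=O(\log\Delta)$ indices $i$ gives, with constant probability, that $\ell_i=\Theta(|E_i|)$ for all ``heavy'' indices ($|E_i|\geq n/(s\log\Delta)$) and $\ell_i\leq O(|E_i|)$ for all ``light'' indices simultaneously. On the heavy indices the estimator $\sum 2^i\ell_i$ is within an $O(1)$ factor of the corresponding part of $\ell(\sol)$; on the light indices the worst case is that we miss all of them, but there are at most $t\cdot n/(s\log\Delta) = n/s$ such edges, each of length $<\Delta/\sqrt s$, so the total omitted length---and hence the additive error---is $O(n\Delta/s^{1.5})$. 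Overcounting light edges is harmless by the same $n\Delta/s^{1.5}$ bound. Putting this together, $L$ lies between $\Omega(1)\cdot\ell(\sol) - O(n\Delta/s^{1.5})$ and $O(1)\cdot\ell(\sol) + O(n\Delta/s^{1.5})$, and substituting $\opt\leq\ell(\sol)\leq O(\log\Delta)\cdot\opt$ yields relative error $O(\log\Delta)$ plus additive error $O(n\Delta/s^{1.5})$, which for general $d$ becomes $O(n\Delta/s^{1+1/d})$ with threshold $\Delta/s^{1/d}$.

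The main obstacle I anticipate is bookkeeping the probability budget cleanly: there is the $9/10$ from the Markov step on $\ell(\sol)$, the per-index success probabilities from Lemma~\ref{lem:success-prob-2d} after amplification, and a union bound over $t=O(\log\Delta)$ indices. One must choose the amplification so that each $\ell_i$ succeeds with probability $\geq 1 - \frac{1}{100t}$, so that all of them together plus the $\ell(\sol)$-event still leave constant overall success probability; since amplification to failure probability $1/\poly(\Delta)$ costs only an $O(\log\Delta)$ factor in repetitions, this is affordable and keeps the query count at $\tilde O(s)$. A secondary subtlety is that the amplification uses the \emph{minimum} of the trials, which is exactly what makes the light-index one-sided bound ($\ell_i \leq O(|E_i|)$, equivalently $\ell_i$ not too large) and the heavy-index two-sided bound coexist: the minimum can only decrease $\ell_i$, which preserves the upper bound for light indices and, because each individual heavy trial concentrates, still preserves the lower bound for heavy indices with high probability. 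I would state this explicitly to avoid the appearance that taking a minimum could kill the lower bound. Everything else is a routine repetition of the 1D calculation with $\Delta/s$ replaced by $\Delta/s^{1/d}$ and $\opt$ replaced by $\ell(\sol)$.
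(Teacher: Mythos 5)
Your proposal is correct and follows essentially the same route as the paper: compute $L_{\llong}$ exactly, round short-edge lengths to powers of two (costing a factor of 2 in $\ell(\sol)$), invoke Lemma~\ref{lem:success-prob-2d} with $O(\log\Delta)$-fold amplification and a union bound over the $t=O(\log\Delta)$ length classes to get an $O(1)$-relative estimate of the heavy classes and an $O(n\Delta/s^{1.5})$ bound on the contribution of the light ones, and finally pass from $\ell(\sol)$ to $\opt$ via $\opt\leq\ell(\sol)$ together with a Markov argument on $\Ex{\ell(\sol)}=O(\log\Delta)\cdot\opt$. The one organizational difference is cosmetic: the paper deals with the random shift by repeating the entire algorithm $O(\log\Delta)$ times and taking the minimum estimator (valid because $\ell(\sol)\geq\opt$ deterministically), whereas you condition once on the Markov event $\ell(\sol)\leq O(\log\Delta)\cdot\opt$ with probability $9/10$; both are standard and yield the same conclusion. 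Your remark about why taking the minimum preserves the lower bound for heavy indices (each individual trial already has failure probability $\leq 1/\Delta$) is a point the paper leaves implicit but is worth stating.
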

\begin{proof}
    Since we can compute the total length of long edges exactly, it suffices to focus on short edges. 
    For each edge of $E_i$, we round up its length to $2^i$.
    Here, we have the additive error of $2\cdot\ell(M)$. 
    Now we use Lemma~\ref{lem:success-prob-1d}. Consider the event that we have the desired bounds for all indices $i$ in Lemma~\ref{lem:success-prob-2d}. Due to the amplification of success probability we made before,
    this total success probability is a constant. 
    If $|E_i|\geq n/(s\log\Delta)$, the estimator $\ell_i$ 
    has an additive error of $O(1)\cdot \opt$ in total for all such indices by Lemma~\ref{lem:success-prob-2d}. 
    If $|E_i| < n/(s\log\Delta)$, we have $\ell_i\leq O(|E_i|)$.
    Since the number of such edges is $n/s$ in total,
    and the length of each such edge is $\Delta/s^{0.5}$,
    this induces the additive error of $n\Delta/s^{1.5}$. 
    Therefore, we can estimate the cost of $M$ 
the relative error of $O(1)$ and the additive error of $O(n\Delta/s^{1.5})$. 

    Thus assuming that $M$ is within a multiplicative error of $O(\log\Delta)$ from $\opt$, our estimator estimates $\opt$
    within the relative error of $O(\log\Delta)$ and the additive error of $O(n\Delta/s^{1.5})$. 
    Then we repeat the entire algorithm $O(\log\Delta)$ times and take the smallest estimator $L$. 
    Since the cost of $M$ is always at least $\opt$, we can use 
    Markov's inequality in Lemma~\ref{lem:markov} to analyze the success probability.
    Therefore, the lemma holds with a constant probability.
\end{proof}

\section{Lower bounds for the Earth Mover's Distance}
\label{sec:EMD} 


In this section, we show that for any parameter $s$, any randomized algorithm that approximates the Earth Mover's Distance between two sets within an additive error of $O(n\Delta/s^{1+1/d})$ requires at least $s$ orthogonal range counting queries.  
More specifically, we prove: 

\begin{lemma}\label{lem:lowerbound}
  For any parameter $s$ and any integers $d\geq 1$ and $\Delta\geq 1$, any randomized algorithm for approximating the Earth Mover's Distance between two point sets of size $n$ in a discrete space $[\Delta]^d$ within
  additive error $O(n\Delta/s^{1+1/d})$ uses $\Omega(s)$  range counting queries.
\end{lemma}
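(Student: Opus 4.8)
The plan is to apply Yao's minimax principle: it suffices to construct a distribution $\mathcal D$ over pairs of $n$-point sets $(R,B)$ in $[\Delta]^d$ on which every \emph{deterministic} algorithm making fewer than $cs$ range counting queries, for a suitably small constant $c>0$, fails with probability more than $1/3$ to report a value within additive error $O(n\Delta/s^{1+1/d})$ of $\emd(R,B)$. We design $\mathcal D$ as a ``needle in a haystack'': there is a null instance with $\emd=0$, and a random draw from $\mathcal D$ plants, in one randomly chosen location out of $\Theta(s)$, a perturbation that raises $\emd$ to $\Theta(n\Delta/s^{1+1/d})$ but is invisible to almost every single query.

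Concretely, I would partition $[\Delta]^d$ into $\Theta(s)$ pairwise disjoint cells, each a box of diameter $\Theta(\Delta/s^{1/d})$, and place $n/\Theta(s)$ points of $R$ and $n/\Theta(s)$ points of $B$ in each cell. In the null instance, inside each cell all of that cell's red and blue points coincide with one anchor point, so the transport cost is $0$ and $\emd(R,B)=0$. To draw from $\mathcal D$, pick a cell $\cell^\star$ uniformly at random and, inside $\cell^\star$ only, move its $n/\Theta(s)$ blue points from the anchor to a second point of $\cell^\star$ at distance $\Theta(\Delta/s^{1/d})$, leaving all red points and all other blue points untouched. The first thing to check is that this cannot be repaired by rerouting mass between cells: a triangle-inequality argument shows that any matching carrying an excess unit of red mass out of $\cell^\star$ and a unit of blue mass into $\cell^\star$ pays at least the planted displacement per unit, so on the perturbed instance $\emd(R,B)=\Theta\bigl(\tfrac{n}{s}\cdot\tfrac{\Delta}{s^{1/d}}\bigr)=\Theta(n\Delta/s^{1+1/d})$. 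Hence an algorithm with additive error less than half this quantity must distinguish the null instance from a random draw of $\mathcal D$.

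Next I would bound the information content of one query. Since $R$ is identical in both instances, a query box $Q$ can return different counts on the null and on the $\cell^\star$-perturbed instance only if $Q$ separates the two possible positions of the blue mass of $\cell^\star$, i.e.\ only if $\partial Q$ passes between them inside $\cell^\star$. The key structural claim is that, for any fixed $Q$, this holds for only $O(1)$ of the $\Theta(s)$ candidate cells, so a fixed query is ``informative'' with probability $O(1/s)$ over the choice of $\cell^\star$. Granting this, a union bound over the queries of any algorithm making $q$ of them shows that with probability $1-O(q/s)$ every answer coincides with the one on the null instance; conditioned on that event the whole transcript is identical in the two cases, so the algorithm outputs the same number on the null and on the perturbed instance and therefore errs by $\Omega(n\Delta/s^{1+1/d})$ on at least one of them. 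Taking $q=o(s)$ pushes the failure probability above $1/3$, and Yao's principle then yields the claimed $\Omega(s)$ bound for randomized algorithms.

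The structural claim is the crux, and the only place where the dimension enters. For $d=1$ it is trivial: a cell is an interval and a query box is an interval, so its two endpoints lie in $O(1)$ cells — this already gives tightness of Theorem~\ref{thm:emd:additive}. For $d\ge 2$ a single facet of a query box is a $(d-1)$-dimensional box that can meet a whole row of cells of a naive axis-aligned grid, so I would not use the plain grid of side $\Delta/s^{1/d}$; instead I would choose the cells together with, inside each cell, the axis along which the perturbation displaces the blue mass and the offset of the two blue positions (for example, assigning one displacement axis per cell, staggering the offsets, and applying a random global shift) so that every axis-aligned hyperplane separates the two blue positions of only $O(1)$ cells, whence every box, having $2d$ facets, is informative for only $O(d)=O(1)$ cells. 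Exhibiting such a layout while keeping the cells disjoint, keeping the planted displacement $\Theta(\Delta/s^{1/d})$, and fitting $\Theta(s)$ cells into $[\Delta]^d$ is the technical heart of the argument (and the step I expect to require the most care); the transport lower bound and the union bound above are then routine.
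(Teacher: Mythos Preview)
Your high-level plan---Yao's principle on a needle-in-a-haystack distribution over $\Theta(s)$ cells, with a cost gap of $\Theta(n\Delta/s^{1+1/d})$---matches the paper's, and for $d=1$ your argument is essentially complete and coincides with the paper's. The gap is the structural claim for $d\ge 2$: it is \emph{impossible} to place $\Theta(s)$ anchor/displaced pairs in $[\Delta]^d$, each with displacement $\Theta(\Delta/s^{1/d})$, so that every axis-aligned hyperplane separates only $O(1)$ of them. Indeed, writing $J_c^{(i)}$ for the interval between the $i$-th coordinates of the two blue positions in cell $c$, the hyperplane $\{x_i=t\}$ separates the pair of $c$ iff $t\in J_c^{(i)}$; if every such hyperplane separates at most $K$ pairs then $\sum_c|J_c^{(i)}|=\int_0^\Delta|\{c:t\in J_c^{(i)}\}|\,dt\le K\Delta$ for each $i$, and summing over $i$ gives $\sum_c\|p_c-q_c\|_1\le dK\Delta$. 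Since $\|p_c-q_c\|_1\ge\|p_c-q_c\|_2=\Theta(\Delta/s^{1/d})$ for each of the $\Theta(s)$ cells, this forces $K=\Omega(s^{1-1/d}/d)$. Hence some halfspace query $[0,t]\times[0,\Delta]^{d-1}$ is informative for $\Omega(s^{1-1/d})$ cells at once, and your union bound collapses. Staggering offsets or adding a random global shift cannot help: the inequality above holds for every fixed placement of the pairs, and a shift is recoverable in $O(\log\Delta)$ queries by binary search on a known red point.

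What the paper does instead is make the per-cell gadget richer than a single displaced blue clump. It builds two gadgets, ``near'' and ``far'', recursively---placing color-swapped copies of the $(d{-}1)$-dimensional gadget on two opposite facets of the cell---so that the projection of the red (respectively blue) multiset onto every coordinate hyperplane is identical in the two gadgets. The payoff is this: if a query box $Q$ meets a cell $S$ without having a \emph{corner} in $S$, then $Q\cap S$ necessarily spans $S$ fully in at least one coordinate, so the count inside $Q\cap S$ is determined by such a projection and is therefore the same for both gadgets. A query can thus distinguish near from far only in the at most $2^d=O(1)$ cells containing its corners, and now the union bound genuinely yields $\Omega(s)$. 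Your two-position gadget cannot match these projections---the planted displacement is visible in at least one coordinate's marginal---so this marginal-matching trick is precisely the missing ingredient.
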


If points from $B\cup R$ may coincide, achieving any multiplicative approximation factor is impossible. Even when no two points of $B\cup R$ coincide, Lemma~\ref{lem:lowerbound} shows that no randomized algorithm with a multiplicative error of $O(\Delta/s^{1+1/d})$ uses $o(s)$ range counting queries.


\subsection{A lower bound for points in 1D}
In this section, we prove Lemma~\ref{lem:lowerbound} in the one-dimensional space. This will be extended for a higher dimensional space later. 
To analyze the lower bound on the additive error of a Monte Carlo algorithm, we use the following version of Yao's Minmax Theorem~\cite{arora2009computational}: 
    Assume that there exists a randomized algorithm using $s$ range counting queries with success probability at least $2/3$. Then for any distribution $\mu$ of instances, there exists a deterministic algorithm using $s$ range counting queries that returns desired answers on instances chosen from $\mu$ with probability at least $2/3$.
Here, the success probability of a randomized algorithm
is the probability that the output is within our desired bound. 

Thus it suffices to construct a distribution $\mu$ of instances
such that no deterministic algorithm using $s$ range counting queries has success probability at least $2/3$. 
For this, we define two types of gadgets on a segment $S$ of length $\Delta/(8s)$: the \emph{near} gadget and \emph{far} gadget. 
See Figure~\ref{fig:lowerbound}.
For the far gadget, we put 
$4n/s$ red points near the left endpoint of $S$ and $4n/s$ blue points near the right endpoint of $S$.
For the near gadget, we put 
$2n/s$ red points and $2n/s$ blue points alternately starting from the left endpoint of $S$ so that the distance between any two consecutive points is one, and the starting point lies in the left endpoint of $S$.
Additionally, we put 
$2n/s$ red points and $2n/s$ blue points alternately so that the distance between any two consecutive points is one, and the last point lies in the right endpoint of $S$. 
Notice that the cost induced by the near gadget is $\Theta(n/s)$, but the cost induced by the far gadget is $\Theta(n\Delta/s^2)$.
Our strategy is to place $O(s)$ copies of the near gadget inside $[\Delta]$ and to hide one far gadget inside $[\Delta]$ with probability $1/2$. 
In this way, we can construct two types of instances, one with cost $n$ and one with cost $\Theta(n\Delta/s^2)$.
\begin{figure}
    \centering
    \includegraphics[width=\linewidth]{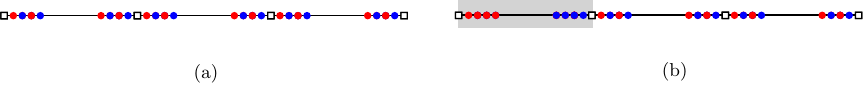}
    \caption{(a) All segments use the near gadget. The cost of this instance is $n$. (b) The gray segment uses the far gadget. The cost of this instance is $\Theta(n\Delta/s^2)$.}
    \label{fig:lowerbound}
\end{figure}

More specifically, we define a distribution $\mu$ of instances as follows.
We partition $[\Delta]$ into $8s$ segments each of length $\Delta/(8s)$. 
Let $\mathcal S=\{S_1,S_2,\ldots,S_{8s}\}$ be the set of resulting segments along the axis. See Figure~\ref{fig:lowerbound}. 
With probability $1/2$, we let $t=0$.
Then with probability $1/2$, we 
choose one index $t$ from $1,2,\ldots,s$ uniformly at random. 
That is, the probability that a fixed index $i$ is chosen is $1/(16s)$ for $i=1,2,\ldots,8s$. 
Then for each index $i\neq t$, we 
place the near gadget on $S_i$.
For index $i=t$, we place the far gadget on $S_i$. 
Notice that we do not place the far gadget anywhere if $t=0$. 
Thus it suffices to show that,
for any deterministic algorithm using $s$ range counting queries,
the probability that it
estimates the cost of $\opt$ within an additive error of $O(n\Delta/s^{2})$ 
on the instances chosen from $\mu$ is less than $2/3$. 

\begin{restatable}{lemma}{emdlb}\label{lem:distribution-1D}
    For any deterministic algorithm using $s$ range counting queries, 
    the probability that it estimates the cost of an instance chosen from $\mu$ within an additive error of $O(n\Delta/s^2)$ is less than $2/3$.   
\end{restatable}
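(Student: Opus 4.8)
The plan is to apply the stated version of Yao's minimax principle, so that it suffices to show that every \emph{deterministic} algorithm $A$ issuing at most $s$ range counting queries (recall that such a query reports the number of red points, or the number of blue points, in a query range) outputs an estimate within additive error $O(n\Delta/s^2)$ of the true cost with probability less than $2/3$ over a random instance from $\mu$. Write $I_0$ for the instance obtained when $t=0$ (a near gadget on every segment), and $I_j$ for the instance in which the far gadget sits on segment $S_j$. Two facts will do all the work. First, on every segment both the near gadget and the far gadget contain exactly $4n/s$ red points and exactly $4n/s$ blue points, and $I_0$ agrees with $I_j$ on every segment other than $S_j$; hence any query range that is either disjoint from $S_j$ or contains $S_j$ entirely returns the same red-count and the same blue-count on $I_0$ and on $I_j$. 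Second, by construction $\emd(I_0)=\Theta(n)$ while $\emd(I_j)=\Theta(n\Delta/s^2)$, and these differ by $\Omega(n\Delta/s^2)$; so, choosing the constant hidden in the additive-error bound small enough, no single value is a valid estimate for both $I_0$ and $I_j$.

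First I would analyze the transcript of $A$ on the fixed instance $I_0$. Since $A$ is deterministic, it issues a fixed sequence of at most $s$ query ranges $Q_1, Q_2, \ldots$. Each $Q_k=[a_k,b_k]$ can \emph{cut} (partially but not fully overlap) at most two of the $8s$ segments --- the segment containing $a_k$ and the segment containing $b_k$ --- because every segment strictly between these two is fully contained in $Q_k$ and every other segment is disjoint from $Q_k$. Let $T$ be the set of all segments cut by some $Q_k$; then $|T|\le 2s$, so at least $8s-2s=6s$ of the $8s$ segments are never cut by $A$'s queries on $I_0$.

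For each index $j$ with $S_j\notin T$, I would show by induction on the query index that $A$ run on $I_j$ generates exactly the same transcript --- the same queries and the same answers --- as $A$ run on $I_0$. The first query is $Q_1$ regardless of the input; since $Q_1$ does not cut $S_j$, the first fact above gives that $Q_1$ returns identical counts on $I_0$ and on $I_j$; hence the second query is again $Q_2$; and so on down the whole sequence. Consequently $A$ outputs the same value on $I_j$ as on $I_0$, and by the second fact this value cannot be within additive error $O(n\Delta/s^2)$ of the cost of both $I_0$ and $I_j$. Therefore $A$ errs on at least one instance in $\{I_0\}\cup\{I_j : S_j\notin T\}$.

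It remains to collect probabilities. If $A$ errs on $I_0$, then it errs with probability at least $\Pr{t=0}=1/2$. Otherwise $A$ errs on $I_j$ for every $j$ with $S_j\notin T$, and the probability that the random instance is such an $I_j$ is at least $\frac12\cdot\frac{8s-|T|}{8s}\ge\frac12\cdot\frac34=\frac38$, so $A$ errs with probability at least $3/8$. In either case $A$ succeeds with probability at most $5/8<2/3$, and Yao's principle then yields the lemma. I expect the only real obstacle to lie in the first fact: one must design the two gadgets so that they are color-balanced within each segment, and then carefully verify that ``$S_j$ not cut'' genuinely forces identical oracle answers even against an adaptive worst-case algorithm; everything after that is an elementary counting argument with no hidden subtlety.
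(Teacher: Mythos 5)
Your proposal is correct and follows essentially the same route as the paper: model the deterministic algorithm as a decision tree, observe that a query range returns identical red and blue counts on $I_0$ and $I_j$ unless one of its endpoints lands inside $S_j$, conclude that the algorithm's transcript---and hence its output---is identical on $I_0$ and on every $I_j$ whose segment is not ``cut,'' and then do a probability count. Your write-up is actually somewhat tighter than the paper's own: you make the transcript-equivalence induction explicit, and you bound the number of cut segments by a clean counting argument ($|T|\le 2s$, so at most a $1/4$ fraction of segments can be cut), whereas the paper uses the informal product bound $1-(1-1/(4s))^s<1/3$, which is stated as if the query endpoints were independent of the random instance---an issue your fixed-transcript argument sidesteps. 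Both yield a success probability bounded away from $2/3$, so the conclusions agree.
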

\begin{proof}
    Assume to the contrary that such a deterministic algorithm $\mathcal A$ exists. 
    For instances in which we use the far gadget, its cost is $O(n\Delta/s^2)$. On the other hand, the instance for which we do not use the far gadget has cost exactly $n$. 
    Thus $\mathcal A$ must determine whether or not a given instance has the far gadget with probability at least $2/3$. 
    
    We can consider $\mathcal A$ as a decision tree
    of depth $s$. Here, each node corresponds to an orthogonal range, and each leaf corresponds to the output. 
    Given a query interval $Q$, we say that it \emph{hits} a segment $S$ of $\mathcal S$ if an endpoint of $Q$ is contained in $S$. 
    Note that the algorithm cannot get any information for the segments not hit by queries. 
    In particular, for any segment $S$ and any interval range $Q$, the number of red (and blue) points in $S\cap Q$ remains the same unless $S$ contains an endpoint of $Q$. 
    On the other hand, $\mathcal A$ uses only $s$ range counting queries, and thus the probability that any of the $s$ range queries hits the far gadget is less than $2/3$.

    More formally, we consider the event that an instance using the far gadget is chosen from $\mu$. This even happens with probability at least $1/2$ by construction. Thus to get a success probability at least $2/3$, we need to detect the far gadget. 
    Recall that we can detect the far gadget only when it is hit by $Q$.  On the other hand, each query hits at most two segments. Notice that the probability that a fixed query range hits the far gadget is $2/(8s)=1/(4s)$ as the total number of segments in $\mathcal S$ is $8s$. 
    Therefore, the probability that 
    at least one of the $s$ queries hits the far gadget is $1-(1-(4s)^{-1})^{s}) < 1/3$. In other words, the failure probability is at least $2/3$ assuming that we are given an  instance using the far gadget. 
    Since the probability that an instance using the far gadget is chosen is $1/2$, the total failure probability is larger than $1/3$, and thus the lemma holds.
\end{proof}

\subsection{Lower bounds for dimensions 2 and higher}
We now extend the construction of the 1D case to the 2D case. 
We again use Yao's Minmax Theorem.
Thus it suffices to construct a distribution $\mu$ of instances
such that no deterministic algorithm using $s$ range counting queries has success probability at least $2/3$. 

For this, we define two types of gadgets on a square $S$ of side length $\Delta/(8\sqrt{s})$: the \emph{near} gadget and \emph{far} gadget. 
See Figure~\ref{fig:lowerbound-2D}.
For the far gadget, we put two copies of the far gadget constructed from the 1D case, one on the upper side of $S$ and one on the lower side of $S$, so that a red point comes first in the upper side, and a blue point comes first in the lower side. 
For the near gadget, we do the same: Put two copies of the near gadget constructed from the 1D case, one on the upper side and one on the lower side, so that a red point comes first in the upper side, and a blue point comes first in the lower side. 
In this way, every gadget has $8n/s$ red points and $8n/s$ blue points.

\begin{figure}
    \centering
    \includegraphics{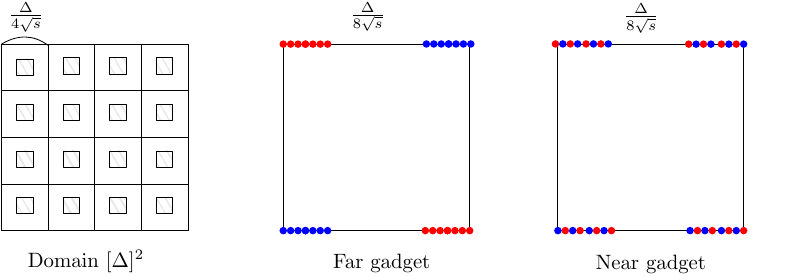}
    \caption{We partition $[\Delta]^2$ into $16s$ squares. In each square, we place either the far gadget or the near gadget.  
    The cost of the far gadget is at least $n\Delta/s^2$,
    and the cost of the near gadget is $\Theta(n/s)$.}
    \label{fig:lowerbound-2D}
\end{figure}

Now define a distribution $\mu$ of instances as follows.
We construct the grid on $[\Delta]^2$ consisting of $16s$ cells  each of side length $\Delta/(4\sqrt s)$. See Figure~\ref{fig:lowerbound-2D}. 
We choose a cell $W$ as follows. 
With probability $1/2$, we let $W=\emptyset$. 
Then with probability $1/2$, we 
choose one cell from the $16s$ cells uniformly at random, 
and let it $W$. 
Then for each cell in the grid, except for $W$, we 
place the near gadget in the middle of the cell.
Then we place the far gadget in the middle of $W$. 
Notice that we do not place the far gadget anywhere if $W=\emptyset$. 
Thus it suffices to show that,
for any deterministic algorithm using $s$ range counting queries,
the probability that it
estimates the cost of $\opt$ within an additive error of $O(n\Delta/s^{1.5})$ 
on an instance chosen from $\mu$ is less than $2/3$. 
We can prove this similarly to Lemma~\ref{lem:distribution-1D}.

\begin{restatable}{lemma}{lbtwo}\label{lem:distribution-2D}
    For any deterministic algorithm using $s$ range counting queries, 
    the probability that it estimates the cost of an instance chosen from $\mu$ within an additive error of $O(n\Delta/s^{1.5})$ is less than $2/3$. 
\end{restatable}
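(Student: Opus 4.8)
The plan is to mimic the proof of Lemma~\ref{lem:distribution-1D} almost verbatim, transferring the two key facts from the 1D construction to the 2D one: first, that the two families of instances (with and without the far gadget) have costs that differ by an amount exceeding the allowed additive error, so any successful algorithm must effectively decide which family the instance came from; and second, that a single orthogonal range query can reveal information about only a bounded number of cells of the $16s$-cell grid, so $s$ queries cannot locate a randomly hidden far gadget with probability $2/3$.

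First I would pin down the cost estimates. Each near gadget is built from two copies of the 1D near gadget, so its cost is $\Theta(n/s)$, and summing over the $O(s)$ cells gives a total cost of $\Theta(n)$ for an instance with $W=\emptyset$. The far gadget, being two copies of the 1D far gadget on a square of side length $\Theta(\Delta/\sqrt s)$, has cost $\Theta(n\Delta/s^{1.5})$ (the $4n/s$ red–blue pairs must each be moved distance $\Theta(\Delta/\sqrt s)$), which strictly dominates $n$ for the relevant range of $s$. Hence an instance containing the far gadget has cost $\Theta(n\Delta/s^{1.5})$, so within additive error $O(n\Delta/s^{1.5})$ — with a sufficiently small hidden constant, which I would make explicit — the algorithm's answer determines whether the far gadget is present. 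By Yao's Minimax Theorem it then suffices to show that no deterministic algorithm making $s$ range counting queries correctly decides this with probability $\ge 2/3$ over $\mu$.

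Next I would argue the information-theoretic obstruction. Model the deterministic algorithm as a decision tree of depth $s$ whose internal nodes are orthogonal boxes $Q$ in $[\Delta]^2$. Say $Q$ \emph{hits} a grid cell $C$ if $Q$ is not disjoint from $C$ and does not fully contain $C$ — equivalently, a bounding facet of $Q$ passes through $C$. The crucial observation, exactly as in 1D, is that the count $|P\cap C\cap Q|$ is insensitive to whether $C$ carries the near or the far gadget unless $Q$ hits $C$: if $Q\cap C=\emptyset$ the count is $0$, and if $C\subseteq Q$ the count equals the total number of points of that gadget, which is $8n/s$ for both gadget types by construction. An axis-parallel box in the plane has four facets, so its boundary can cross at most $O(\sqrt s)$ of the grid columns and $O(\sqrt s)$ of the grid rows; however, this naive bound is too weak. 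The main obstacle is precisely this: in 2D a single box can hit $\Theta(\sqrt s)$ cells (a thin box spanning one row or one column), so $s$ queries could in principle hit all $16s$ cells, and the simple union-bound argument of the 1D proof breaks. The fix is to enrich the gadget so that a query hitting a cell still learns nothing: one wants each near and far gadget to be \emph{indistinguishable under any sub-box that does not tightly bracket it}, so that only a query whose boundary lands \emph{between} the two internal rows (upper/lower sides) of a cell — a measure-zero-like event made rare by a random shift of the gadget's vertical placement within its cell — reveals the gadget type. Concretely I would place each gadget with a random offset inside its cell and argue that a fixed query's boundary separates the upper from the lower sub-gadget of the target cell with probability $O(1/s)$, so that across $s$ queries the far gadget is detected with probability $<1/3$, and combining with the $1/2$ prior on $W\neq\emptyset$ gives total failure probability $>1/3$.

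If the random-offset trick proves delicate, a cleaner alternative — which I suspect is what the authors intend — is to weaken what the adversary must do: replace "$16s$ cells" by a finer grid of $\Theta(s^2)$ cells of side $\Theta(\Delta/s)$ and hide the far gadget in one of them, so that even though each box hits $\Theta(s)$ cells, the $s$ queries hit only $\Theta(s^2)$ cells out of $\Theta(s^2)$... no — one instead uses that each box hits $O(\sqrt{s^2})=O(s)$ cells is still too many. So the honest route is the random-shift argument above, or alternatively appealing to the already-fixed global random shift of the quadtree grid mentioned in the preliminaries to absorb the alignment issue. I would therefore conclude: by Yao's Minimax Theorem combined with the decision-tree argument, any randomized algorithm estimating the cost within additive error $O(n\Delta/s^{1.5})$ with probability $\ge 2/3$ must use $\Omega(s)$ range counting queries, which proves Lemma~\ref{lem:distribution-2D}, and hence Lemma~\ref{lem:lowerbound} in dimension $2$; the extension to $d>2$ follows by stacking $d-1$ copies of the 1D far/near gadgets along the remaining coordinate directions inside a cube of side $\Theta(\Delta/s^{1/d})$, replacing $\sqrt s$ by $s^{1/d}$ throughout.
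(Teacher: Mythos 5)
You have the right high-level structure (Yao's minimax, decision tree of depth $s$, a ``hitting'' argument over the $16s$ grid cells, combining the $1/2$ prior on $W\neq\emptyset$ with a failure-to-hit bound), but there is a genuine gap in the middle that leads you to propose an unnecessary and incomplete workaround. You claim the 1D argument ``breaks'' because a query box's boundary can cross $\Theta(\sqrt s)$ cells, and you then try to repair this with a random vertical offset of the gadgets. That diagnosis is based on the wrong notion of ``hitting.'' The paper defines a query $Q$ to hit a cell $S$ only when a \emph{corner} of $Q$ lies inside $S$, and each axis-parallel rectangle has only four corners, so each query hits at most four cells; there is no $\Theta(\sqrt s)$ blow-up to repair.

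The substance you are missing is \emph{why} the corner definition suffices: the gadget construction is engineered precisely so that slab-type intersections are uninformative. If no corner of $Q$ lies in $S$, then $Q\cap S$ is empty, all of $S$, a full-width horizontal slab, or a full-height vertical slab. The 2D gadget consists of two copies of the 1D gadget on the upper and lower sides of the gadget square, placed in \emph{color-complementary} order (red first on top, blue first on the bottom). Hence at every $x$-position there is exactly one red and one blue across the two rows, for both the near and the far gadget, so any vertical slab of width $j$ returns $j$ red and $j$ blue regardless of gadget type. Each individual row contains $4n/s$ red and $4n/s$ blue points in both gadget types, so a horizontal slab that captures $0$, $1$, or $2$ full rows is likewise type-independent. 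Consequently the count $|R\cap S\cap Q|$ (and $|B\cap S\cap Q|$) is identical for the near and far gadget unless a corner of $Q$ lands in $S$, which is exactly the invariance the paper exploits. Once this is in place, the rest of your sketch matches the paper: at most four corners per query, $4/(16s)=1/(4s)$ chance of hitting the hidden cell, $1-(1-1/(4s))^s<1/3$ over $s$ queries, and the $1/2$ prior gives overall failure probability above $1/3$. Your random-offset alternative is both unnecessary and incomplete (you do not finish the analysis and you hedge on whether it works), so as written the proposal does not constitute a proof.
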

\begin{proof}
    Assume to the contrary that such a deterministic algorithm $\mathcal A$ exists. 
    For instances we use the far gadget, its cost is $\Theta(n\Delta/s^{1.5})$. On the other hand, the instance we do not use the far gadget has cost is $\Theta(n)$. 
    Thus $\mathcal A$ must determine whether or not a given instance has the far gadget with success probability at least $2/3$. 
    
    We can consider $\mathcal A$ as a decision tree
    of depth $s$. Here, each node corresponds to an orthogonal range, and each leaf corresponds to the output. 
    Given a query rectangle $Q$, we say that it \emph{hits} a grid cell $S$ if a corner of $Q$ is contained in $S$. 
    Note that the algorithm cannot get any information for the segments not hit by queries. 
    In particular, for a cell $S$ and a rectangle $Q$,
    the number of red (and blue) points in $S\cap Q$ is the same
    unless a corner of $Q$ is contained in $S$. 
    Then the remaining part of this proof is exactly the same as before. 
    We consider the event that an instance using the far gadget is chosen. This even happens with probability $1/2$ by construction. Thus to get a success probability at least $2/3$, we need to detect the far gadget. 
    Each query hits at most four grid cells. Therefore, 
    the probability that one range counting query hits the far gadget is $4/(16s)=1/(4s)$.     
    The probability that 
    at least one of the $s$ queries hits the far gadget is $1-(1-(4s)^{-1})^{s} < 1/3$.
    Since the probability that an instance using the far gadget is chosen is $1/2$, the total failure probability is larger than $2/3$, and thus the lemma holds.
\end{proof}

\subparagraph{Extension to a higher dimensional space.}
The construction of an input distribution used in the two-dimensional case can be easily extended to a higher dimensional space. We again define two gadgets on a $d$-dimensional cube $S$ of side length $\Delta/(4s^{1/d})$.
For the far gadget, we put two copies of the far gadget we constructed from the $(d-1)$-dimensional case, one on one facet of $S$ and one on its parallel facet of $S$. 
We do the same for the near gadget using two copies of the near gadget we constructed from the $(d-1)$-dimensional case. 

Now we define a distribution $\mu$ of instances as follows.
We construct the grid on $[\Delta]^d$ consisting of $16s$ cells  each of side length $\Delta/(4s^{1/d})$.  
We choose a cell $W$ as follows. 
With probability $1/2$, we let $W=\emptyset$. 
Then with probability $1/2$, we 
choose one cell from the $16s$ cells uniformly at random, 
and let it $W$. 
Then for each cell in the grid, except for $W$, we 
place the near gadget in the middle of the cell.
Then we place the far gadget in the middle of $W$. 
Notice that we do not place the far gadget anywhere if $W=\emptyset$. 
Thus it suffices to show that,
for any deterministic algorithm using $s$ range counting queries,
the probability that it
estimates the cost of $\opt$ within an additive error of $O(n\Delta/s^{1+1/d})$ on an instance chosen from $\mu$ is less than $2/3$. 
We can prove this similarly to Lemma~\ref{lem:distribution-2D}.
Here, the cost difference between two types of instances 
is $n/s$ times the side length of each cell, which is $O(n\Delta/s^{1+1/d})$. 
This is because, in a $d$-dimensional space, the side length of each cell is 
$O(\Delta/s^{1/d})$, and thus the lower bound on the additive error increases accordingly.

\section{Sampling a non-empty cell uniformly at random}
In this section, we show how to sample a grid cell (almost) uniformly at random from all grid cells containing points of $P$. Let $\mathcal G$ be a grid 
of certain side length $r$ imposed on the discrete space $[\Delta]^2$. 
Let $\mathcal C= \{\cell_1,\cell_2,\ldots,\cell_{m}\}$ denote the set of all grid cells of $\mathcal G$ containing points of $P$.
We say such cells are \emph{non-empty}.
For each cell $\cell_i$, we let $n_i$ denote the number of points contained in $\cell_i$. 
Here, we say sampling is $c$-\emph{approximate} uniform if
the probability that each non-empty cell is chosen is in
$[1/(cm), c/ m]$ 
for a constant $c$ with $c>1$.
Or, we simply say sampling is \emph{almost} uniform.

One might want to use the telescoping sampling of Lemma~\ref{lem:telescoping} directly: 
In particular, we choose a point $v$ of $P$ uniformly at random, and return the cell of $\mathcal G$ containing $v$.
However, the probability that a cell $c_i$ is chosen in this way is $n_i/n$, not $1/m$. 
%
%
So, instead, we use a two-step sampling procedure as stated in Algorithm~\ref{alg:cell:sampling}. 
Here, we are required to check if it has at most $\sqrt n$ non-empty cells. 
If so, we need to find all of them explicitly. 
This can be done $O(\sqrt n\log\Delta)$ range counting queries: 

\begin{restatable}{lemma}{heavycell}\label{lem:compute-all-nonempty}
 If the number of non-empty cells is $O(\sqrt n)$
 we can compute all of them using $O(\sqrt n\log\Delta)$ range counting queries.
\end{restatable}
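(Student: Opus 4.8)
The goal is to enumerate all non-empty cells of $\mathcal G$ assuming there are $O(\sqrt n)$ of them, using only $O(\sqrt n \log\Delta)$ range counting queries. The natural idea is to recursively peel off non-empty cells one at a time: repeatedly find \emph{one} non-empty cell we have not seen yet, record it, and continue. Each extraction should cost $O(\log\Delta)$ queries, so $O(\sqrt n)$ extractions give the claimed total of $O(\sqrt n\log\Delta)$.

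\begin{proof}[Proof sketch]
We maintain the set $\mathcal C'$ of non-empty cells discovered so far, initially empty. In one round we do the following. Let $R$ be the bounding box $[\Delta]^2$; we first query $R$ to learn $|P|=n$. We then locate the lexicographically smallest point of $P$ that does \emph{not} lie in any cell of $\mathcal C'$. Concretely, order the cells of $\mathcal G$ lexicographically by their lower-left corners; the cells of $\mathcal C'$ together with the axis-aligned grid lines partition $[\Delta]^2$ into $O(|\mathcal C'|)$ orthogonal ``slabs'' of grid cells lying strictly before the cells of $\mathcal C'$ in this order. By issuing a range counting query on the union representation of these slabs --- or, more simply, by doing a single binary search over the grid-cell index as in the ``lexicographically $k$-th smallest point'' routine of the Preliminaries, while at each step counting only points in cells not already in $\mathcal C'$ --- we can find, using $O(\log\Delta)$ range counting queries, the first grid cell (in lexicographic order) that contains a point of $P$ and is not in $\mathcal C'$. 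Such a cell exists as long as $|\mathcal C'|<m$, since $P\ne\emptyset$ and the cells of $\mathcal C'$ cannot cover all points. We add this new cell to $\mathcal C'$ and repeat. After $m=O(\sqrt n)$ rounds, $\mathcal C'=\mathcal C$, and no further non-empty cell outside $\mathcal C'$ is found, which we detect by the binary search returning an empty region. The total query count is $O(m\log\Delta)=O(\sqrt n\log\Delta)$.
\end{proof}

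\textbf{Main obstacle.}
The delicate point is implementing ``count only the points of $P$ lying in cells strictly before the $k$-th grid cell and not in $\mathcal C'$'' with $O(\log\Delta)$ range counting queries \emph{per extraction}, rather than $O(|\mathcal C'|)$ queries, since $|\mathcal C'|$ can itself be $\Theta(\sqrt n)$ and that would blow the budget to $O(n\log\Delta)$. The fix is to avoid subtracting off the cells of $\mathcal C'$ explicitly: instead, process the grid cells in a fixed space-filling (e.g.\ $z$-order, already defined in the Preliminaries) order and, crucially, search for the \emph{next} non-empty cell after the \emph{last} one we extracted. In $z$-order (or lexicographic order) the discovered cells form a prefix of the non-empty cells, so ``the first undiscovered non-empty cell'' is simply the first non-empty cell whose order index exceeds that of the most recently found cell. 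Finding it is one invocation of the $k$-th-point binary search restricted to the suffix range, which is a single orthogonal range and hence costs $O(\log\Delta)$ queries. Thus each of the $m$ rounds is $O(\log\Delta)$ queries and no auxiliary bookkeeping over $\mathcal C'$ is needed; once the binary search reports that the suffix contains no point of $P$, we have found all $m\le O(\sqrt n)$ non-empty cells.
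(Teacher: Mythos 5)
Your proof is correct and achieves the stated bound, but it takes a somewhat different route than the paper. The paper treats the grid cells as leaves of an implicit quadtree of height $O(\log\Delta)$ and does a DFS that descends only into non-empty subtrees; a single range-counting query per visited node suffices, and since the union of the $m=O(\sqrt n)$ root-to-leaf paths has $O(m\log\Delta)$ nodes, the total is $O(\sqrt n\log\Delta)$ queries. You instead extract non-empty cells one at a time in a fixed linear order, each time binary-searching for the first point of $P$ in the unscanned suffix, giving $O(\log\Delta)$ queries per extraction and the same $O(\sqrt n\log\Delta)$ total. Your $z$-order variant is really the quadtree DFS in disguise (advancing to the next cell in $z$-order is exactly resuming the DFS), while your lexicographic variant is a genuinely different implementation, and arguably more elementary since it never needs the quadtree at all.

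One small imprecision you should fix: a suffix of cells is not ``a single orthogonal range.'' In lexicographic order (say, column-major) the cells strictly after the last-found cell form the union of \emph{two} axis-parallel rectangles (the remainder of the current column plus all later columns); in $z$-order the suffix is a union of $O(\log\Delta)$ quadtree-aligned squares. In both cases the binary search still runs in $O(\log\Delta)$ queries per extraction --- for the lexicographic version you first search the column-remainder rectangle and, if it is empty, the later-columns rectangle --- so the bound is unaffected, but the claim as written is not literally true.
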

\begin{proof}
    We may consider each grid cell as a single point. Then imagine that we construct a quadtree on the set of the points corresponding to the  grid cells. 
    Clearly, the height of the quadtree is $O(\log\Delta)$. 
    We traverse the quadtree along the paths towards the leaves corresponding to the non-empty cells. The total number of nodes of the quadtree we visited is $O(\sqrt n\log \Delta)$. Therefore, the lemma holds.
\end{proof}

\begin{algorithm}
  \caption{\sc{CellSampling}$(r)$ }
  \label{alg:cell:sampling}
  \begin{algorithmic}[1]
        \State Let $\mathcal{G}$ be the grid of side length $r$ imposed on the discrete space $[\Delta]^2$.
        \If {the number of non-empty cells of $\mathcal G$ is at most $\sqrt n$}
            \State Let $\mathcal C$ be the set of non-empty cells of $\mathcal G$.
            \State Return a cell chosen from $\mathcal C$ uniformly at random.
        \Else
            \State Select a set $T$ of $x= \sqrt{n}\log n$  \underline{points} from $P$ uniformly at random. 
                \State Let $n_p$ be the number of points of $P$ contained in the cell of $\mathcal{G}$ containing $p$.
                \State Assign weight $w(p) = \frac{n}{xn_p}$ to $p$.
        \State Let $m' = \sum_{p \in T} w(p)$ denote the total weight of all points in $T$.
        \State Sample a point $p$ from $T$
            with probability $\frac{w(p)}{m'}$.
    \State Return the cell of $\mathcal G$ containing $p$.
            \EndIf
  \end{algorithmic}
\end{algorithm}

\begin{restatable}{lemma}{cellsampling}\label{lem:cell sampling}
    {\sc{CellSampling}$(r)$}
    selects a cell almost uniformly at random from all non-empty cells of $\mathcal G$ using $\tilde O(\sqrt n)$ range counting queries.
\end{restatable}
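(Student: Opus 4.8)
The plan is to analyze the two branches of Algorithm~\ref{alg:cell:sampling} separately, show each produces an almost-uniform sample over the non-empty cells, and bound the query complexity of each.

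First, the query complexity. Deciding whether $\mathcal G$ has at most $\sqrt n$ non-empty cells takes $\tilde O(\sqrt n)$ queries: run the quadtree traversal of Lemma~\ref{lem:compute-all-nonempty}, but abort as soon as more than $\sqrt n$ leaves (non-empty cells) have been discovered; the traversal visits $O(\sqrt n \log \Delta)$ nodes before aborting, and each node test is one range counting query. In the first branch, $\mathcal C$ is then obtained for free from that traversal and returning a uniform element of $\mathcal C$ needs no further queries — here the sample is \emph{exactly} uniform. In the second branch, we draw $x = \sqrt n \log n$ points of $P$ via telescoping sampling (Lemma~\ref{lem:telescoping}), each costing $O(\log \Delta)$ queries, and for each sampled $p$ we compute $n_p$ with a single range counting query on the cell containing $p$; the reweighting and the final weighted draw from $T$ use no queries. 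Total: $\tilde O(\sqrt n)$.

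Next, correctness of the second branch — this is the main obstacle. Fix a non-empty cell $c_i$ with $n_i$ points. A single uniform sample from $P$ lands in $c_i$ with probability $n_i/n$, and conditioned on landing in $c_i$ it contributes weight exactly $w = \frac{n}{x n_i}$; so the expected weight contributed to $c_i$ by one sample is $\frac{n_i}{n}\cdot \frac{n}{x n_i} = \frac{1}{x}$, hence $\Ex{\sum_{p\in T\cap c_i} w(p)} = 1$ for every non-empty cell, and $\Ex{m'} = m$. If every $c_i$ received its expected weight exactly, the final draw would pick $c_i$ with probability exactly $1/m$. The real work is a concentration argument: we need that with good probability, \emph{simultaneously} for all non-empty cells, the realized weight $W_i := \sum_{p\in T\cap c_i} w(p)$ lies in $[(1-\eps)\,, (1+\eps)]$ — or at least that $m' \in [(1-\eps)m,(1+\eps)m]$ and no single $W_i$ deviates too much — so that $W_i/m' \in [(1-O(\eps))/m, (1+O(\eps))/m]$. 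The subtlety is that $W_i$ is a sum of $|T\cap c_i|$ i.i.d.\ terms each equal to $\frac{n}{x n_i}$, and when $n_i$ is large (a "heavy" cell) the count $|T\cap c_i| \approx \frac{x n_i}{n}$ is large and $W_i$ concentrates well by Chernoff; but when $n_i$ is small (a "light" cell, possibly $n_i$ as small as $1$), $|T\cap c_i|$ is a small-mean Poisson-like variable and $W_i$ is essentially $\{0, \text{large}\}$-valued, so it does \emph{not} concentrate cellwise. The resolution: the first branch has already eliminated the case $m \le \sqrt n$, so now $m > \sqrt n$; since $\sum_i n_i = n$, the light cells (say $n_i \le \sqrt n \log n$, equivalently expected count $\le \log n$... — more carefully, cells with $\frac{x n_i}{n} = \frac{n_i \log n}{\sqrt n} = \Omega(\log n)$, i.e.\ $n_i = \Omega(\sqrt n)$) are "heavy" and there are at most $O(\sqrt n)$ of them, while the remaining cells are light. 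For heavy cells, a union bound over $O(\sqrt n) \le \mathrm{poly}$ cells with per-cell failure $\le n^{-\Omega(1)}$ (Chernoff with mean $\Omega(\log n)$, scaled by $x/\sqrt n$ suitably) gives that all heavy cells simultaneously have $W_i \in [1\pm\eps]$. For the aggregate contribution of \emph{all} light cells together, we apply a single Chernoff/Bernstein bound to $\sum_{\text{light }i} W_i$: its mean is the number of light cells (each contributing expected weight $1$), the individual summand values $\frac{n}{x n_i}$ are bounded by $\frac{n}{x} = \frac{\sqrt n}{\log n}$, and — crucially — the total expected count of sampled points falling in light cells is $\Omega(\log n)$ as long as there is at least one light cell, so the sum concentrates to within $(1\pm\eps)$ of its mean. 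Combining, $m' = \sum_i W_i$ is within $(1\pm\eps)m$. Finally, for any individual non-empty cell $c_i$: if it is heavy, $W_i \in [1\pm\eps]$ so $\Pr[\text{return }c_i] = W_i/m' \in [(1-\eps)/((1+\eps)m), (1+\eps)/((1-\eps)m)] \subseteq [1/(cm), c/m]$; if it is light, then $W_i$ is either $0$ or some value $\ge \frac{n}{x n_i} \ge 1$ — hmm, this asymmetry is the real headache, since a light cell that happens to catch zero sample points is returned with probability $0$, not $\approx 1/m$.

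**Reconciling the light-cell issue.** I expect the intended reading of "almost uniform" / "$c$-approximate" to be in the averaged or per-instance-with-high-probability sense that suffices for downstream use (estimating $m$, and the MST application), rather than a pointwise guarantee holding on every run. Concretely, I would argue: over the randomness of $T$, for each fixed non-empty cell $c_i$, $\Pr[\text{return }c_i] = \Ex{W_i/m'}$, and since $m' \in [(1-\eps)m,(1+\eps)m]$ with probability $\ge 1-1/\mathrm{poly}(n)$ and $m' \le n \cdot \frac{n}{x\cdot 1} \le n^{O(1)}$ always while $\Ex{W_i}=1$, one gets $\Ex{W_i/m'} \in [1/(cm), c/m]$ by splitting on the good event and crudely bounding the (polynomially small, polynomially bounded) bad-event contribution. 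I would present the concentration lemma for $m'$ first, then derive the per-cell bound from it, flagging that for the applications what is actually invoked is the concentration of $m'$ together with the fact that $c_i$ is returned with probability proportional to its (well-concentrated, for heavy cells / in-aggregate for light cells) weight. The cleanest write-up sets the heavy/light threshold at $n_i \gtrless \sqrt n/\mathrm{polylog}$, applies Chernoff cell-by-cell to the $O(\sqrt n)$ heavy cells and once to the light aggregate, unions the $O(\sqrt n)\cdot n^{-\Omega(1)} + n^{-\Omega(1)} = o(1)$ failure probabilities, and concludes.
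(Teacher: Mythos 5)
Your query-complexity accounting tracks the paper's. Your argument that $m'$ concentrates around $m$, however, is more elaborate than the paper's: the paper simply observes that the summands $X_i = 1/n_{p_i}$ lie in $[0,1]$ with $\Ex{X_i} = m/n$, so $X = \sum_i X_i$ has mean $xm/n = \Omega(\log n)$ (using $m \geq \sqrt n$ and $x = \sqrt n \log n$), and one application of a Chernoff-type bound for bounded summands finishes. No heavy/light decomposition of the cells is needed at this stage.

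Where you genuinely depart from the paper is in the per-cell guarantee, and there you are right to do so. The paper's proof fixes a non-empty cell $c_i$, lets $Y$ be the number of sampled points in $c_i$, and asserts that $Y \in [\Ex{Y}/4, 4\Ex{Y}]$ with at least constant probability, so the realized weight of $c_i$ lies in $[1/4, 4]$. As you observe, this claim fails for light cells: when $n_i$ is small, $\Ex{Y} = x n_i / n$ can be far below $1$, so $\Pr{Y=0}$ is close to $1$ while $0 \notin [\Ex{Y}/4, 4\Ex{Y}]$ --- the asserted event has probability near zero, not near one, and $W_i$ does not concentrate cellwise. Your resolution --- compute $\Pr{\text{return } c_i} = \Ex{W_i/m'}$ directly, exploit concentration of $m'$ (rather than of $W_i$) together with the exact identity $\Ex{W_i} = 1$, and control the bad-event contribution --- is the correct way to close this hole, and it makes the lemma actually go through. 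Two small points on the execution: (i) the clean bound on the bad-event contribution is $W_i/m' \leq 1$, which holds deterministically because $W_i$ is one of the summands of $m'$, so the cruder bound $m' \leq n^{O(1)}$ is unnecessary; and (ii) for the lower end of $[1/(cm), c/m]$ you still need $\Ex{W_i \mathbf{1}_{\text{bad}}}$ small, which follows via Cauchy--Schwarz from $\Ex{W_i^2} = O(n/x)$ and $\Pr{\text{bad}} = n^{-\Omega(1)}$; that step is worth spelling out explicitly.
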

\begin{proof}
    If $\mathcal G$ has $O(\sqrt n)$ non-empty cells, we can compute all of them deterministically using $\tilde O(\sqrt n)$ range queries by Lemma~\ref{lem:compute-all-nonempty}. Then we choose one of them uniformly at random. Thus it suffices to consider the case that $\mathcal G$ has $\Omega(\sqrt n)$ non-empty cells. 

     Observe that $m'$ is a random variable that corresponds to 
     the total weight of all points in the sampled set $T$. 
     Indeed, $m'$ estimates the number $m$ of non-empty cells in the grid $\mathcal G$. 
     We first show that the expected value of $m'$ is $m$ and with a high probability, the random variable $m'$ is sharply concentrated 
     around its expectation $m$.
    Suppose the sampled points are denoted as $p_1, \ldots, p_x$. For each sampled point $p_i$, we define a random variable $X_i = \frac{1}{n_{p_i}}$, where $n_{p_i}$ represents the number of points in the cell that contains $p_i$.
Recall that $\mathcal C= \{\cell_1,\cell_2,\ldots,\cell_{m}\}$ denotes the set of non-empty cells of $\mathcal G$.
Then, the expected value of $X_i$ is $\Ex{X_i} = \sum_{c \in C} \frac{n_c}{n}\cdot \frac{1}{n_c} = \frac{m}{n}.$ 
Now, the expected value of the random variable $X = \sum_{i=1}^x X_i$ is 
$\Ex{X} = \frac{xm}{n}$. 
Since $m' = X \cdot \frac{n}{x}$, we then have $\Ex{m'} = m$.

Now we use Chernoff bounds stated in Lemma~\ref{lem:cher}. Since $0\leq X_i\leq 1$, the probability that $|X-\Ex{X}|\geq  (\frac{mx}{n})/10$ is less than $\delta^{mx/n}\leq \delta^{\log n}$ for a constant $0<\delta<1$.
Recall that $x=\sqrt n\log n$ and $m\geq \sqrt n$ by the assumption. 
Therefore, we have $m'\in [9m/10, 11m/10]$ with probability at least $1-1/\delta$. 

In the following, assume that this event happens. 
For a fixed non-empty cell $c_i$ in the grid $\mathcal G$, we analyze the probability that $c_i$ is chosen.
Let $Y$ be the number of  sampled points in the first step contained in $c_i$.
Then $\Ex{Y}=\frac{xn_i}{n}$, where $n_i$ is the true number of points from $P$ in $c_i$.
By Chernoff bounds, 
the probability that $Y \notin [\frac{xn_i}{4n}, \frac{4xn_i}{n}]$ is again less than a small constant. 
Again, assume the event that $Y \in[\frac{xn_i}{4n}, \frac{4xn_i}{n}]$.
Then the total weight of all sample points contained in $c_i$ is in the range $[1/4,4]$. Thus the probability that $c_i$ is finally chosen is in the range $[1/4m',4/m']\subseteq [1/(cm), c/m]$ for a constant $c$. 
Therefore, {\sc{CellSampling}$(r)$}
returns a cell almost uniformly at random. 
\end{proof}


\begin{corollary}\label{cor:sample-estimator}
    We can estimate the number of non-empty cells within an $O(1)$-relative error using $\tilde O(\sqrt n)$ range counting queries. 
\end{corollary}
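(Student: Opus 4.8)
The plan is to recognize that the estimator we need has essentially already been built: it is the quantity $m'$ computed inside \textsc{CellSampling}$(r)$ (Algorithm~\ref{alg:cell:sampling}). So the proof amounts to running that algorithm but returning $m'$ in place of a sampled cell, and then quoting the concentration bound established in the proof of Lemma~\ref{lem:cell sampling}.

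Concretely, first I would test whether the number $m$ of non-empty cells is at most $\sqrt n$. Using the quadtree-on-cells traversal from the proof of Lemma~\ref{lem:compute-all-nonempty}, I descend the quadtree, pruning any branch whose cell contains no point of $P$, and stop as soon as either all non-empty leaf cells have been enumerated or more than $\sqrt n$ of them have been discovered; this costs $O(\sqrt n\log\Delta)$ range counting queries. If $m\le\sqrt n$, the traversal has produced all non-empty cells and I output the exact value $m$. Otherwise $m\ge\sqrt n$, and I run lines~6--9 of Algorithm~\ref{alg:cell:sampling}: sample a set $T$ of $x=\sqrt n\log n$ points of $P$ uniformly at random, each via telescoping sampling (Lemma~\ref{lem:telescoping}, $O(\log\Delta)$ queries apiece); for every sampled point $p$ obtain $n_p$ with one range counting query on its grid cell; set $w(p)=\frac{n}{x n_p}$; and return $m'=\sum_{p\in T}w(p)$.

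For correctness I reuse the analysis from the proof of Lemma~\ref{lem:cell sampling}. Writing $X_i=1/n_{p_i}$ for the $i$-th sampled point, we have $\Ex{X_i}=m/n$, hence $\Ex{X}=\frac{mx}{n}$ for $X=\sum_{i=1}^x X_i$, and $\Ex{m'}=m$. Since each $X_i\in[0,1]$ and $\Ex{X}=\frac{mx}{n}\ge\log n$ under the assumption $m\ge\sqrt n$, the Chernoff bound of Lemma~\ref{lem:cher} yields $|X-\Ex{X}|<\Ex{X}/10$, and therefore $m'\in[9m/10,\,11m/10]$, with probability $1-\delta^{\Omega(\log n)}$ for some constant $0<\delta<1$; in particular $m'$ is an $O(1)$-relative (indeed $(1\pm 1/10)$) estimate of $m$ with high probability. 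The total query cost is $O(\sqrt n\log\Delta)$ for the threshold test plus $O(\sqrt n\log n\cdot\log\Delta)$ for the sampling and cell-size queries, i.e.\ $\tilde O(\sqrt n)$.

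There is no genuine obstacle here beyond the bookkeeping already done for Lemma~\ref{lem:cell sampling}; the only mildly delicate point is the $m\le\sqrt n$ versus $m\ge\sqrt n$ dichotomy, where one must ensure the quadtree traversal can certify ``more than $\sqrt n$ non-empty cells'' without materializing all of them — which is handled simply by aborting the traversal the moment the $(\lfloor\sqrt n\rfloor+1)$-th non-empty cell is found.
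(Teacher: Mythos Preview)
Your proposal is correct and matches the paper's intended argument: the corollary is stated without proof precisely because the estimator $m'$ and its concentration around $m$ are already established inside the proof of Lemma~\ref{lem:cell sampling}, together with the exact enumeration when $m\le\sqrt n$ via Lemma~\ref{lem:compute-all-nonempty}. Your only addition—the early-abort of the quadtree traversal once $\lfloor\sqrt n\rfloor+1$ non-empty cells are seen—is the natural way to implement the threshold test and costs no more than the paper's $O(\sqrt n\log\Delta)$ bound.
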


\subparagraph{Lower bound.} 
Now, we show that any randomized algorithm that can perform $c$-approximate uniform sampling for 
non-empty cells requires at least $\Omega(\sqrt{n}/c)$ range counting queries for any parameter $c\geq 1$. 
Our lower bound holds even in a discrete one-dimensional space $[\Delta]$ with $\Delta\geq n$. 
We assume that we have an interval $L$ of length $\Delta \ge n$ consisting of $n$ 
cells each of length $\Delta/n$. 
We subdivide $L$  into $\sqrt n/(4c)$ segments each of length $4c\Delta/\sqrt n$.  
Let $\mathcal S$ be the set of the segments for the line segment $L$. 
In this way, each segment in $S$ contains $(4c\Delta/\sqrt n)/(\Delta/n) = 4c\sqrt n$ cells. 
We construct a set $\mathcal I$ of $\sqrt{n}/(4c)+1$ instances as in Figure~\ref{fig:sampling-lowerbound}. 

\begin{figure}
    \centering
    \includegraphics[width=\linewidth]{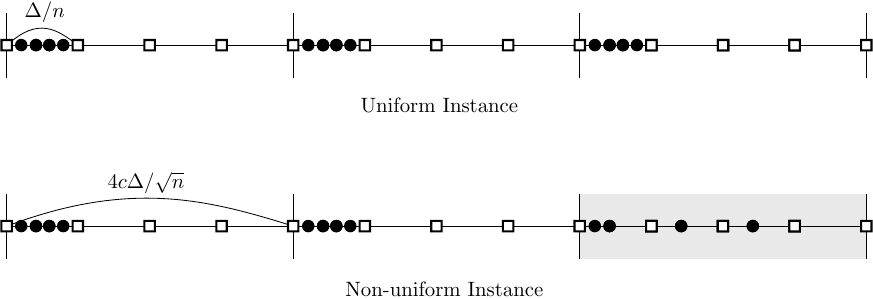}
    \caption{Illustration for the uniform instance and a non-uniform instance. The gray segment is the witness of the non-uniform instance.}
    \label{fig:sampling-lowerbound}
\end{figure}

We construct one \emph{uniform} instance of $L$. 
In this instance, for each segment $S \in \mathcal{S}$, we identify its leftmost cell and place $4c\sqrt{n}$ points in that cell. Therefore, 
each uniform instance contains $\sqrt{n}/(4c)$ non-empty cells (one corresponding to each segment) 
while maintaining a total of $n$ points per uniform instance.

Next, we construct $\sqrt{n}/(4c)$ \emph{non-uniform} instances of $L$. 
For the $i\textsuperscript{th}$ non-uniform instance, we select the $i\textsuperscript{th}$ segment, 
which we refer to as the \emph{witness segment}. In the leftmost cell of this witness segment, we place $2c\sqrt{n}$ points. Additionally, in each of the next $2c\sqrt{n}$ cells following the leftmost cell, we place one point per cell. 
For each of the remaining $\sqrt{n}/(4c) - 1$ non-witness segments, we place $4c\sqrt{n}$ points in the leftmost cell of each segment, 
as we did for the uniform instance. 
In this way, every non-uniform instance has $2c\sqrt{n} + 1 + \sqrt{n}/(4c) - 1 = 2c\sqrt{n} + \sqrt{n}/(4c)$ non-empty cells and 
contains $4c\sqrt{n} \cdot (\sqrt{n}/(4c) - 1) + 2c\sqrt{n} + 2c\sqrt{n} = n $ points in total. 
Thus, the ratio of the number of non-empty cells between a non-uniform instance and a uniform instance is $(2c\sqrt{n} + \sqrt{n}/(4c))/(\sqrt{n}/(4c)) = 8c^2 + 1$. For both instances see Figure~\ref{fig:sampling-lowerbound}. 

\begin{restatable}{lemma}{samplingdistinguish}
\label{lem:sampling:distinguish}
Given an instance $I$ from $\mathcal{I}$, any randomized algorithm that determines whether $I$ is uniform with a success probability of at least $2/3$ requires $\Omega(\sqrt{n}/c)$ range counting queries.
\end{restatable}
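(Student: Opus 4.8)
The plan is to invoke Yao's minimax principle in exactly the form used in the proof of Lemma~\ref{lem:distribution-1D}. Define a hard distribution $\mu$ on $\mathcal{I}$: with probability $1/2$ the input is the uniform instance, and with probability $1/2$ we pick an index $i$ uniformly from $\{1,\dots,\sqrt{n}/(4c)\}$ and output the $i$-th non-uniform instance. Since the uniform instance has $\sqrt{n}/(4c)$ non-empty cells while every non-uniform instance has $(8c^2+1)\cdot\sqrt{n}/(4c)$ non-empty cells, any algorithm that decides ``uniform versus non-uniform'' with probability at least $2/3$ must in particular answer correctly on a $\mu$-random instance with probability at least $2/3$. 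By Yao it therefore suffices to show that no \emph{deterministic} algorithm $\mathcal{A}$ using $q = o(\sqrt{n}/c)$ range counting queries succeeds on a $\mu$-random instance with probability at least $2/3$.

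The core observation is an indistinguishability claim analogous to the ``hitting'' argument of Lemma~\ref{lem:distribution-1D}: a query interval $Q=[a,b]$ returns the same count on the uniform instance and on the $i$-th non-uniform instance unless an endpoint of $Q$ lies in the segment $S_i$ that the $i$-th non-uniform instance modifies. Indeed, the two instances agree point for point outside $S_i$, and inside $S_i$ both place a total of $4c\sqrt{n}$ points (the uniform instance stacks all of them in the leftmost cell; the non-uniform instance moves $2c\sqrt{n}$ of them into the following $2c\sqrt{n}$ cells). Hence if $Q$ either contains all of $S_i$ or is disjoint from it, the contribution of $S_i$ to the count of $Q$ is $4c\sqrt{n}$, respectively $0$, in both instances. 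Consequently each query ``touches'' at most two of the segments, so a run of $\mathcal{A}$ that issues $q$ queries touches a set $D$ of at most $2q$ segments.

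Now run $\mathcal{A}$ on the uniform instance; being deterministic, it issues a fixed sequence $Q_1,\dots,Q_{q'}$ of queries with $q'\le q$, touching a fixed segment set $D$ with $|D|\le 2q$. For any index $i$ with $S_i\notin D$, an induction on the query index shows that $\mathcal{A}$, run on the $i$-th non-uniform instance, receives the same answer to $Q_1$ (as $Q_1\notin$ touches $S_i$ and the instances agree elsewhere), hence issues the same $Q_2$, which again does not touch $S_i$, and so on; thus $\mathcal{A}$ produces the same output on the $i$-th non-uniform instance as on the uniform instance. That output is correct for the uniform instance and therefore wrong for the non-uniform one. Hence $\mathcal{A}$ errs on every non-uniform instance whose segment lies outside $D$, i.e.\ on at least $\sqrt{n}/(4c)-2q$ of the $\sqrt{n}/(4c)$ non-uniform instances. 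If $q\le \sqrt{n}/(30c)$ this is at least $\frac{11}{15}\cdot\sqrt{n}/(4c)$, so the $\mu$-failure probability is at least $\frac{1}{2}\cdot\frac{11}{15}=\frac{11}{30}>\frac13$, contradicting the assumed $2/3$ success probability. Therefore $q=\Omega(\sqrt{n}/c)$.

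The only delicate point, and the one to handle carefully, is the adaptivity of $\mathcal{A}$: the touched set $D$ must be defined from the run on the uniform instance (where the query sequence is fully determined), and the induction of the third paragraph must maintain the invariant that the partial transcripts on the uniform instance and on a $D$-avoiding non-uniform instance coincide up to each step. This is precisely the decision-tree reasoning already carried out in Lemma~\ref{lem:distribution-1D}; everything else is the elementary counting above.
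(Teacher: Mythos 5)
Your proof is correct and follows essentially the same route as the paper's: the same hard distribution $\mu$, the same Yao reduction to deterministic algorithms, and the same key observation that a query can distinguish the uniform instance from the $i$-th non-uniform one only if one of its endpoints lies inside $S_i$. Where you add value is that you make the indistinguishability step fully explicit (both instances put a total of $4c\sqrt{n}$ points inside each segment, so a query that straddles or avoids $S_i$ sees the same count) and you replace the paper's $(1 - 4c/\sqrt{n})^{\sqrt n/(16c)}$ calculation by a cleaner decision-tree/counting argument: fix the transcript on the uniform instance, collect the set $D$ of at most $2q$ touched segments, and observe that $\mathcal{A}$ is forced to give the same (hence wrong) answer on every non-uniform instance whose witness segment lies outside $D$. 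This is a tidier way to handle the adaptivity of the queries than the paper's somewhat informal product-of-probabilities estimate, but the underlying idea and constants are the same.
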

\begin{proof}
    By Yao's Minimax Theorem, it suffices to construct a distribution $\mu$ over 
    the instances described above such that no deterministic algorithm can determine, 
    with probability at least $2/3$, whether an input drawn from $\mu$ is uniform while using fewer than $\sqrt{n}/(16c)$ range counting queries.
    We construct the distribution $\mu$ as follows:
    the probability that the uniform instance is chosen is $1/2$, and the probability that each non-uniform instance is chosen is $2c/\sqrt n$. 
    Since we have $\sqrt{n}/(4c)$ non-uniform instances, the probability that we pick a non-uniform instance is $1/2$. 

    Now, suppose there exists a deterministic algorithm $\mathcal{A}$ that, for an instance sampled from $\mu$, 
    can correctly determine with probability at least $\delta$ whether the instance is uniform or 
    non-uniform using fewer than $\sqrt n/(16c)$ 
    range counting queries. 
    Since the success probability must be at least $2/3$, 
    the algorithm $\mathcal{A}$ is required to provide the correct answer for the uniform instance.
    Otherwise, the success probability is less than $2/3$ what proves this lemma.
    Moreover, since $\mathcal{A}$ is deterministic, for any non-uniform instance sampled from 
    the distribution $\mu$, if $\mathcal{A}$ fails to identify the witness segment, 
    it incorrectly concludes that the instance is uniform, despite it being non-uniform.

    Condition on the event that a non-uniform instance is chosen from $\mu$. 
    We analyze the probability that $\mathcal{A}$ fails to identify the witness segment. 
    Note that $\mathcal{A}$ can detect the witness segment only if one of its queries hits the witness segment of this instance.  
    Here, we say a query \emph{hits} a segment if an endpoint of a query interval is contained in the segment.
    The probability that a single query made by $\mathcal A$ fails to identify the witness segment in 
    a non-uniform instance is $1 - \frac{4c}{\sqrt{n}}$.  
    If $\mathcal A$ performs at most $\frac{\sqrt{n}}{16c}$ queries, 
    then the probability that none of these queries identifies the witness segment is  
    $\left(1 - \frac{4c}{\sqrt{n}}\right)^{\sqrt{n}/(16c)} \ge e^{-1/4} \ge 0.77$.
    Thus, the probability that $\mathcal A$ detects the witness segment is at most $1 - e^{-1/4} \le \frac{1}{3}$,  
    which is less than the required success probability of $\frac{2}{3}$, leading to a contradiction.
\end{proof}

\begin{restatable}{lemma}{cellsamplinglb}\label{lem:cell sampling lowerbound}
    Any algorithm that can perform $c$-approximate uniform sampling for non-empty cells requires $\Omega(\sqrt n/c)$ range counting queries for any parameter $c\geq 1$. 
\end{restatable}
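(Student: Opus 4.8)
The plan is to reduce the sampling task to the distinguishing task of Lemma~\ref{lem:sampling:distinguish}, which was set up precisely for this purpose. Suppose we have an algorithm $\mathcal A$ that performs $c'$-approximate uniform sampling on the grid $\mathcal G$ of side length $\Delta/n$ using $q$ range counting queries, where $c'=\sqrt{8c^2+1}$ (any constant depending on $c$ works; the exact value only needs to separate the two instance families). I would feed $\mathcal A$ an instance drawn from the family $\mathcal I$ constructed above and then build a distinguisher: run $\mathcal A$ to obtain a sampled non-empty cell, and check which segment of $\mathcal S$ it lies in. The key observation is that in a uniform instance every segment contributes exactly one non-empty cell, so no single segment is ``heavy,'' whereas in the $i$\textsuperscript{th} non-uniform instance the witness segment $S_i$ contains $2c\sqrt n+1$ non-empty cells out of a total of $2c\sqrt n+\sqrt n/(4c)$, i.e. a $(2c\sqrt n+1)/(2c\sqrt n+\sqrt n/(4c)) = 1-o(1)$ fraction of all non-empty cells.

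The core step is the probability calculation. On a uniform instance, a $c'$-approximate uniform sampler returns any fixed non-empty cell with probability at most $c'/m$ where $m=\sqrt n/(4c)$; hence the probability the returned cell lies in a \emph{prescribed} single segment is at most $c'/m = O(c c'/\sqrt n)$, which is $o(1)$. On a non-uniform instance with witness segment $S_i$, the sampler returns a cell inside $S_i$ with probability at least $(1/c')\cdot (\text{number of non-empty cells in }S_i)/m' \ge (1/c')(1-o(1))$, which is a constant bounded away from $0$ (and in fact $\ge 2/3$ once $c'$ is chosen appropriately, e.g. by first boosting $\mathcal A$ a constant number of times and taking a majority-type test, or simply by noting $8c^2+1$ gives enough slack). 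Thus the distinguisher that outputs ``non-uniform'' iff the sampled cell falls in a segment containing more than one non-empty cell succeeds with probability $\ge 2/3$ on both families, and therefore on the distribution $\mu$ of Lemma~\ref{lem:sampling:distinguish}. Since the distinguisher uses exactly the $q$ queries of $\mathcal A$ (detecting which segment a cell lies in requires no additional queries, as the instance geometry is fixed and known), Lemma~\ref{lem:sampling:distinguish} forces $q = \Omega(\sqrt n/c)$.

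One subtlety I would handle carefully: the distinguisher above must work for the family $\mathcal I$ with the \emph{specific} constant $c$ appearing in Lemma~\ref{lem:sampling:distinguish}, but an arbitrary $c$-approximate sampler has an arbitrary constant $c$, and these two constants need not match. The clean fix is to observe that for \emph{any} constant $c$, we can instantiate the construction of Lemma~\ref{lem:sampling:distinguish} with a parameter $c_0$ large enough that the non-uniform-to-uniform non-empty-cell ratio $8c_0^2+1$ exceeds $2c^2$ (say), so that a single run of the $c$-approximate sampler already separates the two cases with constant probability, after which standard amplification by $O(1)$ independent runs pushes the distinguishing probability above $2/3$ at the cost of only a constant factor in the query count. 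The main obstacle, then, is purely bookkeeping of constants — ensuring the approximation slack of the sampler is compatible with the gap built into the hard distribution — rather than any genuinely new argument; everything else follows from Lemma~\ref{lem:sampling:distinguish} as a black box.

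\begin{proof}
This follows from Lemma~\ref{lem:sampling:distinguish}. Fix a constant $c\geq 1$ and suppose $\mathcal A$ is an algorithm performing $c$-approximate uniform sampling of non-empty cells using $q$ range counting queries. Instantiate the hard instance family $\mathcal I$ on the interval $L$ with $n$ cells of length $\Delta/n$, choosing the segment-size constant large enough that the number of non-empty cells in a non-uniform instance exceeds $2c^2$ times that of the uniform instance; as computed above this ratio is $8c_0^2+1$ for the chosen constant $c_0$, so a constant $c_0=c_0(c)$ suffices. Run $\mathcal A$ on the input instance and let $\sigma$ be the returned non-empty cell; output ``uniform'' if $\sigma$ lies in a segment of $\mathcal S$ containing exactly one non-empty cell, and ``non-uniform'' otherwise. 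On a uniform instance, every segment has exactly one non-empty cell, and each is returned with probability at most $c/m$ where $m=\sqrt n/(4c_0)$ is the number of non-empty cells, so the test answers ``uniform'' always (correct). On the $i$\textsuperscript{th} non-uniform instance, the witness segment $S_i$ contains a $1-o(1)$ fraction of all non-empty cells, so $\mathcal A$ returns a cell of $S_i$ with probability at least $(1-o(1))/c$, a constant bounded away from $0$; repeating $\mathcal A$ a constant number $t=t(c)$ of times independently and answering ``non-uniform'' if any run lands in a multi-cell segment boosts this to at least $2/3$. Determining which segment contains $\sigma$ requires no queries since the geometry of $\mathcal I$ is fixed. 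Hence we obtain a randomized algorithm distinguishing uniform from non-uniform instances with success probability at least $2/3$ using $tq = O(q)$ range counting queries. By Lemma~\ref{lem:sampling:distinguish}, $tq = \Omega(\sqrt n/c_0)$, and since $t$ and $c_0$ depend only on $c$, we conclude $q = \Omega(\sqrt n/c)$.
\end{proof}
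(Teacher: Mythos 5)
Your proof follows exactly the paper's route: reduce the sampling lower bound to the distinguishing lower bound of Lemma~\ref{lem:sampling:distinguish} by turning the sampler into a distinguisher. The paper's own proof is essentially a one-liner asserting that the sampler ``must distinguish the two cases with probability at least $1/2$''; you make the distinguishing test explicit and compute its success probability, which is a welcome clarification rather than a different approach.

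Two small points worth flagging, one of which is shared with the paper. First, the test as written --- output ``uniform'' iff $\sigma$ lies in a segment containing exactly one non-empty cell --- is not computable from $\sigma$ alone: if $\sigma$ happens to be the leftmost cell of a segment, you cannot tell whether that segment is a witness without extra queries. The intended (and query-free) test is ``is $\sigma$ a leftmost cell of its segment?''; on the uniform instance this always says uniform, and on a non-uniform instance the detection probability is almost identical to what you computed, since the leftmost cell of the witness segment contributes only a $1/(2c\sqrt n+1)$ fraction. Second, the $c$-dependence: the sampler returns a non-leftmost cell with probability at least $2c\sqrt n/(c\,m')=8c/(1+8c^2)=\Theta(1/c)$, so boosting to success $2/3$ genuinely requires $t=\Theta(c)$ repetitions. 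Then $tq=\Omega(\sqrt n/c_0)$ with $c_0=\Theta(c)$ gives $q=\Omega(\sqrt n/c^2)$, not $\Omega(\sqrt n/c)$; the sentence ``since $t$ and $c_0$ depend only on $c$, we conclude $q=\Omega(\sqrt n/c)$'' hides a $c$-dependent factor in the implied constant. This is harmless in the paper's intended regime (Table~\ref{tab:our-results} states the bound as $\Omega(\sqrt n)$ for $O(1)$-approximate sampling), and the paper's own proof has the same slack, but the stated $c$-dependence is not actually established by either argument.
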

\begin{proof}
    Assume that we have a $c$-approximate uniform sampling algorithm $\mathcal A$.
    That is, the probability that each non-empty cell is chosen is in $[1/(cm), c/m]$, where $m$ is the number of non-empty cells. Consider the uniform instance $I$ and any non-uniform instance $I'$.
    Recall that the number of non-empty cells in $I$ is $\sqrt n/(4c)$, and the number of non-empty cells in $I'$ is at least $2c\sqrt n$. 
    Therefore, $\mathcal A$ must distinguish the two cases with probability at least $1/2$. 
    However, by Lemma~\ref{lem:sampling:distinguish},
    no such algorithm exists, which gives a contradiction. 
\end{proof}

\section{An algorithm for the minimum spanning tree problem}
\label{apd:mst:algo}
Given a point $P$ in $[\Delta]^2$, we present an algorithm for estimating the cost of a minimum spanning tree of $P$ 
    within a factor of $(1+\eps)$ with a constant probability using $\tilde O(\sqrt n)$
    range counting queries. 
Here, we assume that $\Delta=O(n/\eps)$. This assumption can be achieved in our case as follows. 
As preprocessing, we find the smallest
square containing $P$. We can do this using $O(\log \Delta)$ range counting queries. Then, the size of the square is at least $\opt$. We consider this cell as the new discrete space $[\Delta]^2$. In this way, we obtain $\opt\geq\Delta$.

Now consider the grid of $[\Delta]^2$ of side length $(\eps/n)\cdot \Delta$. Imagine that we move each point to the center of its grid cell. This increases the optimal cost of the minimum spanning tree by at most $(\eps/n)\cdot\Delta\cdot n\leq \eps\cdot \opt$, which is negligible. Then we can obtain a new discrete space of size $n/\eps\cdot n/\eps$ by scaling. In this way, we can achieve the assumption we made before. 
However, since we do not have direct access to $P$ in the range counting oracle model, we cannot explicitly move the points.  
We can do this implicitly because any query range $R$ used by our algorithm will be a cell of the quadtree. 
More specifically, we expand the range $R$ into a larger range $R'$, where the side length of $R'$ is scaled by a factor of $n/\eps$ relative to the side length of $R$.
Then we use $R'$ in the original domain to query the number of points in $R$. 
This is exactly the number we want to find if $R$ is a cell of the quadtree. 

\subsection{The quadtree-based spanner}
The spanner of $P$ we use in this section is based on the well-separated pair decomposition (WSPD) of $[\Delta]^2$ constructed using a quadtree~\cite{callahan1995decomposition}.
Let $\mathcal Q$ be a quadtree constructed on $[\Delta]^2$.
We call a pair $(c,c')$ of cells in $\mathcal Q$ a \emph{well-separated pair} if $\eps\cdot d(c,c')\geq \max\{r(c), r(c)\}$, where
$r(c)$ is the side length of a cell $c$ and $d(c,c')$ is the distance between the centers of $c$ and $c'$.
Then a \emph{well-separated pair decomposition} of a set $P$ of $n$ points
is a set
$\{(c_1,c_1'), \ldots,(c_s,c_s')\}$ of well-separated pairs such that
for any two points $x,y \in P$,
there is a unique pair $(c_i,c_i')$ with $x\in c_i$ and $y\in c_i'$.
Callahan and Kosaraju~\cite{callahan1995decomposition} showed that there is a well-separated pair decomposition of $P$ consisting of $O(\eps^{-2}n)$ pairs.

As the way in which we construct the WSPD in our sublinear algorithm for the  minimum spanning tree problem is important, 
we give its brief sketch.
Then we run $\textsf{WSPD}(c_r,c_r)$ with the root $c_r$ of $\mathcal Q$.
Then the resulting set is a well-separated pair decomposition of $P$. We denote it by $\mathcal W$.
As the number of pairs of $\mathcal W$ is $O(\eps^{-2} n)$,\footnote{To get this bound, we need to remove all redundant pairs from $\mathcal W$. In particular, two pairs $(c,c')$ and $(\bar c, \bar c')$ with $P\cap c=P\cap\bar c$ and $P\cap c'=P\cap \bar c'$ might be contained in $\mathcal W$ in our case.} we cannot compute all such pairs using a sublinear number of range counting queries. However, given a pair $(c,c')$ of cells of $\mathcal Q$, we can determine if it is in $\mathcal W$ using $O(1)$ range counting queries.

\begin{algorithm}[ht]
  \caption{\textsf{WSPD}$(c,c')$ }
  \label{alg:wspd}
  \begin{algorithmic}[1]
    \If{$c=c'$ and $r(c)=0$}
        {\Return $\emptyset$}
    \EndIf

    \If{$c$ or $c'$ is an empty set}
        {\Return $\emptyset$}
    \EndIf

    \If {$r(c') < r(c)$}
        {exchange $c'$ and $c$}
    \EndIf

    \If {$r(c') \leq \eps d(c,c')$}
        {\Return $\{\{c,c'\}\}$}
    \EndIf
    \State $c_1,c_2,c_3,c_4\gets$ the children of $c'$ in $\mathcal Q$
    \State
    \Return $\textsf{WSPD}(c_1,c) \cup\textsf{WSPD}(c_2,c)\cup \textsf{WSPD}(c_3,c)\cup \textsf{WSPD}(c_4,c)$

  \end{algorithmic}
\end{algorithm}

\begin{lemma}[\cite{har2011geometric}]\label{lem:degree}
    Each point of $[\Delta]^2$ is contained in $O(\eps^{-2}\log \Delta)$ pairs of $\mathcal W$.
\end{lemma}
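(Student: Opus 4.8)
Proof plan for Lemma~\ref{lem:degree} (each point of $[\Delta]^2$ is in $O(\eps^{-2}\log\Delta)$ pairs of $\mathcal W$).

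The plan is to fix an arbitrary point $p\in[\Delta]^2$ and reduce the statement to a per-cell bound. First I would observe that the cells of $\mathcal Q$ containing $p$ all lie on a single root-to-leaf path, so there are only $O(\log\Delta)$ of them; and a pair $\{c,c'\}\in\mathcal W$ contains $p$ exactly when $p\in c$ or $p\in c'$, i.e.\ when one of $c,c'$ is one of these $O(\log\Delta)$ cells. Hence it suffices to show that every fixed cell $c$ of $\mathcal Q$ participates in only $O(\eps^{-2})$ pairs of $\mathcal W$; summing over the path then yields the claimed $O(\eps^{-2}\log\Delta)$ bound.

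For the per-cell bound I would isolate one structural fact about Algorithm~\ref{alg:wspd}: in every recursive call $\textsf{WSPD}(a,b)$ the two cells satisfy $\tfrac12\le r(a)/r(b)\le 2$. This follows by induction on the recursion, since each call splits the larger of its two cells, halving its side length to at most that of the other. Two consequences for an output pair $\{c,c'\}$: (i) $r(c')\in[r(c)/2,2r(c)]$, so writing $\rho:=r(c)$ the cell $c'$ lies at one of at most three levels of $\mathcal Q$; and (ii) since the pair is well-separated, $\eps\,d(c,c')\ge\max\{r(c),r(c')\}\ge\rho$. Next I would bound $d(c,c')$ from above by tracing one step up the recursion: the call $\textsf{WSPD}(c,c')$ is spawned by a parent call that splits some non-empty cell $\widehat c$ which is the parent in $\mathcal Q$ of $c$ or of $c'$; that parent call was not well-separated, and using $r(\widehat c)\le 4\rho$ together with $\diam(\widehat c)=O(\rho)$ one gets $d(c,c')\le d(c,\widehat c)+\diam(\widehat c)=O(\rho/\eps)$ (the inequality from the parent call has to be pushed through the internal swap of Algorithm~\ref{alg:wspd}, and the two cases—whether the split cell is the parent of $c$ or of $c'$—are handled symmetrically). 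Therefore the centers of all cells $c'$ paired with $c$ lie in a disk of radius $O(\rho/\eps)$ about the center of $c$.

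It then remains a packing argument. The cells $c'$ paired with a fixed $c$ are pairwise interior-disjoint: if $c''\subsetneq c'$ were both paired with $c$, then any point of $P$ in $c''$ together with any point of $P$ in $c$ would be covered by the two distinct pairs $\{c,c'\}$ and $\{c,c''\}$, contradicting the covering-uniqueness of a WSPD (this is also why the redundant pairs permitted by the footnote do not hurt the degree bound, only the total size $|\mathcal W|$). At each of the $\le 3$ admissible levels these cells have side length $\Theta(\rho)$ and are contained in a disk of radius $O(\rho/\eps)$, so at most $O\bigl((\rho/\eps)^2/\rho^2\bigr)=O(\eps^{-2})$ of them exist; summing over the levels, $c$ lies in $O(\eps^{-2})$ pairs, which combined with the first paragraph proves the lemma. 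I expect the main obstacle to be exactly the bookkeeping in the second step: correctly identifying, for a pair $\{c,c'\}$, which cell's parent was split in the call that produced it, and carrying the not-well-separated inequality of that parent call through the cell-swap to obtain the $O(\rho/\eps)$ distance bound uniformly; everything else is the classical quadtree-based WSPD analysis specialized to the uncompressed quadtree on $[\Delta]^2$.
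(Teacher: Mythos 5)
The paper states this lemma as a citation to Har-Peled's book and supplies no proof of its own, so there is nothing in the paper to compare your argument against. On its own merits your proof is correct and is the standard degree bound for a quadtree-based WSPD: (1) reduce to a per-cell bound by observing that $p$ lies in $O(\log\Delta)$ cells forming a root-to-leaf path; (2) derive the ratio invariant $r(c')/2\le r(c)\le 2r(c')$ by induction on the recursion (this is exactly Lemma~\ref{lem:similar-size}, which you could have cited rather than re-derived); (3) bound $d(c,c')=O(\rho/\eps)$ by walking one step up to the not-well-separated parent call; and (4) pack $\Theta(\rho)$-side cells at the $\le 3$ admissible levels into a disk of radius $O(\rho/\eps)$, giving $O(\eps^{-2})$ partners per cell. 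Two minor remarks: your bound $r(\widehat c)\le 4\rho$ can be tightened to $2\rho$ by also applying the ratio invariant to the parent call, though the looser bound costs only a constant; and the aside about $c''\subsetneq c'$ is not actually needed for the packing, since you already sum over at most three levels and the cells within a single quadtree level are disjoint by definition.
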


\begin{lemma}[\cite{har2011geometric}]\label{lem:similar-size}
    For a pair $(c,c')\in\mathcal W$,
    we have $r(c')/2 \leq r(c) \leq 2r(c')$.
\end{lemma}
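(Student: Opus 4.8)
The plan is to prove the claimed bound by tracking a single invariant along the recursion tree of Algorithm~\ref{alg:wspd}: \emph{every invocation $\textsf{WSPD}(a,b)$ that produces at least one pair satisfies $r(b)/2 \le r(a) \le 2r(b)$}, i.e.\ the two cells passed to it have side lengths within a factor of two. A pair $(c,c')$ is added to $\mathcal{W}$ only by the line returning $\{\{c,c'\}\}$, and at that moment $(c,c')$ are exactly the (possibly swapped) arguments of the current call; so establishing this invariant immediately yields the lemma.

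First I would check the base case. The top-level call is $\textsf{WSPD}(c_r,c_r)$, where $c_r$ is the root of $\mathcal{Q}$, so $r(a)=r(b)$ and the invariant holds trivially. For the inductive step, consider a call $\textsf{WSPD}(a,b)$ satisfying the invariant that does not terminate early (it does not hit an empty cell, it is not the degenerate case $c=c'$ with $r(c)=0$, and the well-separatedness test fails so that it recurses). After the swap step we may assume $r(a)\le r(b)$, and by the invariant $r(b)\le 2r(a)$. The algorithm then splits $b$ into its four quadtree children $b_1,\dots,b_4$, each of side length $r(b)/2$, and recurses on $\textsf{WSPD}(b_i,a)$. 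For every nonempty child $b_i$ we have $r(b_i)=r(b)/2 \le r(a)$ (using $r(b)\le 2r(a)$) and $r(a)\le r(b) = 2r(b_i)$, so the pair $(b_i,a)$ again has side lengths within a factor of two, completing the induction. Calls that return $\emptyset$ contribute no pairs and need not be considered.

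Since every pair of $\mathcal{W}$ is produced by a call whose arguments satisfy the invariant, we conclude $r(c')/2 \le r(c) \le 2r(c')$ for every $(c,c')\in\mathcal{W}$. I do not expect any genuine obstacle here; the only points needing a little care are that the invariant must be stated symmetrically (to survive the swap step) and that the sole branching step in Algorithm~\ref{alg:wspd} halves the side length of exactly one of the two cells, which is immediate from the definition of a quadtree.
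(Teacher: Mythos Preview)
The paper does not give its own proof of this lemma; it is stated with a citation to \cite{har2011geometric} and used as a black box. Your inductive argument on the recursion tree of Algorithm~\ref{alg:wspd} is correct and is exactly the standard proof of this fact: after the swap one always splits the larger of the two cells, so the ratio of side lengths never exceeds~$2$. (In fact, since quadtree side lengths are powers of two, the two cells at every call are either the same size or differ by exactly a factor of two, which is the sharper form of your invariant.)
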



Now we are ready to define a quadtree-based spanner $S=(P,E(S))$ of a point set $P\subseteq[\Delta]^2$.
For each cell $c$ of the quadtree $\mathcal Q$, its \emph{representative}
is defined as the lexicographically smallest point in $c\cap P$.
For each pair $(c,c')$ of $\mathcal W$,
we add the edges between the representatives of $c$ and $c'$ to the edge set of the spanner $S$.
Here, the edge length is set to $d(c,c')$.

\begin{lemma}[\cite{har2011geometric}]
    The cost of the minimum spanning tree of $S$
    is at most $(1+\eps)$ times the cost of the Euclidean minimum spanning tree of $P$.
\end{lemma}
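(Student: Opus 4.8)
The plan is to show that the quadtree-based spanner $S$ is a $(1+\eps)$-spanner of the complete Euclidean graph on $P$, and then observe that the minimum spanning tree weight of any $(1+\eps)$-spanner is within a $(1+\eps)$ factor of $\MST(P)$. The second implication is standard: since $S$ preserves all pairwise distances up to a $(1+\eps)$ factor and contains every vertex of $P$, any spanning tree of $S$ is a spanning subgraph of the complete graph whose edge lengths are inflated by at most $(1+\eps)$, so $\MST(S)\le(1+\eps)\MST(P)$; conversely $\MST(S)\ge\MST(P)$ since every edge of $S$ has length at least the true Euclidean distance of the pair of points it connects (the representatives lie inside $c$ and $c'$, and $d(c,c')$ as defined is the distance between centers, so a short additional argument is needed here — see below). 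So the crux is the spanner property.

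First I would fix two points $x,y\in P$ and let $(c,c')\in\mathcal W$ be the unique well-separated pair with $x\in c$, $y\in c'$, which exists and is unique by the defining property of a WSPD. Let $p$ and $q$ be the representatives of $c$ and $c'$ respectively, so that $pq$ is an edge of $S$ of assigned length $d(c,c')$. By Lemma~\ref{lem:similar-size} we have $r(c),r(c')\le 2\max\{r(c),r(c')\}$, and well-separatedness gives $\max\{r(c),r(c')\}\le \eps\, d(c,c')$. Since $x,p\in c$ we have $\|x-p\|\le \sqrt2\, r(c)$ (the diameter of a square cell), and similarly $\|y-q\|\le\sqrt2\, r(c')$; also $|\,\|p-q\| - d(c,c')\,|\le \sqrt2(r(c)+r(c'))/2$ because $p,q$ are within half a cell-diameter of the respective centers. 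Combining these via the triangle inequality yields $\|x-y\|\le \|p-q\| + O(\eps)\,d(c,c')$ and $d(c,c')\le \|x-y\|/(1-O(\eps))$, so that the spanner distance between $x$ and $y$ — realized by following the single edge $pq$, or by concatenating with the (shorter, recursively spanned) connections of $x$ to $p$ within $c$ and $y$ to $q$ within $c'$ — is at most $(1+O(\eps))\|x-y\|$. Rescaling $\eps$ by a constant absorbs the $O(\cdot)$, giving the $(1+\eps)$-spanner bound; this is exactly the argument of Har-Peled~\cite{har2011geometric}, which we may cite.

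The main obstacle, and the only genuinely delicate point, is handling the fact that the spanner edge for the pair $(c,c')$ connects the \emph{representatives} $p,q$ rather than $x,y$ themselves, so a naive bound only controls $\|p-q\|$, not $\|x-y\|$; one must argue recursively that $x$ is connected to $p$ and $y$ to $q$ within the spanner by paths of total length $O(\eps)\|x-y\|$ (each such sub-connection sits in a strictly smaller cell and is handled by a lower WSPD pair, so an induction on cell size terminates), and then sum the geometric series of these correction terms over the $O(\log\Delta)$ levels — here Lemma~\ref{lem:degree} and Lemma~\ref{lem:similar-size} keep the blow-up bounded. Once that telescoping is in place the rest is routine triangle-inequality bookkeeping, and the lower bound $\MST(S)\ge\MST(P)$ follows since contracting each spanner edge $pq$ to the true segment $xy$ (for its associated pair) can only shorten a spanning tree while keeping connectivity, so I would state the lemma as a direct consequence of the spanner property and cite~\cite{har2011geometric,callahan1995decomposition} for the detailed calculation.
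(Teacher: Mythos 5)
The paper does not prove this lemma; it is stated with a bare citation to Har-Peled~\cite{har2011geometric}, so there is no in-paper argument to compare against. Your sketch is the standard one for WSPD-based spanners (prove the $(1+\eps)$-spanner property by induction on pair scale, then replace each Euclidean MST edge by a spanner path), and it correctly identifies the representative-vs-endpoints issue as the delicate step, so it would serve as a fine expansion of the citation.

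One detail to flag: your parenthetical remark that the converse bound $\MST(S)\ge\MST(P)$ holds ``since every edge of $S$ has length at least the true Euclidean distance of the pair of points it connects'' is not accurate as stated. The assigned edge length is $d(c,c')$, the distance between \emph{cell centers}, and this can be \emph{smaller} than $\|p-q\|$ when the representatives sit near the far corners of their cells; what one actually has, via the well-separation inequality $\max\{r(c),r(c')\}\le\eps\,d(c,c')$, is that $d(c,c')$ and $\|p-q\|$ (and indeed $\|x-y\|$ for any $x\in c$, $y\in c'$) agree up to a $1\pm O(\eps)$ factor. This only gives $\MST(P)\le(1+O(\eps))\MST(S)$ rather than $\MST(P)\le\MST(S)$. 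Since the lemma under discussion only asserts the direction $\MST(S)\le(1+\eps)\MST(P)$, this inaccuracy does not affect the proof of the stated claim, but it should be corrected if you intend to use the lower-bound direction elsewhere (and the algorithm does implicitly rely on it when relating the estimator back to the Euclidean MST).
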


To estimate the cost of the minimum spanning tree of $P$, we traverse a subgraph of $S$ using range counting queries only. Here, we use the following lemma.

\begin{lemma}\label{lem:cell traversal}
    Consider a grid cell $c$ of $\mathcal Q$ of side length $r/2$.
    We can find all grid cells $c'$ of $\mathcal Q$ of side length $r/2$ such that there is an edge in $S$ between a point in $c$ and a point in $c'$ of length at most $r$ using $O(\eps^{-2})$ range counting queries.
\end{lemma}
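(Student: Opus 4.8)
The plan is to exploit the bounded-degree property of the WSPD (Lemma~\ref{lem:degree} and Lemma~\ref{lem:similar-size}) together with the local structure of the quadtree. Fix the cell $c$ of side length $r/2$. Any spanner edge incident to a point in $c$ comes from a WSPD pair $(a,a')$ with the representative of one side lying in $c$; since $c\cap P\neq\emptyset$ we may as well assume $a$ is the quadtree cell of the same level as $c$ that equals $c$ (or one of the $O(1)$ quadtree cells of that level that could be the ``$a$'' side — because by Lemma~\ref{lem:similar-size} the two cells of a pair differ in side length by at most a factor $2$, so the relevant cells on $c$'s side have side length in $\{r/4, r/2, r\}$, and $c$ is contained in exactly one such cell of each of these three levels). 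So the first step is: enumerate the $O(1)$ ancestors/equal-level cells $\hat c$ of $c$ with $r(\hat c)\in\{r/4,r/2,r\}$ that have nonempty intersection with $P$ — each nonemptiness test is one range counting query.

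The second step is to bound, for each such $\hat c$, the set of candidate partner cells $\hat c'$. If $(\hat c,\hat c')\in\mathcal W$ and the spanner edge has length $d(\hat c,\hat c')\le r$, then since the pair is well-separated, $r(\hat c')\le \eps\, d(\hat c,\hat c')\le \eps r$ — wait, more carefully: well-separatedness gives $\max\{r(\hat c),r(\hat c')\}\le \eps\, d(\hat c,\hat c')$, and $r(\hat c)\ge r/4$, so $d(\hat c,\hat c')\ge r/(4\eps)$, which combined with $d(\hat c,\hat c')\le r$ forces $\eps\ge 1/4$ unless... this is where I need to be careful: the edge length is $d(\hat c,\hat c')$, the distance between centers, and we want this $\le r$. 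Since $\max\{r(\hat c),r(\hat c')\}\le\eps\,d(\hat c,\hat c')$ and also by Lemma~\ref{lem:similar-size} $r(\hat c)$ and $r(\hat c')$ are within a factor $2$, both partner cells have side length $\le \eps r \cdot 2 = O(\eps r)$... but they are quadtree cells so side lengths are powers of two; thus $r(\hat c')$ is one of $O(1)$ values, and in fact since $r(\hat c)\ge r/4$ and $r(\hat c')\ge r(\hat c)/2 \ge r/8$, combined with $r(\hat c')\le\eps\,d(\hat c,\hat c')$: actually the cleanest route is to note $d(\hat c,\hat c')\le r$ means the centers are within distance $r$, so $\hat c'$ lies within an $O(1)$-radius (measured in cell units of side length $\Theta(r)$, using Lemma~\ref{lem:similar-size}) neighborhood of $\hat c$. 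The number of quadtree cells of side length $\Theta(r)$ within distance $r$ of $\hat c$ is $O(1)$. So the total number of candidate partner cells, over all $O(1)$ choices of $\hat c$, is $O(1)$.

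The third step is: for each candidate pair $(\hat c,\hat c')$ among this $O(1)$-size list, call the $O(1)$-query membership test in $\mathcal W$ (the membership oracle asserted in the paragraph preceding Algorithm~\ref{alg:wspd}, which runs $\textsf{WSPD}$-style recursion but only along the $O(1)$ relevant branches — this is where the $O(\eps^{-2})$ factor enters, since along the recursion we may split a cell into its four children and recurse, accumulating $O(\eps^{-2})$ surviving pairs per starting cell by the packing bound implicit in Lemma~\ref{lem:degree}), and if the pair is in $\mathcal W$ and $d(\hat c,\hat c')\le r$, then project $\hat c'$ down: the spanner edge connects the representative of $\hat c'$, which is some point in a unique side-$(r/2)$ cell $c'$ (namely the side-$(r/2)$ quadtree cell containing that representative); we identify $c'$ by one more $O(\log\Delta)$-query representative computation, or — since we only need to know \emph{which} side-$(r/2)$ cell it is and $r(\hat c')\in\{r/4,r/2,r\}$ — by $O(1)$ range counting queries locating the lexicographically smallest point of $\hat c'\cap P$ at granularity $r/2$. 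Collecting all resulting $c'$ gives the desired set. Summing up, the query count is $O(\eps^{-2})$ as claimed.

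\textbf{Main obstacle.} The delicate point is precisely pinning down the constant-size candidate set and the membership test: one must argue that (i) only $O(1)$ quadtree cells of ``compatible'' side length can serve as the $c$-side of a relevant pair (this uses Lemma~\ref{lem:similar-size} plus the fact that $c$ has a unique ancestor at each level), and (ii) for each, only $O(1)$ partner cells $\hat c'$ can give an edge of length $\le r$ (packing argument), and (iii) the $\mathcal W$-membership recursion, restricted to these $O(1)$ roots, touches only $O(\eps^{-2})$ quadtree nodes total — the $\eps^{-2}$ coming from the standard WSPD packing bound (as in Lemma~\ref{lem:degree}) rather than from the domain size. Getting all the off-by-factor-of-two bookkeeping right (edge length is center-to-center distance, side lengths are powers of two, Lemma~\ref{lem:similar-size} gives a factor-$2$ slack on each side) is the part that needs care but is routine.
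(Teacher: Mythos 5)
Your proposal starts from a false premise about which WSPD cells can generate the edges in question, and you never recover from it. You claim the cells $\hat c$ on the $c$-side of a relevant pair must have side length in $\{r/4, r/2, r\}$, but in fact the opposite constraint holds: if $(\hat c,\hat c')\in\mathcal W$ gives an edge of length $d(\hat c,\hat c')\le r$, then well-separatedness forces $\max\{r(\hat c),r(\hat c')\}\le\eps\,d(\hat c,\hat c')\le\eps r$, so both cells are \emph{small} — side length $O(\eps r)$ or less, hence proper descendants of $c$, not cells of comparable size. You actually notice this contradiction mid-proof (``forces $\eps\ge 1/4$ unless\dots'') but then drift back to the $\Theta(r)$-scale picture without resolving it. Once you accept that the relevant WSPD cells have side length $\le\eps r$, your candidate set is no longer $O(1)$: inside $c$ there are $\Theta(\eps^{-2})$ quadtree cells at scale $\eps r$, and the actual WSPD pair that generates a given edge may involve cells \emph{much} smaller still, so naively recursing to find them is not bounded by $O(\eps^{-2})$ queries at all. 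Your third step's explanation that the $\eps^{-2}$ factor comes from ``the $\mathcal W$-membership recursion'' is also off — the paper's membership test for a single pair is $O(1)$ queries, as stated just before Algorithm~\ref{alg:wspd}.

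The paper's proof takes a different, and necessary, route. It does \emph{not} try to enumerate the WSPD pairs directly. It first observes that the target cells $c'$ themselves (side length $r/2$, within distance $3r$) number only $O(1)$, and reduces the problem to deciding a yes/no question for each candidate pair $(c,c')$. For that decision it runs a \emph{truncated} version of $\textsf{WSPD}(c,c')$: the recursion stops once cells reach side length around $\eps r/10$, bounding the work to $O(\eps^{-2})$ nodes. Because the true WSPD pair may lie strictly below this truncation level, the paper introduces a relaxed ``witness'' condition — either the truncated pair is itself a well-separated pair at distance $\le r$, or it is not well-separated but $d(\bar c,\bar c')+r(\bar c)+r(\bar c')\le r$, which guarantees some deeper WSPD pair of distance $\le r$ exists — and proves both directions of correctness, using Lemma~\ref{lem:similar-size} to control the slack. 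This truncation-plus-witness idea is the crux of the argument, and it is entirely absent from your proposal.
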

\begin{proof}
    If $c$ and $c'$ are connected by an edge of length at most $r$,
    $d(c,c')\leq 3r$.
    Moreover, we have $O(1)$ grid cells $c'$ with $r(c)=r(c')$ and $d(c,c')\leq 3r$.
    For each such pair $(c,c')$, we check if
    they are connected by an edge of length at most $r$. If so, we say $(c,c')$ is a \emph{yes-instance}.
    Given two such cells $c$ and $c'$, we do the following.
    As $(c,c')$ is not well-separated for any $\eps\leq 3$, it is sufficient to check if there is a pair in $\mathcal W$
    consisting of cells, one contained in $c$ and one contained in $c'$, of distance at most $r$. This can be done by applying $\textsf{WSPD}(c,c')$, but the number of queries we use here can be as large as $\Theta(n)$. To handle this issue,
    we stop traversing $\mathcal Q$ when we reach a grid cell of size at most a certain threshold. For this purpose, we let $i^*$ be the index where $2^{i^*}\leq \eps r/10 < 2^{i^*+1}$.
    We conclude that $(c,c')$ is a yes-instance
    if there exists a pair $(\bar c,\bar c')$ of non-empty grid cells of side length at least $2^{i^*}$ with $\bar c\subseteq c$ and $\bar c'\subseteq c'$ such that either (1) $(\bar c, \bar c')\in\mathcal W$ and $d(\bar c,\bar c')\leq r$, or (2) $(\bar c,\bar c')$ is not well-separated, and $d(\bar c,\bar c')+r(\bar c)+r(\bar c')\leq r$.
    In this case, we say that such a pair $(\bar c, \bar c')$ is a \emph{witness}.
    Since there are $O(\eps^{-2})$ grid cells $\bar c$ of side length at least $2^{i^*}$ with $\bar c \subseteq c$ (or $\bar c \subseteq c'$),
    the number of candidate pairs for $(\bar c,\bar c')$ is $O(\eps^{-4})$.
    But by slightly modifying $\textsf{WSPD}(c,c')$, it is not difficult to see that we can complete this task using
    $O(\eps^{-2})$ range counting queries.


    Now we prove the correctness of the algorithm.
    First, consider the case that our algorithm finds a witness $(\bar c, \bar c')$.
    If $(\bar c, \bar c')\in\mathcal W$ and $d(\bar c,\bar c')\leq r$, then
    the edge introduced by this pair must
    connect $c$ and $c'$, and thus we are done.
    If $(\bar c, \bar c')$ is not well-separated and $d(\bar c, \bar c')+r(\bar c)+r(\bar c') \leq r$, then
    there must be a pair in $\mathcal W$ consisting of cells, one in $\bar c$ and one in $\bar c'$. Moreover,
    such cells have distance at most $r$, and thus $(c,c')$ is a yes-instance.

    Now consider the other direction.
    Assume that $(c,c')$ is a yes-instance.
    That is, assume that there is an edge of $S$ between a point in $a\in c$ and a point $a'\in c'$ of length at most $r$.
    Then consider the pair $(c_a,c_{a'})$ of $\mathcal W$ defining the edge $aa'$.
    As we observed at the beginning of this proof, we have $c_a \subseteq c$ and $c_{a'}\subseteq c'$.
    If $r(c_a)\geq \eps r/10$ and $r(c_{a'})\geq \eps r/10$, then we are done since the algorithm would have considered this pair as a candidate for a witness. Thus without loss of generality, we assume that
    $r(c_a)<\eps r/10$.
    Then $r(c_{a'})\leq \eps r/5$ by Lemma~\ref{lem:similar-size}.
    By the construction, neither $(\parent(c_a), c_{a'})$ nor $(c_a,\parent(c_{a'}))$ is contained in $\mathcal W$,
    where $\parent (c_a)$ denotes the parent cell of $c_a$ in the quadtree.
    Thus the distance between $\parent (c_{a'})$ and $c_a$ must
    be less than $\parent(c_{a'})/ \eps \leq r/5$.
    Therefore, $a$ and $a'$ is sufficiently close: their distance is at most $r/5+2\eps r/5 \leq 2r/5$.
    Now consider two cells $(\bar c, \bar c')$ of side length $2^{i^*}$ containing $a$ and $a'$.
    Then we have $d(\bar c, \bar c')+r(\bar c)+r(\bar c')\leq r$, and thus
    our algorithm will detect $(\bar c, \bar c')$ as a witness, and then conclude that $(c,c')$ is a yes-instance.
\end{proof}

\subsection{Reduction to the component counting problem}
\label{sub:component_counting}

We can reduce the minimum spanning problem on $S$ to the problem of counting connected components of a subgraph of $S$.
Let $\opt$ denote the cost of the minimum spanning tree of $S$.
Let $S_i$ denote the subgraph of $S$ induced by the edges of $S$ of length at most $(1+\eps)^{i}$ for $i=0,1,2,\ldots,\log_{1+\eps}(2\Delta)$.
Here, the vertex set of $S_i$ is $V$, and $S_i$ might have several isolated vertices. Let $c_i$ denote the number of connected components of $S_i$.
Due to the following lemma given by~\cite{czumaj09estim_weigh_metric_minim_spann}, it is sufficient to estimate the number of
connected components of $S_i$.
To make our paper complete, we give its proof here.

    \begin{lemma}[\cite{czumaj09estim_weigh_metric_minim_spann}]\label{lem:reduction}
    $\opt \leq n - \Delta + \eps \sum_{i=0}^{\log_{(1+\eps)}
    (2\Delta)-1} (1+\eps)^ic_i \leq (1+\eps) \opt$
\end{lemma}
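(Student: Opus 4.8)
The statement to prove is Lemma~\ref{lem:reduction}: that
\[
\opt \leq n - \Delta + \eps \sum_{i=0}^{\log_{(1+\eps)}(2\Delta)-1} (1+\eps)^i c_i \leq (1+\eps)\opt.
\]
The plan is to express $\opt$ directly in terms of the counts $c_i$ via Kruskal's algorithm, and then sandwich the step function $\eps\sum (1+\eps)^i c_i$ between $\opt$ and $(1+\eps)\opt$ by comparing areas under a staircase. First I would recall the classical identity: if we run Kruskal's algorithm on $S$ and let $e_1 \le e_2 \le \dots \le e_{n-1}$ be the lengths of the edges chosen into the MST (there are $n-1$ of them since $S$ is connected — note the number of vertices is $n$, and here $\Delta$ will play the role coming from the preprocessing where $\opt \ge \Delta$, so I should be careful about whether the vertex count is $n$ or something tied to $\Delta$; reading the setup, $S_i$ has vertex set $V$ of size $n$, but the telescoping uses a ``$-\Delta$'' term, so actually I expect the intended identity is $\opt = \sum_{i\ge 0}(\text{number of MST edges of length} > (1+\eps)^i)\cdot(\text{width of the }i\text{-th multiplicative interval})$, rewritten using $c_i - 1 = $ (number of MST edges strictly longer than $(1+\eps)^i$) and $c_0 \le n$, with the $\Delta$ arising because after the domain-shrinking preprocessing the minimum inter-point distance is at least $1$ and $\opt \ge \Delta$).

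Concretely, the key step is the observation that for a graph on $n$ vertices, the number of connected components $c_i$ of the subgraph $S_i$ (edges of length $\le (1+\eps)^i$) equals $n$ minus the number of MST edges of length $\le (1+\eps)^i$; equivalently the number of MST edges of length in $((1+\eps)^{i-1}, (1+\eps)^i]$ is $c_{i-1} - c_i$. Summing by parts, $\opt = \sum_{e \in \mathrm{MST}(S)} \mathrm{len}(e) = \sum_i \ell_i (c_{i-1}-c_i)$ for appropriate representatives $\ell_i$, and by Abel summation this telescopes to something of the form $\sum_i c_i\big((1+\eps)^{i+1} - (1+\eps)^i\big) + (\text{boundary terms}) = \eps \sum_i (1+\eps)^i c_i + (\text{boundary})$. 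The boundary term is exactly $n - \Delta$: the ``$n$'' comes from $c_{-1} = n$ (no edges, $n$ isolated vertices — or from the smallest length scale, where each MST edge has length $\ge 1$), and the ``$-\Delta$'' comes from the top of the range, since $\mathrm{MST}(S)$ is connected so $c_{\log_{1+\eps}(2\Delta)} = 1$ and the largest edge length is at most the diameter $\le 2\Delta$; I'd need to track these endpoints carefully to get precisely $n-\Delta$ rather than $n-1$, which is presumably where the preprocessing assumption $\opt \ge \Delta$ and ``no two points closer than distance $1$'' gets used.

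For the two inequalities: the step function $\eps(1+\eps)^i c_i$ over the interval $[(1+\eps)^i, (1+\eps)^{i+1})$ is a left-Riemann-style approximation. Since each MST edge of length $\ell$ with $(1+\eps)^{i-1} < \ell \le (1+\eps)^i$ contributes $\ell$ to $\opt$ but is ``charged'' a block of total width $\eps(1+\eps)^{i-1} \le \ell \cdot \eps/(1) \le \ell$ times ... — more precisely, rounding each edge length down to the nearest power of $(1+\eps)$ loses at most a factor $(1+\eps)$, giving $\opt/(1+\eps) \le (\text{rounded sum}) \le \opt$; then the rounded sum, after adding back the $n - \Delta$ boundary correction, equals $\eps\sum(1+\eps)^i c_i$ up to the telescoping identity, which yields both bounds. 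So the lower bound $n - \Delta + \eps\sum \le (1+\eps)\opt$ follows because the rounded-up version overcounts each edge by at most $(1+\eps)$, and the upper bound $\opt \le n-\Delta+\eps\sum$ follows because the same quantity dominates the true (unrounded) lengths after the telescoping.

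\textbf{Main obstacle.} The routine part is the Abel summation; the delicate part is pinning down the two boundary terms so that the constant is exactly $n - \Delta$ and the index range is exactly $0$ to $\log_{1+\eps}(2\Delta) - 1$ — this requires using that after preprocessing the minimum pairwise distance is $\ge 1$ (so the smallest relevant scale is $(1+\eps)^0 = 1$ and $c_0$ can be replaced by $n$ minus the count of zero... rather, all $n$ points are distinct so $S_{-1}$ has $n$ components) and that $\mathrm{diam}(P) \le 2\Delta$ with $\opt \ge \Delta$ (so that at scale $2\Delta$ everything is one component and the $-\Delta$ absorbs the discrepancy between $n-1$ MST edges and the normalization). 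Since this lemma is quoted verbatim from~\cite{czumaj09estim_weigh_metric_minim_spann}, I would follow their accounting, adapting the ``$-1$'' to ``$-\Delta$'' using the scaled domain established in the preprocessing paragraph, and present the Abel-summation identity plus the per-edge rounding argument as the two ingredients.
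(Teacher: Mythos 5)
Your proposal follows essentially the same route as the paper: you bucket MST edges by length into geometric scales, identify that the number of MST edges in the $i$-th bucket equals $c_{i-1}-c_i$ via the cut property (the paper states this claim directly and also proves it through the cut property), round each edge length to the nearest power of $(1+\eps)$ to get the factor-$(1+\eps)$ sandwich, and finally Abel-sum $\sum_i (1+\eps)^i(c_{i-1}-c_i)$ into the $\eps\sum_i(1+\eps)^i c_i$ form with boundary terms. Your caution about exactly how the boundary term becomes $n-\Delta$ is actually warranted — the paper's own Abel-summation display appears to have off-by-one and constant-factor slips in the intermediate lines — but the approach and all key ingredients you identify match the paper's proof.
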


\begin{proof}
    Let $M^*$ be the set of edges of a minimum spanning tree of $S$.
    Then we can partition the edges into subsets $M_0, M_1,\ldots, M_s$ with respect to the edge lengths where
    $M_i$ denotes the set of edges of $M^*$ of length in $[(1+\eps)^i, (1+\eps)^{i+1}]$.
    Recall that the minimum distance of $P$ is at least one, and the maximum distance of $P$ is at most $2\Delta$.
    Thus $s\leq \log_{(1+\eps)} (2\Delta)$.

    Now it suffices to analyze the size of each subset $M_i$.
    We claim that $|M_i|=c_{i-1}-c_{i}$.
    Consider the subgraph $S_i^*$ of $S_i$ induced by $M_0\cup M_1\cup\ldots\cup M_i$.
    Clearly, each component of $S_i^*$ must be a tree.
    Each component of $S_i^*$ is subdivided into subtrees in $S_{i-1}^*$. Here, the number of edges of $S_i^*$ not appearing in $S_{i-1}^*$ is exactly $c_{i-1}-c_{i}$ due to the cut property.
    Then we have the following.
    \[
    \opt \leq \sum_{i=0}^{s}(1+\eps)^i (c_{i-1}-c_{i}) \leq (1+\eps)\opt.
    \]
    Then we can reformulate the inequality as follows.
    \begin{align*}
        \sum_{i=0}^{s}(1+\eps)^i (c_{i-1}-c_{i})
        &=
        \sum_{i=0}^{s}(1+\eps)^i c_{i-1} - \sum_{i=0}^{s}(1+\eps)^i c_{i} \\
        &= n - \Delta + \sum_{i=0}^{s-1}((1+\eps)^i-(1+\eps)^{i-1}) c_i \\
        &= n - \Delta + \sum_{i=0}^{s-1} \eps(1+\eps)^i c_i
    \end{align*}
    Therefore, the lemma holds.
\end{proof}

In the following, we let $w=\log_{1+\eps}(2\Delta)$.
For each index $i$ with $i\leq w$, we estimate the number $c_i$ of components of $S_i$.
Chazelle et al.~\cite{chazelle05approx_minim_spann_tree_weigh_sublin_time} presented a sublinear-time algorithm for estimating the number of components of a graph with bounded average degree.
However, there is an issue: The additive error of CRT algorithm is $n/t$, where $t$ is the number of queries.
This additive error becomes $(1+\eps)^i\cdot (n/t)$ for the estimator of the cost of a minimum spanning tree by Lemma~\ref{lem:reduction}. If $i$ is large, this error cannot be negligible.
These issues can be easily resolved by applying CRT algorithm
on a graph $S_i'$ obtained by contracting several vertices of $S_i$. This approach is also used in~\cite{frahling2005sampling} to design dynamic streaming algorithms for the minimum spanning tree problem.
The construction of $S_i'$ uses the following lemma.

\begin{lemma}\label{lem:non-empty}
    Let $c$ be a cell of the quadtree with $r(c)\leq (1+\eps)^i/4\leq 2r(c)$. Any two vertices in $c\cap V(S_i)$ are connected in $S_i$.
\end{lemma}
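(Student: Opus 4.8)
The plan is to show that any two points in $c \cap V(S_i)$ can be connected using only edges of $S$ of length at most $(1+\eps)^i$, which is exactly what it means for them to lie in the same component of $S_i$. Fix a cell $c$ with $r(c) \leq (1+\eps)^i/4 \leq 2r(c)$, and let $a, a'$ be any two points of $P$ inside $c$. The key observation is that any two points inside $c$ are at Euclidean distance at most $\diam(c) = \sqrt{2}\,r(c) \leq \sqrt{2}(1+\eps)^i/4 < (1+\eps)^i$, so intuitively the spanner edge ``serving'' the pair $(a,a')$ should be short enough. I would make this precise via the well-separated pair decomposition $\mathcal W$: there is a unique pair $(c_a, c_{a'}) \in \mathcal W$ with $a \in c_a$ and $a' \in c_{a'}$, and the spanner edge introduced for this pair connects the representatives of $c_a$ and $c_{a'}$ with length $d(c_a, c_{a'})$.

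The main step is to bound $d(c_a, c_{a'})$ and the side lengths $r(c_a), r(c_{a'})$. Since $(c_a,c_{a'})$ is well-separated, $\eps\, d(c_a, c_{a'}) \geq \max\{r(c_a), r(c_{a'})\}$; combined with Lemma~\ref{lem:similar-size} this controls both cells' sizes in terms of $d(c_a, c_{a'})$. On the other hand, $c_a$ and $c_{a'}$ are contained in $c$ (or at least their points $a, a'$ are), so $d(c_a, c_{a'}) \leq \diam(c) + r(c_a)/2 + r(c_{a'})/2$, roughly; rearranging with the well-separatedness inequality gives $d(c_a, c_{a'}) \leq \diam(c)/(1 - O(\eps)) \leq (1+\eps)^i$ for $\eps$ small enough, using $\diam(c) = \sqrt{2}\,r(c) \leq \sqrt{2}(1+\eps)^i/4$ and the slack between $\sqrt{2}/4 \approx 0.354$ and $1$. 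Hence the spanner edge for $(c_a, c_{a'})$ has length at most $(1+\eps)^i$ and belongs to $S_i$.

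This does not yet directly connect $a$ to $a'$: the edge connects the \emph{representatives} $\rho(c_a)$ and $\rho(c_{a'})$, not $a$ and $a'$ themselves. To close this gap I would argue by induction on the quadtree level (or on the number of points in $c$): it suffices to show every point $a \in c \cap P$ is connected in $S_i$ to the representative $\rho(c)$ of $c$. For a child cell $c^-$ of $c$ that is non-empty, $r(c^-) = r(c)/2 \leq (1+\eps)^i/8$, so by the induction hypothesis all its points are mutually connected in $S_i$, and in particular $\rho(c^-)$ is; and the pair in $\mathcal W$ separating $\rho(c)$ from $\rho(c^-)$ — or a short chain of such pairs — is handled by the same distance bound as above, since both points lie in $c$. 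Chaining these connections up the (at most one) level needed gives that all of $c \cap P$, hence all of $c \cap V(S_i)$, lies in one component of $S_i$.

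The hard part will be making the distance arithmetic in the second paragraph fully rigorous while keeping track of where $c_a$ and $c_{a'}$ sit relative to $c$ — in particular, ruling out the degenerate case $c_a = c_{a'} = c$ (which forces $r(c) = 0$ and is trivial) and handling the fact that $c_a, c_{a'}$ could be strictly smaller cells positioned near opposite corners of $c$. I expect the constants to work out comfortably because of the factor-$4$ slack built into the hypothesis $r(c) \leq (1+\eps)^i/4$, but the bookkeeping of the $O(\eps)$ terms from Lemma~\ref{lem:similar-size} and well-separatedness is the step most likely to need care.
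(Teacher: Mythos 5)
Your proof takes a genuinely different route from the paper's. The paper's proof is a short contradiction argument: if $u,v\in c\cap V(S_i)$ lay in different components of $S_i$, then every path in the spanner $S$ from $u$ to $v$ would need an edge longer than $(1+\eps)^i$, yet $\|u-v\|\leq 2r(c)\leq (1+\eps)^i/2$; this contradicts the $(1+\eps)$-spanner guarantee (phrased in the paper via the cut property), which ensures $S$ contains a $u$-$v$ path whose edges are all of length at most $(1+\eps)\|u-v\|<(1+\eps)^i$. In contrast, you re-derive the existence of that short path from scratch, by tracing the WSPD hierarchy inside $c$. Both are valid strategies; the paper's is a black-box invocation of the spanner property and is much shorter, while yours is more self-contained but carries more bookkeeping.

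Two concrete simplifications/fixes you should be aware of. First, your distance arithmetic in the second paragraph is unnecessarily delicate: since $(c_a,c_{a'})$ is well-separated, $c_a$ and $c_{a'}$ are disjoint quadtree cells each containing a point of $c$, so neither can contain $c$, forcing $c_a\subsetneq c$ and $c_{a'}\subsetneq c$. Then $d(c_a,c_{a'})\leq \diam(c)=\sqrt{2}\,r(c)\leq \sqrt{2}(1+\eps)^i/4<(1+\eps)^i$ with no $O(\eps)$ slack to track at all. Second, the induction as sketched does not quite close: your induction hypothesis is the lemma statement, whose hypothesis $r(c)\leq(1+\eps)^i/4\leq 2r(c)$ forces $r(c)\in[(1+\eps)^i/8,(1+\eps)^i/4]$, and a child cell $c^-$ has $r(c^-)=r(c)/2$, which generically falls \emph{below} that window, so the lemma does not apply to $c^-$. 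You need to strengthen the statement being inducted on — e.g., ``for every quadtree cell $c'$ with $r(c')\leq(1+\eps)^i/4$, all points of $c'\cap P$ are connected in $S_i$'' — and induct on the quadtree level. With that fix and the simplified distance bound, the explicit construction does go through, but it essentially reproves the WSPD-spanner path property that the paper uses directly.
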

\begin{proof}
    Assume to the contrary that $u$ and $v$ are contained in different components, say $C_u$ and $C_v$, in $S_i$.
    Then there must be a path in $H$ between $C_u$ and $C_v$ consisting of edges of length larger than $(1+\eps)^i$.
    Note that the edge weight of $uv$ is at most $2r(c)\leq (1+\eps)^i/2$, and this violates the cut property.
    Therefore, the lemma holds.
\end{proof}

Then we construct $S_i'$ as follows.
Let $\mathcal G$ be the grid of side length $(1+\eps)^{i^*}/4$.
For each cell $c$ of $\mathcal G$, we contract all vertices in $c\cap V(S_i)$ into a single vertex. After we do this for all cells of $\mathcal G$, then we remove all parallel edges and loops from the resulting graph. Notice that
as the purpose of $S_i'$ is to estimate
the number of components of $S_i'$, we do not need to care about edge weights of $S_i'$.
The number of connected components remains the same.
In the following, we estimate the number of components of $S_i'$ for each index $i\in \{0,1,2,\ldots,w\}$.

\subsection{Estimating the number of components}
Now we are ready to estimate the number $c_i$ of connected components of $S_i$ for an index $i$.
We use the algorithm by Chazelle et al.~\cite{chazelle05approx_minim_spann_tree_weigh_sublin_time} by setting parameters properly.
This algorithm estimates the number of connected components of a graph with a bounded average degree.
In our case, the maximum degree of $S_i'$ is also bounded.
Thus we can simplify CRT algorithm as we will describe below.
The algorithm is based on the following simple observation.
Let $G=(V,E)$ be a graph.
For a vertex $v\in V$, let $d(v)$ denote the degree of $v$ in $H$. Let $m(v)$ denote the number of edges in the connected component of $G$ containing $v$. Then let $\beta(v)=d(v)/(2m(v))$ if $v$ is not isolated. If $v$ is isolated in $G$, then we let $\beta(v)=1$.
\begin{lemma}[{\cite{chazelle05approx_minim_spann_tree_weigh_sublin_time}}]
    The number of connected components of $G$ is $\sum_{v\in V} \beta(v)$.
\end{lemma}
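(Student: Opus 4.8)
The plan is to prove that the number of connected components of $G=(V,E)$ equals $\sum_{v\in V}\beta(v)$ by showing that the contribution of each connected component to the sum is exactly $1$. First I would fix an arbitrary connected component $C$ of $G$ and split into two cases. If $C$ consists of a single isolated vertex $v$, then by definition $\beta(v)=1$, so the total contribution of $C$ to $\sum_{v\in V}\beta(v)$ is $1$, as desired. Otherwise $C$ has at least one edge; let $m_C$ denote the number of edges of $C$ and observe that every vertex $v\in C$ satisfies $m(v)=m_C$ since $m(v)$ is defined as the number of edges in the component containing $v$.

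In the second case I would compute the contribution of $C$ directly:
\[
\sum_{v\in C}\beta(v)=\sum_{v\in C}\frac{d(v)}{2m(v)}=\frac{1}{2m_C}\sum_{v\in C}d(v).
\]
Now the key identity is the handshake lemma applied to the component $C$ viewed as a graph in its own right: $\sum_{v\in C}d(v)=2m_C$, because summing degrees counts each edge of $C$ exactly twice (edges of $C$ have both endpoints in $C$ since $C$ is a connected component, so no edge is counted once). Substituting gives $\sum_{v\in C}\beta(v)=\frac{2m_C}{2m_C}=1$.

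Finally I would sum over all connected components. Since the components partition $V$, we have $\sum_{v\in V}\beta(v)=\sum_{C}\sum_{v\in C}\beta(v)=\sum_{C}1$, which is precisely the number of connected components of $G$. This completes the proof.

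The argument is essentially a bookkeeping identity, so there is no serious obstacle; the only point requiring a moment's care is the case distinction for isolated vertices, which is exactly why $\beta$ is defined separately there — without the convention $\beta(v)=1$ for isolated $v$ the formula $d(v)/(2m(v))$ would be $0/0$. One should also note that the handshake lemma is being applied within a single component, using the fact that a connected component contains all edges incident to its vertices, so that degrees computed in $G$ and in $C$ coincide.
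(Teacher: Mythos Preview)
Your proof is correct. The paper does not supply its own proof of this lemma; it is quoted directly from \cite{chazelle05approx_minim_spann_tree_weigh_sublin_time} as a known fact, so there is no in-paper argument to compare against. Your component-by-component handshake computation is exactly the standard justification.
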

CRT algorithm takes $r=O(1/\eps^2)$ samples $v_1,v_2,\ldots,v_r$ from $V$. Then for each sampled vertex $v_j$, it estimates $\beta_j=\beta(v_j)$. Let $\hat\beta_j$
be the estimator of $\beta_j$. Then this algorithm simply returns $(n/r)\cdot \sum_{i=1}^r \hat{\beta_i}$.
However, if the connected component of $G$ containing $v_i$ contains a large number of vertices, estimating $m(v_i)$ requires
a large number of queries.
Thus it only counts the number of connected components of $G$ of size less than a certain threshold. This induces the additive error in the estimator for the number of connected components of $G$. If the maximum degree of $G$ is constant,
one can compute $\beta_j$ exactly and deterministically in  the case that its connected component in $G$ has size $\sqrt n$.

\subparagraph{Algorithm.}
First, we set the threshold to $t=\sqrt n$.
We pick $r=O(1/\eps^2)$ random samples
$\cell_1,\cell_2,\ldots,\cell_r$ from $V(S_i')$ uniformly at random.
We call them the \emph{seeds}.
Recall that a vertex of $V(S_i')$ corresponds to a non-empty grid cell of $\mathcal G$, where $\mathcal G$ is the grid of side length $(1+\eps)^{i}$. This can be done using $O(\sqrt n\log n)$ range counting queries as in Algorithm~\ref{alg:cell:sampling}. 
This sampling algorithm also gives an estimator $\hat n$ of the number of non-empty cells of $\mathcal G$ within $O(1)$ relative error by Corollary~\ref{cor:sample-estimator}.

For each seed $\cell_j$, let $v_j$ denote the vertex of $S_j'$ corresponding to $\cell_j$. We first compute the
degree $d_j'$ of $v_j$ using $O(\eps^{-2})$ range counting queries using Lemma~\ref{lem:cell traversal}. 
To compute $m_j'$, we
apply BFS on $S_i'$ from $v_j$ until we visit $t$ vertices of $S_i'$. 
Note that the edge set of $S_i$ is exactly the set of all edges of $S$ of length $[1,(1+\eps)^i]$, and the side length of a cell of $\mathcal G$ is at most $(1+\eps)^i/2$.
Thus we can use Lemma~\ref{lem:cell traversal} to apply BFS on $S_i'$.
There are three possible cases.
If traversal visits more than $t$ vertices of $S_i'$, we let $\beta_j'=0$. This will induce a small additive error as we will see later.
If $v_j$ is isolated in $S_i'$, let $\beta_j'=2$.
If BFS successfully terminates, we enumerate its all neighbors in $S_i'$ of the vertices visited by the BFS. In this way, we can compute the number $m_j'$ of edges in the component of $S_i'$, and thus we can obtain $\beta_j'$.
Then we return $\hat c_i= (\hat n/r)\cdot \sum_{i=1}^r \beta_i'$.

\begin{lemma}\label{lem:component-analysis}
    We have $|(1-\eps)c_i-\hat{c_i}| \leq  \opt/(1+\eps)^i$ with probability at least $2/3$. 
\end{lemma}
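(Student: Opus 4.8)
The plan is to combine the accuracy guarantee of the CRT estimator (applied to the bounded-degree graph $S_i'$) with the relative-error guarantee on $\hat n$, and then translate the resulting additive error on $c_i$ into an additive error on $\opt$ via Lemma~\ref{lem:reduction}. First I would record two deterministic facts about $S_i'$: its maximum degree is $O(\eps^{-2})$ (each representative lies in $O(\eps^{-2}\log\Delta)$ WSPD pairs by Lemma~\ref{lem:degree}, and after contraction and removal of parallel edges the degree of each contracted cell is $O(\eps^{-2})$ by Lemma~\ref{lem:cell traversal}), and that the number of vertices of $S_i'$ is some $n_i' \le n$, estimated by $\hat n$ with $\hat n \in [(1-\eps)n_i', (1+\eps)n_i']$ — wait, Corollary~\ref{cor:sample-estimator} only gives $O(1)$ relative error, so I would instead run the sampler with enough samples, $\tilde O(\sqrt n)$, to push this to a $(1\pm\eps)$ relative error on $n_i'$ and absorb the failure probability into the overall constant. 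Condition on this event from now on.

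Next I would analyze the estimator $\sum_{j=1}^r \beta_j'$. By the observation of Chazelle et al., $c_i$ (the number of components of $S_i$, which equals the number of components of $S_i'$) equals $\sum_{v} \beta(v)$ where $\beta(v)=d(v)/(2m(v))$. Our $\beta_j'$ agrees with $\beta(v_j)$ except when the component of $v_j$ has more than $t=\sqrt n$ vertices, in which case $\beta_j'=0$; since each $\beta(v)$ of a large component is at most $d(v)/(2m(v)) \le 1/\sqrt n$ (a component with $m$ edges has at least $m$ vertices... more carefully, a component with $k \ge \sqrt n$ vertices has $\ge k-1$ edges so $\beta(v) \le d(v)/(2(k-1)) = O(\eps^{-2}/\sqrt n)$), the total mass we drop over all vertices of $S_i'$ is at most $O(\eps^{-2}\sqrt n)$, i.e. at most $\eps c_i + O(\eps^{-2}\sqrt n)$ — actually the clean way is: the truncated sum $\sum_v \beta'(v)$ satisfies $c_i - O(\eps^{-2}\sqrt n) \le \sum_v \beta'(v) \le c_i$. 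Then a Chernoff/Hoeffding bound on the $r=O(\eps^{-2}\log n)$ i.i.d. samples $\beta_j' \in [0,2]$ gives $(n_i'/r)\sum_j \beta_j' = \sum_v \beta'(v) \pm \eps n_i'$ with the desired probability (I'd scale $r$ up by a $\log$ factor, still $\tilde O(1)$). Combining with the $(1\pm\eps)$ bound on $\hat n$ versus $n_i'$ yields
\[
\hat c_i = (1\pm\eps)\,c_i \pm O(\eps^{-2}\sqrt n) \pm O(\eps\, n),
\]
and after adjusting constants/rescaling $\eps$ this is $|(1-\eps)c_i - \hat c_i| = O(\eps n + \eps^{-2}\sqrt n)$.

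Finally I would convert this into the claimed bound $|(1-\eps)c_i-\hat c_i|\le \opt/(1+\eps)^i$. Here I use the preprocessing assumption $\opt \ge \Delta$ and $\Delta = \Theta(n/\eps)$, so $\opt = \Omega(n/\eps)$ and hence $\eps n = O(\eps^2 \opt)$; also $(1+\eps)^i \le 2\Delta = O(n/\eps)$ so $\opt/(1+\eps)^i = \Omega(\opt \eps / n) \cdot \opt / \dots$ — the cleaner route is that for the estimator we only ever use $\hat c_i$ inside the sum $\sum_i \eps (1+\eps)^i \hat c_i$ of Lemma~\ref{lem:reduction}, so what I actually need is that the per-level error $\eps(1+\eps)^i |(1-\eps)c_i - \hat c_i|$ is a negligible fraction of $\opt$; since $(1+\eps)^i = O(\Delta) = O(n/\eps)$ the additive term $\eps^{-2}\sqrt n$ contributes $O(\eps^{-2}\sqrt n \cdot n) = o(\opt^2/\dots)$, which after dividing through is $O(\eps \cdot \opt)$ once we recall $\opt \ge n-\Delta$ is itself $\Omega(n)$ in the nontrivial regime and $\opt \ge \Delta \ge (1+\eps)^i/2$. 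The main obstacle I expect is precisely this bookkeeping: making the additive $\eps^{-2}\sqrt n$ term provably small relative to $\opt/(1+\eps)^i$ for every level $i$ simultaneously, which is why the statement is phrased with the $\opt/(1+\eps)^i$ scaling rather than a uniform additive bound — one has to use that large $i$ forces $\opt$ (and hence $\opt/(1+\eps)^i$ together with the number of levels) to be large enough, exploiting $\opt \ge \Delta = \Theta(n/\eps)$ and the contraction that makes $S_i'$ have only $O(\sqrt n)$-bounded component sizes matter. The union bound over the $O(\eps^{-2}\log\Delta)$ seeds and over the $\hat n$-estimation event costs only a constant increase in the sample size, keeping the query complexity $\tilde O(\sqrt n)$.
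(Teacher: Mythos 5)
There is a genuine gap, and you actually put your finger on it yourself: your truncation error estimate (which, done correctly, is at most $n^*/\sqrt n$ where $n^*$ is the number of non-empty cells at level $i$ --- the sum of $\beta(v)$ over all vertices of a single component is exactly $1$, so you should not have the extra $\eps^{-2}$ factor) is a \emph{global} quantity, and trying to dominate it by $\opt/(1+\eps)^i$ via $\opt\ge\Delta=\Theta(n/\eps)$ and $(1+\eps)^i\le 2\Delta$ does not close: for large $i$ the target $\opt/(1+\eps)^i$ can be as small as $\Theta(1/\eps^2)$, while your additive error term is polynomial in $n$.

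The missing idea is a \emph{per-level packing bound} on $n^*$ itself: $n^*\le 4\,\opt/(1+\eps)^i$ (once $n^*\ge 4$). The paper proves this by looking at the adjacency graph of non-empty grid cells at side length $\approx(1+\eps)^i$; that graph is planar, hence $4$-colorable, hence contains an independent set of size $n^*/4$, and any two points lying in distinct cells of that independent set are at distance at least $(1+\eps)^i$, so any spanning tree already pays $\ge (n^*/4)(1+\eps)^i$. With this in hand, the truncation (expectation) error $\hat n/t\le 2n^*/\sqrt n=O(\opt/(1+\eps)^i)$ falls out directly at each level, with no cross-level bookkeeping or appeal to $\opt\ge\Delta$ needed. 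The same bound $n^*(1+\eps)^i=O(\opt)$, together with $c_i(1+\eps)^i=O(\opt)$ (the MST must contain $c_i-1$ edges longer than $(1+\eps)^i$), is also exactly what closes the concentration step: the paper uses Chebyshev, bounds $\Var[\hat c_i]=O(\eps\,c_i n^*)$, and then
\[
\frac{\Var[\hat c_i]\,(1+\eps)^{2i}}{\opt^2}\;=\;O(\eps)\cdot\frac{c_i(1+\eps)^i}{\opt}\cdot\frac{n^*(1+\eps)^i}{\opt}\;=\;O(\eps).
\]
Your Chernoff/Hoeffding route would give concentration on $\sum_j\beta'_j$, but without $n^*=O(\opt/(1+\eps)^i)$ you still cannot translate the resulting additive slack into the stated $\opt/(1+\eps)^i$ scale. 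So the structure of your argument (CRT truncation plus concentration plus $\hat n$ estimation) is right, but the lemma genuinely hinges on the packing bound you do not have.
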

\begin{proof}   
    In the following, we assume the event that $\hat n\in [n^*/2,2n^*]$, where $n^*$ is the number of vertices of $S_i'$, which occurs with a constant probability. 
    We first analyze the expected value $\Ex{\hat c_i}$ of $\hat c_i$.
    We say a component of $S_i'$ is \emph{small} if it has at most $t$ vertices (cells), and \emph{large} otherwise. 
    Note that the number of large components is $O(n^*/t)$.
    Let $Y_j$ denote the random variable where, for the $j$th random seed $v_j$,
    $Y_j= d'(v)/(2m'(v))$ if $v_j$ is contained in a small component,
    $Y_i=1$ if it is isolated,
    and $Y_i=0$ otherwise.
    Recall that seeds are chosen from $V(S_i')$ uniformly at random.
    Let $C_\text{large}$ denote the set of vertices of $S_i'$
    contained in the large components of $S_i'$.
    First, we can easily see that
    $\Ex{\hat{c_i}}\leq c_i$.
    Then we analyze its lower bound.
\[
   \Ex{\hat{c_i}} \geq \frac{\hat n}{r} \sum_j \Ex{Y_j} = \frac{\hat n}{\hat n-|C_\text{large}|} \sum_{v\notin C_\text{large}} \frac{d'(v)}{2m'(v)}
    \geq  c_i - \hat n/t.
\]
    Here, notice that $n^*\leq 4\cdot \opt/(1+\eps)^i$ if $n^*\geq 4$. To see this,
    consider the adjacency graph between the non-empty cells of the grid $\mathcal G$: a vertex corresponds to a grid cell, and two vertices are adjacent if they share a common edge on their boundaries. Then this graph is planar, and thus it has a 4-coloring.
    Thus it has an independent set of size $n^*/4$.
    The distance between any two points from different cells in the independent set is at least $(1+\eps)^i$, and thus $\opt\geq (n^*/4)\cdot (1+\eps)^i$.
    Therefore, the additive error in the lower bound is at most $\opt/(1+\eps)^i$. Finally,
    we have
    \[
   \Ex{\hat{c_i}}  \geq  c_i - \hat n/t_i \geq c_i-O(1)\cdot \opt/(1+\eps)^i.
\]

    Now we can analyze the probability that $|\hat c_i- c_i| \geq \opt/(1+\eps)^i$ using Chebyshev's inequality in Lemma~\ref{lem:chebyshev}.
    Here, for each random variable $Y_j$, we have $0\leq Y_i \leq 1$. Let $Y=Y_1+\ldots+Y_r$.
    Therefore,
    \[
    \Var{[Y_i]} \leq \Ex{Y_i^2} \leq \Ex{Y_i} \leq c_i/n^*, \text{ and }
    \Var{[Y]} = \sum_i \Var{[Y_i]} \leq (rc_i)/n^*.
    \]
    Therefore, since $r=O(\eps^{-2})$ we have
    \[
    \Var{[\hat c_i]} = \Var{[(\hat n/r) \cdot Y]} \leq (\hat n^2 /r^2 )\cdot (rc_i)/{n^*} \leq 4c_in^*\eps
    \]    
    Recall that $|\Ex{\hat c_i} - c_i| \leq O(1)\cdot \opt/(1+\eps)^i$. 
    By Chebyshev's inequality, for any $a\geq 1$, 
    the probability that $|\hat c_i - \Ex{\hat c_i}|$ at least $O(\opt)/(1+\eps)^i$ is at most
    \[
     (4c_in^*\eps) \cdot (1+\eps)^{2i}/O(\opt^2) \leq \eps (c_i\cdot (1+\eps)^i /O(\opt)) \leq O(\eps).
\]
Therefore, we have the claimed bound.
\end{proof}


\subsection{Putting it all together}

In this subsection, we summarize our argument and analyze the approximation factor of the estimator for the Euclidean minimum spanning tree.
For each index $i\in \{0,1,\ldots,\log_{1+\eps}(2\Delta)\}$,
let $S_i$ denote the subgraph of $G$ induced by the edges of $E(S)$ of length in $[1,(1+\eps)^{i+1}]$.
Then we estimate the number of components of $S_i$ (and $S_i'$) using $\tilde O(\sqrt n)$ range counting queries. Let $\hat c_i$ denote the estimator of $c_i$.
Then we return $\hat L= n - \Delta + \eps \sum_{i=0}^{\log_{(1+\eps)}
    (2\Delta)} (1+\eps)^i\hat c_i$.

\begin{lemma}
\label{lem:emst_proof}
    With a constant probability, we have 
    $(1-O(\eps))\cdot \opt \leq \hat L \leq  (1+O(\eps)) \cdot \opt$,
    where $\opt$ denotes the cost of a minimum spanning tree of $P$. 
\end{lemma}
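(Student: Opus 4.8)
The plan is to combine Lemma~\ref{lem:reduction} with the per-level estimation guarantee of Lemma~\ref{lem:component-analysis} by a union bound over all $w+1 = \log_{1+\eps}(2\Delta)+1$ levels. Recall that by the preprocessing reduction we have $\Delta = O(n/\eps)$, so $w = O(\eps^{-1}\log(n/\eps))$ is polylogarithmic in $n$ (for fixed $\eps$); this is what makes a union bound affordable after probability amplification. So the first step is to amplify the success probability of each $\hat c_i$: for each index $i$, run the component-counting estimator of the previous subsection $O(\log w)$ times independently and take the median. By a standard Chernoff argument on the median, for each $i$ we get $|(1-\eps)c_i - \hat c_i| \le O(\opt)/(1+\eps)^i$ with probability at least $1 - 1/(3(w+1))$, using only an $O(\log w)$ factor more range counting queries, so still $\tilde O(\sqrt n)$ total. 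A union bound over all $i \in \{0,\dots,w\}$ then gives that this bound holds simultaneously for every level with probability at least $2/3$.

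Conditioned on that good event, the second step is purely arithmetic. We have $\hat L = n - \Delta + \eps\sum_{i=0}^{w}(1+\eps)^i\hat c_i$, and we compare it to the middle quantity $Q := n - \Delta + \eps\sum_{i=0}^{w-1}(1+\eps)^i c_i$ appearing in Lemma~\ref{lem:reduction}, which satisfies $\opt \le Q \le (1+\eps)\opt$. Writing $\hat c_i = (1-\eps)c_i + e_i$ with $|e_i| \le O(\opt)/(1+\eps)^i$, the error term is
\[
\hat L - Q' = \eps\sum_{i=0}^{w}(1+\eps)^i e_i - \eps^2\sum_{i=0}^{w}(1+\eps)^i c_i,
\]
where $Q'$ is $Q$ with the additional top term $(1+\eps)^w c_w$ folded in (a harmless adjustment, or one handles the top term separately since $c_w$ is $O(1)$). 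The first sum is bounded by $\eps \cdot (w+1) \cdot O(\opt) = O(\eps w)\cdot \opt$, which is not quite good enough as stated — so here one must be slightly more careful and use that the additive error in Lemma~\ref{lem:component-analysis} can be made $\eps'\cdot\opt/(1+\eps)^i$ for an arbitrarily small $\eps'$ (by picking $r = O(\eps'^{-2})$ seeds and threshold $t = \sqrt n$), and then set $\eps' = \eps/w$; this keeps the query count $\tilde O(\sqrt n)$ since $w$ is polylogarithmic, and makes the first sum $O(\eps)\cdot\opt$. The second sum is $O(\eps^2)\cdot\sum_i (1+\eps)^i c_i \le O(\eps)\cdot\eps\sum_i(1+\eps)^i c_i = O(\eps)\cdot\opt$ by Lemma~\ref{lem:reduction} again. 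Combining, $|\hat L - Q| \le O(\eps)\cdot\opt$, and since $\opt \le Q \le (1+\eps)\opt$ we conclude $(1-O(\eps))\opt \le \hat L \le (1+O(\eps))\opt$. Rescaling $\eps$ by a constant at the outset gives the stated $(1\pm\eps)$ bound.

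The third step is to verify that the total query complexity is $\tilde O(\sqrt n)$: for each of the $w = \tilde O(1)$ levels we run the estimator $O(\log w)$ times, and each run uses $O(\sqrt n\log n)$ queries for cell sampling (Algorithm~\ref{alg:cell:sampling}) plus $O(\eps^{-2})$ BFS steps each costing $O(\eps^{-2})$ queries via Lemma~\ref{lem:cell traversal}, with the BFS truncated at $t = \sqrt n$ vertices, hence $O(\sqrt n \cdot \eps^{-2})$ queries per run. Multiplying through by $w\cdot O(\log w) = \tilde O(1)$ factors keeps everything at $\tilde O(\sqrt n)$. Also fold in that the preprocessing to enforce $\Delta = O(n/\eps)$ costs only $O(\log\Delta)$ queries, and that all queries on the rescaled domain are simulated by expanding each quadtree-cell query range by a factor of $n/\eps$ in the original domain, as described.

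The main obstacle is the one flagged above: the naive union bound over $w$ levels loses a factor of $w$ in the additive error, which would degrade the multiplicative guarantee. The fix — running the component estimator with additive-error parameter $\eps/w$ rather than $\eps$, and checking this only inflates the query count by $\mathrm{poly}(w) = \mathrm{polylog}(n)$ factors — must be stated carefully, and one should double check that Lemma~\ref{lem:component-analysis}'s proof indeed yields an $\eps'\cdot\opt/(1+\eps)^i$ additive error for a tunable $\eps'$ (it does, via the variance bound $\Var[\hat c_i] \le 4 c_i n^* \eps'$ and Chebyshev, together with $n^* \le 4\opt/(1+\eps)^i$). A secondary subtlety is the top index $i = w$: Lemma~\ref{lem:reduction} sums only to $w-1$, so one must argue the contribution of $(1+\eps)^w \hat c_w$ is negligible — this follows because at that scale $c_w$ is a small constant (the graph $S_w$ is connected or nearly so, since $(1+\eps)^w \ge 2\Delta$ exceeds the diameter), so the term is $O(\Delta) = O(\opt)\cdot O(\eps)$ after the rescaling, hence absorbable.
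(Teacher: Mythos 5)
Your proof follows the same high-level route as the paper's: reduce via Lemma~\ref{lem:reduction}, invoke the per-level guarantee of Lemma~\ref{lem:component-analysis}, amplify each level's success probability, and union-bound over the $w+1$ levels. The difference is that you are noticeably more careful than the paper's own argument about how the per-level additive errors accumulate. The paper's proof asserts that the additive error over all levels is at most $\eps n$ by citing the bound $|\hat c_i - c_i| \le n/(1+\eps)^i$ and noting there are ``$\log_{1+\eps}\sqrt n$ such indices,'' but the weight $(1+\eps)^i$ in $\hat L$ exactly cancels the $(1+\eps)^{-i}$ in the per-level bound, so the naive sum yields $\eps \cdot (\text{number of levels}) \cdot n$, which is larger by a factor $w$. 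You explicitly flag this factor and fix it by tightening the per-level additive parameter to $\eps' = \eps/w$; you also note the missing top index $i=w$ in Lemma~\ref{lem:reduction}'s sum and argue it away correctly. Both of these are genuine refinements over what is written in the paper.

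One small quantitative slip: to drive the per-level additive error down to $\eps'\,\opt/(1+\eps)^i$ \emph{and} keep the Chebyshev failure probability a small constant, tracing through the variance calculation in Lemma~\ref{lem:component-analysis} shows you need $r = \Omega(\eps^{-1}\eps'^{-2})$ seeds, not $r = O(\eps'^{-2})$ — the Chebyshev bound there is $O\bigl(1/(r\,\eps\,\eps'^2)\bigr)$ after using $n^* \le 4\opt/(1+\eps)^i$ and $c_i(1+\eps)^i = O(\opt/\eps)$, so the extra $\eps^{-1}$ is needed. With $\eps' = \eps/w$ and $w = O(\eps^{-1}\log(n/\eps))$ this is still $r = \mathrm{poly}(\eps^{-1})\cdot\mathrm{polylog}(n)$, so your conclusion that the total query count stays $\tilde O(\sqrt n)$ is unaffected; just adjust the stated number of seeds. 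Also double-check the bias from truncated BFS: it is $O(n^*/t) = O(n^*/\sqrt n)$ and is \emph{not} tuned by $r$, so the requirement $n^*/\sqrt n \le \eps'\opt/(1+\eps)^i$ translates (again via $n^* \le 4\opt/(1+\eps)^i$) to $\eps' \ge 4/\sqrt n$, which holds for your choice $\eps' = \eps/w$ once $n$ is larger than a fixed polynomial in $1/\eps$; it is worth stating this hypothesis explicitly.
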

\begin{proof}
    By Lemma~\ref{lem:reduction}, it suffices to show that
    \[
    \Ex{\sum_{i=0}^{\log_{1+\eps}\Delta} (1+\eps)^i \hat c_i} = (1+\eps)\sum_{i=0}^{\log_{1+\eps}(2\Delta)} (1+\eps)^ic_i + \eps \cdot n.
    \]
    This is because $\opt \geq n$. The additive error of $\eps\cdot n$ is at most $\eps\cdot \opt$, which is negligible.
    If we look at the proofs of the lemmas more carefully,
    by increasing the number of samples chosen by each step by a factor of $O(\log\Delta)$, we can increase the success probability to $1-1/\Delta$. Then with a constant probability, we have the desired bound for all indices $i$. 
    We assume such an event occurs. 
    For all indices $i$, we have
    $\hat c_i\in [c_i - n/(1+\eps)^i, c_i+n/(1+\eps)^i]$ 
    by the proof of Lemma~\ref{lem:component-analysis}.
    Since we have $\log_{1+\eps}\sqrt n$ such indices, the additive error that comes from all indices will be at most $\eps\cdot n$. So far, we have focused on the cost of the spanner. However, we need to state our result with respect to the cost of a minimum spanning tree of $P$. Since     
    the cost of a minimum spanning tree of the spanner is $(1+\eps)\cdot \opt$, this 
    induces a factor of $(1+\eps)$, 
    and thus the lemma holds.
\end{proof}

Therefore, we have the following theorem.
\begin{theorem}
    \label{thm:mst}
    Given a set $P$ of size $n$ in a discrete space $[\Delta]$, 
    we can estimate the cost of a minimum spanning tree of $P$ 
    within a factor of $(1+\eps)$ with a constant probability using $\tilde O(\sqrt n)$
    range counting queries. 
\end{theorem}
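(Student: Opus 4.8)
The plan is to assemble the ingredients developed throughout this section. First I would carry out the preprocessing step: query the smallest square containing $P$ using $O(\log\Delta)$ range counting queries, restrict attention to it (so that its side length is at least $\opt$), and conceptually snap points to a grid of side length $(\eps/n)\Delta$, which perturbs $\opt$ by at most $\eps\cdot\opt$ and lets us assume $\Delta=O(n/\eps)$. Since we cannot move points explicitly, I would exploit that every query range used later is a quadtree cell, and simulate the snapped query by expanding the range by the scaling factor in the original domain, so the count returned is exactly the count of the snapped instance.

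Next I would invoke the quadtree-based WSPD spanner $S$ of $P$: by the cited result its minimum spanning tree has cost at most $(1+\eps)\opt$, so it suffices to estimate the MST cost of $S$. By Lemma~\ref{lem:reduction} this reduces to estimating, for each index $i\in\{0,\dots,w\}$ with $w=\log_{1+\eps}(2\Delta)=\tilde O(1)$, the number $c_i$ of connected components of the threshold subgraph $S_i$. For each $i$ I would work with the contracted graph $S_i'$ (one vertex per non-empty cell of the grid $\mathcal G$ of side length roughly $(1+\eps)^i/4$), which has the same number of components by Lemma~\ref{lem:non-empty}, and run the Chazelle--Rubinfeld--Trevisan style estimator: sample $O(\eps^{-2})$ cells almost uniformly using Algorithm~\ref{alg:cell:sampling} (which also yields an $O(1)$-approximation $\hat n$ of the number of vertices of $S_i'$ by Corollary~\ref{cor:sample-estimator}), and for each seed run a BFS on $S_i'$ truncated at $t=\sqrt n$ visited vertices, using Lemma~\ref{lem:cell traversal} to enumerate neighbours with $O(\eps^{-2})$ queries per cell. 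This gives each $\beta_j'$ and hence the estimator $\hat c_i$; Lemma~\ref{lem:component-analysis} then certifies $|(1-\eps)c_i-\hat c_i|\le \opt/(1+\eps)^i$ with constant probability.

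Finally I would combine: return $\hat L = n-\Delta+\eps\sum_i (1+\eps)^i\hat c_i$. Boosting each per-index success probability to $1-1/\Delta$ by inflating all sample sizes by an $O(\log\Delta)$ factor, a union bound over the $\tilde O(1)$ indices makes all estimates simultaneously accurate with constant probability; the additive errors then telescope to at most $\eps\cdot n\le\eps\cdot\opt$, and Lemma~\ref{lem:emst_proof} yields $(1-O(\eps))\opt\le\hat L\le(1+O(\eps))\opt$, after which rescaling $\eps$ by a constant gives the claimed $(1\pm\eps)$ bound. The query count is $\tilde O(\sqrt n)$ per index --- dominated by the cell sampling and the BFS truncated at $\sqrt n$ vertices, each visited vertex costing $\tilde O(1)$ queries via Lemma~\ref{lem:cell traversal} --- times $\tilde O(1)$ indices, hence $\tilde O(\sqrt n)$ overall.

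The main obstacle I expect is controlling the additive error of the component-count estimator so that it stays negligible after the scaling by $(1+\eps)^i$ in Lemma~\ref{lem:reduction}: the truncated BFS forces an additive error of order $\hat n/t=\hat n/\sqrt n$ in $\hat c_i$, which becomes $(1+\eps)^i\hat n/\sqrt n$ in the MST estimate. The structural fact that rescues this is that the non-empty cells of $\mathcal G$ induce a planar adjacency graph, hence contain an independent set of size $\Omega(n^*/4)$ whose pairwise distances are at least $(1+\eps)^i$, forcing $n^*\le 4\opt/(1+\eps)^i$; this is exactly what makes $(1+\eps)^i\hat n/\sqrt n\le O(\opt)/\sqrt n$ and, summed over the $\tilde O(1)$ scales, at most $\eps\opt$. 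The other delicate point is purely operational: none of the graph primitives --- uniform vertex sampling, degree computation, BFS on $S_i'$ --- may touch $P$ directly, so each must be realized through Lemma~\ref{lem:cell traversal} and the cell-sampling procedure within the stated $\tilde O(\sqrt n)$ query budget.
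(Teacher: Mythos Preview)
Your proposal is correct and follows essentially the same approach as the paper: the preprocessing to ensure $\Delta=O(n/\eps)$, the WSPD spanner, the reduction to component counting via Lemma~\ref{lem:reduction}, the contracted graphs $S_i'$, the CRT-style estimator with almost-uniform cell sampling and BFS truncated at $\sqrt n$, the planarity-based bound $n^*\le 4\opt/(1+\eps)^i$ that tames the additive error, and the final boosting by an $O(\log\Delta)$ factor before combining via Lemma~\ref{lem:emst_proof}. The paper presents exactly this sequence of ingredients across Section~\ref{apd:mst:algo} and simply states the theorem as their consequence.
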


\section{Lower bound for the minimum spanning tree problem}
In this section, we show that 
any randomized constant-factor approximation algorithm for the Euclidean minimum spanning tree problem uses $\Omega(n^{1/3})$ range counting queries. 
For this, we construct a distribution $\mu$ of instances
where any randomized algorithm using $o(n^{1/3})$ queries fails to obtain a constant-factor approximate solution with  probability at least $1/3$. 

Let $[\Delta]^2$ be a discrete domain with $\Delta=O(n)$. 
We subdivide $[\Delta]^2$ into $16n^{1/3}$ equal-sized cells where each has side length $4n^{5/6}$. 
See Figure~\ref{fig:mst-lowerbound}. 
In the middle of each cell, we place either the \emph{strip} gadget or the \emph{uniform} gadget. 
Each gadget is defined on the domain $[n^{5/6}]^2$ which we further subdivide into $n^{4/3}$ finer cells
of side length $n^{1/6}$. 
For the strip gadget, we put one point to each finer cell on one diagonal of $[n^{5/6}]^2$. 
For the uniform gadget, we put one point to 
the cell in the $i$-th row and the $((i\cdot n^{1/3})\mod n^{2/3})$-th column
for $i=0,1,2,\ldots,n^{2/3}$.
In total, each gadget contains $n^{2/3}$ points. 
Notice that the cost of a minimum spanning tree of the points inside the strip gadget is $\Theta(n^{5/6})$, and 
the cost of a minimum spanning tree of the points inside the uniform gadget is $\Theta(n^{7/6})$. 
Now we define a distribution $\mu$ of instances.
First, with probability $1/2$, we place copies of the strip gadget in all cells of the domain $[\Delta]^2$. 
Then with probability $1/2$, we pick one cell $c$ of the domain $[\Delta]^2$ uniformly at random.
We place a copy of the uniform gadget in $c$, and place copies of the strip gadget in the other cells.
Then we have the following lemma. 
\begin{lemma}\label{lem:lb-cost-mst}
    Let $I$ and $I'$ be two instances of $\mu$ such that $I$ has the uniform gadget, but $I'$ does not use the uniform gadget. Then $\mst(I') \leq 2\cdot \mst(I)$,
    where $\mst(\cdot)$ denotes the cost of a minimum spanning tree.
\end{lemma}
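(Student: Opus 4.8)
The plan is to turn a minimum spanning tree $T$ of $I$ into a spanning tree $T'$ of $I'$ of weight at most $2\,\mst(I)$. Let $c$ be the unique coarse cell of $[\Delta]^2$ carrying the uniform gadget $U$ in $I$; in $I'$ that same cell carries a strip gadget $D$, and the point set $W$ lying outside $c$ is identical in $I$ and in $I'$, since every other cell carries the same strip gadget in both instances. I would split the edges of $T$ into three groups according to $c$: the edges with both endpoints in $W$ (forming a forest $T_W$), the edges with both endpoints in $U$, and the \emph{bridge} edges with one endpoint in $U$ and one in $W$. Because $T$ is connected and both $U$ and $W$ are nonempty, every connected component of $T_W$ is incident to at least one bridge.

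Now build $T'$ on the vertex set of $I'$ as follows: keep $T_W$; add a spanning path of the strip gadget $D$, which has weight $O(n^{5/6})$ since $D$ is $n^{2/3}$ points along a diagonal with consecutive distance $O(n^{1/6})$; and replace every bridge $(u,w)$ of $T$, with $u\in U$ and $w\in W$, by the edge $(d,w)$ for one fixed point $d\in D$. The resulting graph is connected — the path spans $D$, and each component of $T_W$ reaches $D$ through a rerouted bridge — so it contains a spanning tree $T'$ of $I'$. The one geometric estimate needed is that rerouting is cheap: every gadget sits in the middle of a cell of side length $4n^{5/6}$, so any $u\in U$ and $w\in W$ satisfy $\|u-w\|\ge 3n^{5/6}$ (the segment $uw$ must leave the central block of $c$ and then enter the central block of another coarse cell), while $u$ and $d$ both lie in the central $[n^{5/6}]^2$ block of $c$ and hence $\|u-d\|\le\sqrt2\,n^{5/6}$; therefore $\|d-w\|\le\|u-w\|+\sqrt2\,n^{5/6}\le\frac32\|u-w\|$. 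Writing $\mathrm{wt}(\cdot)$ for total edge length and summing,
\[
\mst(I')\ \le\ \mathrm{wt}(T')\ \le\ \mathrm{wt}(T_W)+O(n^{5/6})+\frac32\sum_{\text{bridges}}\|u-w\|\ \le\ \frac32\,\mst(I)+O(n^{5/6}),
\]
where the last step uses $\mathrm{wt}(T_W)+\sum_{\text{bridges}}\|u-w\|\le\mathrm{wt}(T)=\mst(I)$ together with $\sum_{\text{bridges}}\|u-w\|\le\mst(I)$.

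It remains to absorb the additive $O(n^{5/6})$ term, for which I would establish $\mst(I)=\Omega(n^{7/6})$. A short modular-arithmetic check on the columns $(i\cdot n^{1/3})\bmod n^{2/3}$ shows that the $n^{2/3}$ points of the uniform gadget are pairwise at distance at least $n^{1/2}$ (two points with equal column differ by a multiple of $n^{1/3}$ in row; otherwise their columns differ by at least $n^{1/3}$), and each of them is even farther from every point of $W$; hence in $T$ every vertex of $U$ has an incident edge of length $\ge n^{1/2}$, and double counting gives $\mst(I)\ge\frac12\,n^{2/3}\cdot n^{1/2}=\frac12 n^{7/6}$. Since $n^{5/6}=o(n^{7/6})$, for $n$ above an absolute constant $O(n^{5/6})\le\frac12\mst(I)$ and thus $\mst(I')\le 2\,\mst(I)$ (the finitely many smaller $n$ are handled directly, or lie outside the regime of the construction). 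I expect the main obstacle to be the constant: crude order-of-magnitude bounds on the two spanning trees give only $\mst(I')=O(\mst(I))$ with a large hidden factor, so reaching the factor $2$ forces one to exploit both that the replacement strip gadget is short \emph{and} that it is buried in the interior of a large coarse cell, so rerouting $T$'s bridge edges inflates their length by only a $\frac32$ factor; the remaining estimates are routine.
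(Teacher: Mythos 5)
Your proof is correct and uses the same core strategy as the paper: convert a minimum spanning tree $T$ of $I$ into a spanning tree of $I'$ by discarding the edges inside the uniform-gadget cell $c$, inserting a cheap spanner of the replacement strip gadget, and rerouting the bridge edges that leave $c$; then compare the result with a lower bound on $\mst(I)$. The two write-ups diverge in the sub-arguments. For the rerouting cost, the paper invokes the cut property to claim there are only $O(1)$ bridge edges each of length $O(n^{5/6})$ and absorbs them as a small additive term; you instead reroute every bridge $(u,w)$ to a fixed $d\in D$ and show $\|d-w\|\le \frac{3}{2}\|u-w\|$ purely geometrically (the gadget sits $\frac{3}{2}n^{5/6}$ deep inside a coarse cell of side $4n^{5/6}$), which yields a per-edge multiplicative bound and makes counting the bridges unnecessary. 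For the lower bound $\mst(I)=\Omega(n^{7/6})$, the paper charges the $\Theta(n^{1/3})$ inter-cell tree edges, each of length $\Theta(n^{5/6})$, while you charge the $n^{2/3}$ points of the uniform gadget, each incident to a tree edge of length at least $n^{1/2}$; your modular-arithmetic check of pairwise distances inside the uniform gadget is exactly the property that gadget was engineered to have. Both routes are valid; yours is somewhat tighter and sidesteps the paper's rather terse appeal to ``the cut property'' for the $O(1)$-bridge claim, at the expense of a slightly longer geometric estimate.
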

\begin{proof}
    Let $M$ be a minimum spanning tree of $I$.
    Observe that the cost of $M$ is at least $2\cdot n^{7/6}$. 
    This is because $M$ contains at least $n^{1/3}$ edges connecting two points contained different cells in the domain $[\Delta]^2$. Their total length is at least $2\cdot n^{7/6}$. 

    Now we analyze an upper bound on the cost of a minimum spanning tree of $I'$. We can construct a spanning tree of $I'$ from $M$ as follows.
    The two instances are the same, except that $I$ has a uniform gadget in a cell, say $c$,
    and $I'$ has a strip gadget in $c$. 
    By the cut property, there are at most $O(1)$ edges of $M$ 
    having one endpoint in $c$ and one endpoint lying outside of $c$. Moreover, such edges have length at most $8n^{5/6}$. For each such edge, we reconnect it with any vertex in $c$. Then 
    we remove all edges having both endpoints in 
    $c$ from $M$,
    and add the edges of the minimum spanning tree of 
    the strip gadget. 
    In total, the cost of $M$ decreases by 
    at most $n^{7/6}- O(1)\cdot n^{5/6}$. Here, the term $n^{7/6}$ is the cost of the uniform gadget. 
    Therefore, we have
    $\displaystyle \mst(I')-\mst(I) \leq n^{7/6} - O(1)\cdot n^{5/6} \leq (1/2)\cdot n^{7/6} \leq \mst(I)$.
\end{proof}

\begin{figure}
    \centering
    \includegraphics{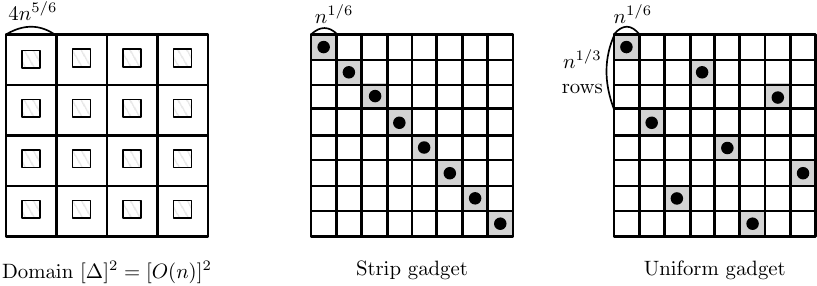}
    \caption{The domain $[\Delta]$ is partitioned into $16n^{1/3}$ cells.
    Each cell contains the strip gadget or the uniform gadget.
    The strip gadget has cost $\Theta(n^{5/6})$ while the uniform gadget has cost $\Theta(n^{7/6})$.}
    \label{fig:mst-lowerbound}
\end{figure}

\begin{restatable}{lemma}{mstlb}\label{lem:distribution-mst}
    Any randomized constant-factor approximation algorithm for the minimum spanning tree problem on a point set of size $n$ in a discrete space $[\Delta]^d$ requires $\Omega(n^{1/3})$ range counting queries.
\end{restatable}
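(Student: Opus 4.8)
The plan is to invoke Yao's minimax theorem on the distribution $\mu$ constructed above, reducing the claim to a statement about deterministic algorithms, and then to argue that a $\mu$-instance hiding a single uniform gadget among the $16n^{1/3}$ cells cannot be told apart from the all-strip instance without $\Omega(n^{1/3})$ range counting queries. First I would fix the approximation regime: by Lemma~\ref{lem:lb-cost-mst} together with the estimates $\mst=\Theta(n^{5/6})$ for a strip gadget and $\mst=\Theta(n^{7/6})$ for a uniform gadget, any instance of $\mu$ containing a uniform gadget has minimum spanning tree cost at least $(1+\gamma)$ times that of the all-strip instance, for an absolute constant $\gamma>0$. Picking a constant $c_0>1$ with $c_0^2<1+\gamma$, the intervals of admissible outputs of a $c_0$-approximation on the two instance families are disjoint, so a $c_0$-approximation algorithm that uses $q$ queries and succeeds with probability at least $2/3$ yields, via Yao, a \emph{deterministic} $q$-query algorithm $\mathcal A$ that, on an input drawn from $\mu$, correctly decides with probability at least $2/3$ whether the input contains a uniform gadget.

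The crux is a locality property: for a fixed query rectangle $Q$ and a cell $\cell$ with gadget sub-square $D_{\cell}\subseteq\cell$ (the central $[n^{5/6}]^2$ square in which its gadget sits), the count returned by $Q$ depends on whether $\cell$ holds the strip or the uniform gadget \emph{only if a corner of $Q$ lies in $D_{\cell}$}. This is where the design of the gadgets enters: both contain exactly $n^{2/3}$ points, and both have the same axis marginals --- projected onto either coordinate, each is $n^{2/3}$ points with exactly one point in every width-$n^{1/6}$ strip of $D_{\cell}$. Writing $Q\cap D_{\cell}=[\alpha_1,\beta_1]\times[\alpha_2,\beta_2]$, if either coordinate range is empty or equals the whole extent of $D_{\cell}$ in that coordinate, then the number of gadget points in $Q\cap D_{\cell}$ is $0$, $n^{2/3}$, or a one-dimensional marginal count, all independent of the gadget; and for both ranges to be proper sub-intervals, a vertical side and a horizontal side of $Q$ must both cut through $D_{\cell}$, which forces a corner of $Q$ inside $D_{\cell}$.

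The counting is then immediate. A query rectangle has four corners and the sub-squares $\{D_{\cell}\}$ are pairwise disjoint, so a single query exposes (can distinguish strip from uniform for) at most four cells, and $\mathcal A$'s $q$ queries expose at most $4q$ cells in total. We may assume $\mathcal A$ answers correctly on the all-strip instance, since otherwise it already fails with probability $1/2$. Fix the deterministic transcript $\mathcal A$ produces on the all-strip instance and let $E$ be the set of at most $4q$ cells its queries expose. For any cell $\cell\notin E$, switching $\cell$'s gadget to the uniform gadget leaves every query answer unchanged, hence --- by adaptivity --- the whole transcript and the output unchanged, so $\mathcal A$ returns its (now wrong) all-strip answer. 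Since $\mu$ plants a uniform gadget in a uniformly random one of the $16n^{1/3}$ cells with probability $1/2$, the failure probability of $\mathcal A$ is at least $\frac12\bigl(1-\frac{4q}{16n^{1/3}}\bigr)=\frac12-\frac{q}{8n^{1/3}}$, which exceeds $1/3$ unless $q=\Omega(n^{1/3})$. For $[\Delta]^d$ with $d\geq 2$ one simply embeds the whole two-dimensional construction in a $2$-flat of $[\Delta]^d$, fixing the remaining coordinates, which does not affect any of the counts.

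The step I expect to require the most care is the locality property, and in particular the matching-marginals claim: it is precisely the reason the strip gadget is taken to be a diagonal and the uniform gadget a spread-out permutation pattern, each with one point per row-strip and one per column-strip. Without matching marginals, a single long query side sweeping across an entire row or column of gadget sub-squares could expose $\Theta(n^{1/6})$ cells at once, and the bound would weaken to $\Omega(n^{1/6})$. The remaining issues are cosmetic: query corners landing on $\partial D_{\cell}$, or queries whose sides align with the gadget lattice, can be handled by a generic infinitesimal perturbation of the gadget placements (or by observing that the count is still a gadget-independent marginal in those degenerate cases).
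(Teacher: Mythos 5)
Your proposal follows the same blueprint as the paper: invoke Yao's minimax principle on the distribution $\mu$, view the deterministic algorithm as a decision tree, observe that a query can only reveal a cell's gadget type if a corner of the query lands in the cell, and then union-bound over the $q$ queries against the random location of the uniform gadget to force $q=\Omega(n^{1/3})$. The two arguments are structurally identical.

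Where you genuinely add value is in the justification of the locality claim. The paper simply asserts ``for any cell $c$ and any query rectangle $Q$, the number of points in $c\cap Q$ remains the same unless $c$ contains a corner of $Q$,'' whereas you spell out exactly why this should hold: when $Q$ has no corner in the gadget sub-square $D_\cell$, the intersection $Q\cap D_\cell$ is empty, all of $D_\cell$, or an axis-parallel strip, and the count in a strip is a one-dimensional marginal that must be gadget-independent. You also correctly flag that this is where the argument could break: if the row and column marginals of the two gadgets do not match, a single long query side can expose $\Theta(n^{1/6})$ cells at once.

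However, this is exactly where your reading of the construction and the paper's literal construction part ways, and the discrepancy is a real gap \emph{in the paper}, not in your reasoning. You describe the uniform gadget as a ``spread-out permutation pattern'' with one point per row-strip and one per column-strip. The paper's formula places a point in row $i$ and column $(i\cdot n^{1/3})\bmod n^{2/3}$ for $i=0,\dots,n^{2/3}-1$. Since $\gcd(n^{1/3},n^{2/3})=n^{1/3}$, this map is $n^{1/3}$-to-one on columns: the gadget has $n^{1/3}$ points in each of $n^{1/3}$ equally spaced columns and none elsewhere, so the column marginals of the strip and uniform gadgets do \emph{not} agree. Concretely, a full-height vertical strip of width $n^{1/6}$ aligned with one of the special columns passes through the gadget sub-square of every cell in a column of the $4n^{1/6}\times 4n^{1/6}$ cell grid, reporting $1$ point for each strip-gadget cell and $n^{1/3}$ points for each uniform-gadget cell --- so $O(n^{1/6})$ such queries determine every gadget type, breaking the claimed $\Omega(n^{1/3})$ bound for the construction as literally written. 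The fix is the one your proof implicitly assumes: replace the multiplier $n^{1/3}$ by one coprime to $n^{2/3}$ (for instance $n^{1/3}+1$), making the uniform gadget a genuine permutation matrix with one point per row-strip and one per column-strip; then both marginals match, your locality argument is airtight, and the $\Omega(n^{1/3})$ bound goes through. I would state this correction explicitly rather than rely on the words ``permutation pattern.'' Your handling of the failure-probability arithmetic and the embedding into a $2$-flat of $[\Delta]^d$ are both fine and slightly cleaner than the paper's.
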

\begin{proof}
    Due to Yao's Minmax theorem,
    it suffices to show that for any deterministic algorithm using $n^{1/3}$ range counting queries, 
    the probability that it estimates the cost of an instance chosen from $\mu$ 
    within a factor of $1.5$ is less than $2/3$. 
    
    Assume to the contrary that such a deterministic algorithm $\mathcal A$ exists. By Lemma~\ref{lem:lb-cost-mst}, it can distinguish the instance containing the uniform gadget and the instances not containing the uniform gadget.
    
    We can consider $\mathcal A$ as a decision tree
    of depth $n^{1/3}$. Here, each node corresponds to an orthogonal range, and each leaf corresponds to the output. 
    Given a query rectangle $Q$, we say that it \emph{hits} a cell $c$ of the domain $[\Delta]^2$ if a corner of $Q$ is contained in $c$. 
    Note that the algorithm cannot get any information for the cells not hit by queries. 
    In particular, for any cell $c$ and any query rectangle $Q$, the number of points in $c\cap Q$ remains the same unless $c$ contains a corner of $Q$. 
    On the other hand, $\mathcal A$ uses only $n^{1/3}$ range counting queries, and thus the probability that any of the $n^{1/3}$ range queries hits the uniform gadget is less than $2/3$.

    More formally, we consider the event that an instance using the uniform gadget is chosen from $\mu$. This even happens with probability at least $1/2$ by construction. Thus to get a success probability at least $2/3$, we need to detect the uniform gadget. 
    Recall that we can detect the uniform gadget only when it is hit by $Q$.  On the other hand, each query hits at most four cells. Notice that the probability that a fixed query range hits the uniform gadget is $1/(4n^{1/3})$ as the total number of cells in $[\Delta]^2$ is $16n^{1/3}$. 
    Therefore, the probability that 
    at least one of the $n^{1/3}$ queries hits the far gadget is $1-(1-1/(4n^{1/3}))^{n^{1/3}} < 1/3$. In other words, the failure probability is at least $2/3$ assuming that we are given an  instance using the uniform gadget. 
    Since the probability that an instance using the uniform gadget is chosen is $1/2$, the total failure probability is larger than $2/3$, and thus the lemma holds.
\end{proof}

\bibliography{references}
\appendix

\section{Concentration bounds}
\begin{lemma}[Chernoff Bound~\cite{Cher}]
\label{lem:cher}
Let $Y_1,\cdots,Y_m$ denote $m$ independent Poisson trials such that $\Pr{Y_i=1}=p_i$ for $1\leq i\leq m$. Let $Y=\sum_{i=1}^m Y_i$ and $\mu=\Ex{Y}$. For $0<\delta<1$, 
\[
  \Pr{|Y-\mu|  \geq \delta\mu}\leq 2e^{-\mu\delta^2/3}
\]
\end{lemma}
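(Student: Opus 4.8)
The plan is to use the standard exponential moment (Bernstein--Chernoff) method, bounding the upper and lower tails separately and then combining them with a union bound to produce the factor of $2$. First I would handle the upper tail $\Pr{Y \ge (1+\delta)\mu}$. For any $t>0$, Markov's inequality applied to the nonnegative random variable $e^{tY}$ gives
\[
\Pr{Y \ge (1+\delta)\mu} = \Pr{e^{tY} \ge e^{t(1+\delta)\mu}} \le \frac{\Ex{e^{tY}}}{e^{t(1+\delta)\mu}}.
\]
By independence of the $Y_i$ and the elementary bound $1+x \le e^x$, we have $\Ex{e^{tY}} = \prod_{i=1}^m \bigl(1 + p_i(e^t-1)\bigr) \le \prod_{i=1}^m e^{p_i(e^t-1)} = e^{\mu(e^t-1)}$, so the tail probability is at most $e^{\mu(e^t-1) - t(1+\delta)\mu}$. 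Optimizing the exponent over $t$ by setting $t = \ln(1+\delta)$ yields the classical form $\left(\frac{e^{\delta}}{(1+\delta)^{1+\delta}}\right)^{\mu}$.

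Next I would do the lower tail $\Pr{Y \le (1-\delta)\mu}$ by the same argument with $t<0$ (equivalently, applying Markov to $e^{-tY}$), obtaining the bound $\left(\frac{e^{-\delta}}{(1-\delta)^{1-\delta}}\right)^{\mu}$. It then remains to show that both of these quantities are at most $e^{-\mu\delta^2/3}$ for $0<\delta<1$. Taking logarithms, this reduces to the two one-variable calculus inequalities
\[
(1+\delta)\ln(1+\delta) - \delta \ \ge\ \frac{\delta^2}{3}
\qquad\text{and}\qquad
(1-\delta)\ln(1-\delta) + \delta \ \ge\ \frac{\delta^2}{2} \ \ge\ \frac{\delta^2}{3},
\]
each of which I would verify by checking that the difference of the two sides vanishes at $\delta=0$ together with a sign analysis of the derivative on $(0,1)$ (using the Taylor expansions $\ln(1+\delta) = \delta - \delta^2/2 + \delta^3/3 - \cdots$). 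Finally, a union bound over the two one-sided events gives $\Pr{|Y-\mu| \ge \delta\mu} \le 2e^{-\mu\delta^2/3}$, as claimed.

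The routine part is the exponential-moment computation, which is completely mechanical. The only genuinely delicate step is the pair of transcendental inequalities relating $(1\pm\delta)\ln(1\pm\delta)$ to $\delta^2/3$ on the interval $(0,1)$; this is where the specific constant $3$ (rather than the sharper $2$ available for the lower tail alone) comes from, and it is the one place where care with the Taylor remainder is needed. Since the statement is the textbook Chernoff bound cited to~\cite{Cher}, I would present this argument compactly and treat the calculus estimates as standard.
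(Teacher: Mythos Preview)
Your proposal is correct and is the standard textbook argument. Note, however, that the paper does not actually prove this lemma: it is stated in the appendix as a cited concentration inequality (the reference~\cite{Cher}) and used as a black box throughout, so there is no ``paper's own proof'' to compare against. Your write-up would serve perfectly well as a self-contained justification if one were desired.
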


\begin{lemma}[Markov's Inequality]
\label{lem:markov}
Let $X$ be a random variable that assumes only nonnegative values. Then, for all $a>0$, 
\[
  \Pr{X\geq a }\leq \frac{\Ex{X}}{a}
\]
\end{lemma}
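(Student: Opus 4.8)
The statement is the standard Markov inequality, included here only for completeness, so the proof is short. The plan is to dominate $X$ pointwise by a scaled indicator of the tail event $\{X \ge a\}$ and then take expectations.

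First I would introduce the indicator random variable $Z = \mathbf{1}[X \ge a]$, which takes the value $1$ on the event $\{X \ge a\}$ and $0$ otherwise; since it is $\{0,1\}$-valued, $\Ex{Z} = \Pr{X \ge a}$ by definition. Next I would verify the deterministic inequality $a\cdot Z \le X$: on the event $\{X \ge a\}$ we have $aZ = a \le X$, while on its complement we have $aZ = 0 \le X$, where the last inequality uses the hypothesis that $X$ assumes only nonnegative values. Hence $aZ \le X$ holds surely (pointwise on the sample space).

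Finally I would apply monotonicity and linearity of expectation to obtain $a\cdot\Pr{X \ge a} = \Ex{aZ} \le \Ex{X}$, and then divide both sides by $a > 0$ to conclude $\Pr{X \ge a} \le \Ex{X}/a$. There is no real obstacle here; the only point requiring a moment's care is the pointwise comparison $aZ \le X$ on the event $\{X < a\}$, which is precisely where the nonnegativity of $X$ is used.
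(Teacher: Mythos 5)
Your proof is correct; it is the standard indicator-function derivation of Markov's inequality. The paper states this lemma in the appendix as a well-known concentration bound and does not give a proof, so there is nothing to compare against, but your argument is complete and the place where nonnegativity is invoked is identified precisely.
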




\begin{lemma}[Chebyshev's Inequality]
\label{lem:chebyshev} Let $X$ be a random variable. For any $a>0$, we have 
\[
  \Pr{|X-\Ex{X}|\geq a}\leq \frac{\Var{X}}{a^2}.
\]
\end{lemma}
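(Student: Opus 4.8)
The plan is to derive Chebyshev's inequality as a direct consequence of Markov's inequality (Lemma~\ref{lem:markov}), applied to a suitably chosen nonnegative random variable. First I would introduce the auxiliary random variable $Y = (X - \Ex{X})^2$. Since it is a square, $Y$ assumes only nonnegative values, so Markov's inequality is applicable to it. By the definition of variance we have $\Ex{Y} = \Ex{(X-\Ex{X})^2} = \Var{X}$, so $Y$ already packages exactly the quantity appearing on the right-hand side of the claimed bound.

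The second step is to rewrite the event of interest in terms of $Y$. Because $a > 0$, squaring is a monotone bijection on the nonnegative reals, and hence the events $\{|X - \Ex{X}| \geq a\}$ and $\{(X-\Ex{X})^2 \geq a^2\} = \{Y \geq a^2\}$ coincide; in particular they have the same probability. This equivalence is the only place where the hypothesis $a>0$ is used, and it is what lets us convert a two-sided deviation bound for $X$ into a one-sided tail bound for the nonnegative variable $Y$.

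Finally I would invoke Lemma~\ref{lem:markov} with the nonnegative random variable $Y$ and threshold $a^2 > 0$, obtaining
\[
  \Pr{|X - \Ex{X}| \geq a} = \Pr{Y \geq a^2} \leq \frac{\Ex{Y}}{a^2} = \frac{\Var{X}}{a^2},
\]
which is exactly the statement of the lemma. There is no real obstacle here: the argument is a textbook two-line reduction, and the only points requiring any care are verifying that $Y$ is genuinely nonnegative (so that Markov's inequality applies) and that the event rewriting is a true equality of events rather than merely an inclusion, which holds precisely because $a$ is strictly positive.
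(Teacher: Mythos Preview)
Your proof is correct and is the standard textbook derivation of Chebyshev's inequality from Markov's inequality. The paper itself does not supply a proof of Lemma~\ref{lem:chebyshev}; it merely states it (alongside the Chernoff bound and Markov's inequality) in the appendix as a standard concentration bound, so there is nothing to compare against beyond noting that your argument is exactly the classical one.
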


\end{document}